\newtheorem{theorem}{Theorem}
\newtheorem{definition}{Definition}
\newtheorem{example}{Example}
\newtheorem{remark}{Remark}
\newtheorem{assumption}{Assumption}
\newtheorem{algorithm}{Algorithm}
\newcommand{\ind}{\perp \!\!\!\perp}
\newenvironment{proof}[1][Proof]{\noindent\textbf{#1.} }{\ \rule{0.5em}{0.5em}}
\begin{document}
 
\title{Exponentially weighted estimands and the exponential family: \\
Filtering, prediction and smoothing}

\author{Simon Donker van Heel\textsuperscript{b,c} and Neil Shephard\textsuperscript{a}}

\maketitle

\begin{center}
  \small
  \textsuperscript{a}\textit{Dept.\ of Economics and Dept.\ of Statistics, Harvard University, Cambridge, MA 02138, USA}\\
  \textsuperscript{b}\textit{Econometric Institute, Erasmus University Rotterdam, Rotterdam, The Netherlands} \\
  \textsuperscript{c}\textit{Tinbergen Institute, The Netherlands}
\end{center}

\maketitle

\begin{abstract}
We propose using a discounted version of a convex combination of the log-likelihood with the corresponding expected log-likelihood such that when they are maximized they yield a filter, predictor and smoother for time series.  This paper then focuses on working out the implications of this in the case of the canonical exponential family.  The results are simple exact filters, predictors and smoothers with linear recursions. A theory for these models is developed and the models are illustrated on simulated and real data.  
\end{abstract}
 
\noindent Keywords: Exponential family; EWMA; Filtering; Likelihood; Time Series.
 
\baselineskip=20pt

\section{Introduction}

We provide a simple way to carry out filtering, predicting and smoothing.  It is based on a discounted version of a convex combination of a log-likelihood and the expected log-likelihood function. The discounting is carried out using an exponential function.

We focus on log-likelihoods based around the canonical exponential family.  It has a similar structure to generalized linear regression models, but now for time series.  Properties of the resulting procedures are established and illustrated by simulation and empirical work.    

Section \ref{sect:likelihood} develops the new filter, predictor and smoother.     Section \ref{sect:CEF} looks at the canonical exponential family case.   These are found by a convex numerical optimization.  

Section \ref{sect:specialcase} looks at a flexible special case of the canonical exponential family which results in analytic expressions for the filters, predictors and smoothers.  They do not need any form of iteration or approximation. This special case covers a vast number of familiar models, e.g. based on distributions such as, for example, the Gaussian, multinomial, Poisson, Pareto, exponential, von Mises and Dirichlet.

Section \ref{sect:quasi} develops a quasi-likelihood based estimation strategy for the hyperparameters of this structure.  This method is extended to allow for a two-step estimator, where an easy to use moment estimator reduces the dimension of the required numerical maximization.  

Section \ref{Sec: Empirical example} provides an empirical illustration based on a seven category Dirichlet model.  
Section \ref{sect:conc} details some conclusions.   

\section{Principle of weighting}\label{sect:likelihood}

Think of time series data $y_{1:T} = \{y_1,...,y_T\}$ and a time series model for the corresponding random variables $Y_{1:T} = \{Y_1,...,Y_T\}$.  Here $T$ is the length of the time series.   The properties of the time series model $Y_{1:T}$ are determined by the abstract probability triple $(\Omega,\mathcal{F},P)$.

\begin{definition} For the random variable $Y_t$ write a frame
$$
\log L_t(\theta; y_{t}), \quad \theta \in \Theta,y_t \in \mathcal{Y}_t,\quad t=1,...,T,
$$
for the specification of the filtering, smoothing and prediction estimands.  The frame is called stable if it is of the form $\log L(\theta;y_t)$, i.e. not varying over time.   
\end{definition}

At no point do we regard this frame as true.  It will index a class of filters, smoothers and predictors. The frame may depend on additional static parameters $\phi$, which we suppress notationally for clarity until Section \ref{sect:quasi}. The form of the frame can vary over time, e.g. it could be based around the random variable $Y_t \sim {\tt Multinomial}(n_t,p)$, where $p$ is determined by $\theta$ and $n_t$ is regarded as non-stochastic (e.g. we have formed the probability mass function for $Y_t$ by conditioning on $n_t$) or $Y_t \sim {\tt Pois}(e^{\theta^{\tt T} x_t})$ where $x_t$ is a non-stochastic vector (or conditioned on).  

At various points in the paper some assumptions are made about the form of the frame.\footnote{We follow the statistical convention of taking the log-likelihood as ignoring terms which only involve data not also hyperparameters.  One way of implementing this, for a model with a probability density function, for example, is to take $\log L_t(\theta;y_t) := \log f_t(y_t;\theta) - \log f_t(y_t;\theta_0)$ where $\theta_0 \in \Theta$ is some arbitrary fixed value as $\theta\in \Theta$ varies. As we need the expected log-likelihood to exist, removing these data terms lessens the requirements we need to prove the expected value exists.} Here we list some of them.   

\begin{assumption}\label{assum:start1} The $\mathbb{E}[\log L_t(\theta;Y_t)]$ exist for each $t$ and $\theta\in \Theta$.   
\end{assumption}

The expectation is under the probability measure for $Y_t$, not the probability model that would rationalize the frame.    

\begin{assumption}\label{assum:concave-coercive}
The parameter space is $\Theta = \mathbb{R}^d$. 
For each $t$ and all $y \in \mathcal{Y}_t$, the $\log L_t(\theta;y)$ is concave 
and upper semicontinuous in $\theta$, taking values in $\mathbb{R} \cup \{-\infty\}$. 
The $\mathbb{E}[\log L_t(\theta;Y_t)]$ is finite for all 
$\theta \in \mathbb{R}^d$, strictly concave in $\theta$, and satisfies the 
coercivity condition that  $\|\theta\|\to\infty$ implies that $
\mathbb{E}[\log L_t(\theta;Y_t)] \to -\infty.$
\end{assumption}

The paper will be based around an exponential version of a weighting  principle.\footnote{The extension to allow a prior density $\pi(\theta)$ for $\theta \in \Theta$, yielding an  exponentially weighted MAP (i.e. mode of a type of posterior), means just adding $\log \pi(\theta)$ to our criterion function and then maximizing that with respect to $\theta$ for each $t$.  This is particularly important if $\theta$ is high dimensional.} 

The principle of weighting delivers three sequences of estimands through time: a filter, a predictor and a smoother.   Notice that the estimands are random variables.  

\begin{definition}[Exponentially weighted estimands]\label{defn:Q} Let Assumptions \ref{assum:start1}--\ref{assum:concave-coercive} hold, $\lambda \in [0,1]$ be a discount parameter and $\alpha \in [0,1]$ an anchoring parameter.  The exponentially weighted filter at time $t$, using data up to time $t$ and the sequence of frames $\{\log L_t(\theta;y_t)\}_{t=1}^T$, is
\begin{equation*}
\widetilde{\theta }_{t}=\underset{\theta \in \Theta}{\arg }\max \ Q_{\lambda,t}(\theta),
\end{equation*}%
where 
$$
Q_{\lambda,t}(\theta) = \sum_{j=1}^{t} \lambda^{t-j} \left\{(1-\alpha) \mathbb{E}[\log L_j(\theta; Y_j)] + \alpha \log L_j(\theta; Y_{j})\right\}.
$$
The time $t$ exponentially weighted predictor based on data up to time $(t-s)$, for $s\ge 1$, is 
\begin{equation*}
\widetilde{\theta }_{t|t-s}=\underset{\theta \in \Theta}{\arg }\max \ Q_{\lambda,t|t-s}(\theta),
\end{equation*}
where 
$$
Q_{\lambda, t \mid t-s}(\theta)=(1-\alpha) \sum_{j=1}^t \lambda^{t-j} \mathbb{E}\left[\log L_j\left(\theta ; Y_j\right)\right]+\alpha \sum_{j=1}^{t-s} \lambda^{t-j} \log L_j\left(\theta ; Y_j\right).
$$
The exponentially weighted smoother at time $t$, using data up to time $T\ge t$, is 
$$
\widetilde{\theta}_{t|T} = \underset{\theta \in \Theta}{\arg }\max \ Q_{\lambda,t|T}(\theta),
$$
where 
$$
Q_{\lambda,t|T}(\theta) = \sum_{j=1}^{T} \lambda^{|t-j|} \left\{(1-\alpha) \mathbb{E}[\log L_j(\theta; Y_j)] + \alpha \log L_j(\theta; Y_{j})\right\}.
$$
\end{definition}

Note that Definition \ref{defn:Q}, Assumptions \ref{assum:start1}--\ref{assum:concave-coercive} and $\alpha \in [0,1)$ imply that $Q_{\lambda,t}(\theta)$, $Q_{\lambda,t|t-s}(\theta)$ and $Q_{\lambda,t|T}(\theta)$ admit  unique maximizers $\widetilde{\theta }_{t}$, $\widetilde{\theta }_{t|t-s}$ and $\widetilde{\theta }_{t|T}$, respectively.

As the likelihood function is invariant to reparameterization, like the maximum likelihood estimator (MLE), the exponentially weighted filter, predictor and smoother are invariant under one-to-one reparameterization, i.e. if $\gamma = \gamma(\theta)$, then $\widetilde{\gamma}_t = \gamma(\widetilde{\theta}_t).$ This will turn out to be a very powerful property in practice. 

Intuitively, $\lambda \in [0,1]$ controls the rate at which older information is  downweighted, while $\alpha \in [0,1]$ controls the weight placed on the anchor $\mathbb{E}[\log L_j(\theta; Y_j)]$ terms.  

\begin{remark}\label{remark 1} (a) When $\alpha=1$ the resulting exponentially weighted maximum likelihood estimator often appears under the label of local modeling or local regression, e.g. Chapter 8 of \cite{FanYao(05)} and \cite{FanHeckmanWand(95)}.  There the localization is typically carried out using a kernel weighting function; our use of the exponential function is a special case of this. The Gaussian likelihood for an unknown mean yields the celebrated {\tt EWMA} of \cite{Brown(56)}. 

(b) Weighting likelihood contributions has appeared in applied contexts, e.g.  \cite{dixon1997modelling} use exponentially weighted likelihoods over time to model football match outcomes, while \cite{hu2002weighted} and \cite{blasques2016weighted} 
weight over the cross-section.

(c) \cite{luxenberg2024exponentially} look at exponentially weighted moving losses, replacing the likelihood with a loss function.  Their work has some overlap with our approach, but focuses on penalization (which we do not), and does not consider the $\mathbb{E}[\log L_j(\theta,Y_j)]$ terms.  

(d) When $\alpha<1$ we do not know of a literature which uses this approach, including the works cited in parts (a)--(c).  Of course, broadly, the expected log-likelihood can be thought as a form of shrinkage.  Hence it relates to the vast literatures on empirical Bayes, ridge and Lasso regression, etc (e.g. \cite{EfronMorris(77)}, \cite{Efron(12)}, \cite{HoerlKennard(70a)}, \cite{Tibshirani(96)}).     

(e) There is a large general Bayesian literature which raises the likelihood to a power less than one to obtain some ``robustness'' properties for the posterior.  Some of the corresponding literature is discussed in \cite{HolmesWalker(17)}. 

(f) When $\alpha$ is close to one, the estimands are mostly determined by past observations; when $\alpha$ is close to zero, the estimands are predominantly determined by the expected log-likelihoods $\mathbb{E}[\log L_j(\theta,Y_j)]$. The discount parameter $\lambda \in [0,1]$ determines how quickly past observations are downweighted, with larger values giving more weight to distant observations. 

(g) In applications with naturally constrained parameters (e.g.\ $\sigma>0$), we work with a smooth one-to-one reparametrization such as a 
log or logit transform so that the resulting parameter vector $\theta$ lives in 
$\mathbb{R}^d$. Since the exponentially weighted estimands are 
reparametrization invariant, this involves no loss of generality and allows us 
to take $\Theta=\mathbb{R}^d$. Together with the upper semicontinuity of the log-frame and the coercivity and strict concavity of the expected 
log-frame, this guarantees that all $Q_{\lambda,t}$, $Q_{\lambda,t|t-s}$ and $Q_{\lambda,t|T}$ are 
coercive, concave and upper semicontinuous, and hence that their maximizers 
exist and are unique.

(h) Stochastic approximation methods can be used to approximate the solutions 
to the optimization problems in Definition~\ref{defn:Q} 
(e.g.\ \cite{RobbinsMunro(51)}, 
\cite{toulis2015scalable,toulis2017asymptotic}, 
\cite{toulis2016towards}, \cite{toulis2021proximal}).

\end{remark}

\section{Canonical exponential family case}\label{sect:CEF}

The rest of this paper will focus entirely on the case where the frame is based on a member of the canonical exponential family.  This family includes most of the important probability models used in applied statistics.  In our context, crucially, it allows either an analytic solution or an easy to compute solution for each $t$.   

\subsection{Canonical exponential family}

Start by recalling the definition of the canonical exponential family and three of its properties.  

\begin{definition}[Canonical exponential family]\label{ex:EF} Assume a random variable $Y\sim {\tt CEF}(\theta,h,\psi)$, meaning that its  probability density function (or probability mass function for a discrete random variable) is a member of the minimal canonical exponential family: 
$$f(y;\theta) = \exp\{\theta^{\tt T} h(y) - \psi(\theta)\}/b(y),\quad \theta\in \Theta=\{\theta:\psi(\theta)<\infty\}, \quad y \in \mathcal{Y},$$ where $\psi(\theta)$ is  infinitely continuously differentiable and strictly convex with respect to $\theta$ while $b(y)>0$ for all $y \in \mathcal{Y}$, (e.g. \cite{McCullaghNelder(89)} and \cite{Efron(22)}). 
\end{definition}

The cumulant function of the {\tt CEF} is $\log \mathbb{E}[e^{a^{\tt T}h(Y)};\theta] = \psi(\theta+a)-\psi(\theta)$, so   
\begin{align*}
\psi'(\theta)&=\frac{\partial \psi(\theta)}{\partial \theta} 
 = \mathbb{E}[h(Y);\theta] 
 = \mu,
\end{align*} 
the finite expected value of $h(Y)$ under the ${\tt CEF}(\theta,h,\psi)$ model and 
\begin{align*}
\psi''(\theta)&=\frac{\partial^2 \psi(\theta)}{\partial \theta \partial \theta^{\tt T}} ={\mathrm V}(h(Y);\theta),
\end{align*}
the corresponding finite variance. The minimal {\tt CEF} requires $\psi(\theta)$ to be strictly convex so ${\mathrm V}(h(Y);\theta)$ is positive definite under the model and Assumption~\ref{assum:concave-coercive} is satisfied. It is helpful to go from $\mu = \mathbb{E}[h(Y)]$ to $\theta$ through the inverse function
$
\psi'^{-1}(\mu) = \theta.
$
This inverse has a unique solution (as $\psi$ is strictly convex), but in some cases it has to be solved numerically; a classic task in statistics as this is isomorphic to computing the MLE of $\theta$ in the {\tt CEF}.  Famously, it can be carried out reliably and rapidly using a Newton-Raphson algorithm, which converges quadratically close to the solution.  We state this formally as Algorithm \ref{alg:NR} for reference later.  
\begin{algorithm}[Solving $\theta = \psi'^{-1}(\mu)$]\label{alg:NR}  Starting from some point $\theta^{(0)}$, iterate
$$
\theta^{(i)} = \theta^{(i-1)} - \left\{\psi''(\theta^{(i-1)})\right\}^{-1}\{\psi'(\theta^{(i-1)}) - \mu\},
$$
for $i=1,2,...$, until convergence, which is the solution $\theta$.  Recall here $\mu = \mathbb{E}[h(Y)]$.
\end{algorithm}

There is a substantial literature on time series models based around the exponential family.  Examples include \cite{ZegerQaqish(88)}, \cite{Li(94)}, \cite{BenjaminRigbyStasinopoulos(03)} and the review by \cite{Davis03072021}.  The dynamic conditional score (DCS) filter and the generalized autoregressive score (GAS) filter  of \cite{Harvey(13)} and \cite{CrealKoopmanLucas(13)} can be applied in this context to deliver recursive filters.  Our predictive version is closest to the generalized {\tt ARMA} work of \cite{BenjaminRigbyStasinopoulos(03)}, which is built around the exponential family and generalized linear models.  

\subsection{Filtering, prediction and smoothing}

In what follows throughout we take the frame $\log L_t(\theta;y) = \log f_t(y;\theta)$ as coming from ${\tt CEF}(\theta,h_t,\psi_t)$. Then, for each $t$, (ignoring the implied $b_t(y)$ term as it has no impact on $\theta$ and dropping it means we do not need to make an assumption that $\mathbb{E}[\log b_t(Y_t)]$ exists)  
$$
(1-\alpha) \mathbb{E}[\log L_t(\theta;Y_t)] + \alpha \log L_t(\theta;Y_t) = \theta^{\tt T} \{(1-\alpha)\mathbb{E}[h_t(Y_t)] + \alpha h_t(Y_t)\} - \psi_t(\theta).
$$

In the case where the data is assumed to be strictly stationary, the frame is stable and $\mathbb{E}[h(Y_t)]$ exists, then the right hand side of this expression simplifies to 
$$\theta^{\tt T} \{(1-\alpha)\mathbb{E}[h(Y_1)] + \alpha h(Y_t)\} - \psi(\theta).$$  

To compactly write the exponentially weighted filter it is helpful to denote the exponentially weighted moving sums
$$
x_{\lambda,t} = \sum_{j=1}^t \lambda^{t-j}\mathbb{E}[h_j(Y_j)], \quad
h_{\lambda,t} = \sum_{j=1}^t \lambda^{t-j}h_j(Y_j),\\
$$
as well as the double sided exponentially weighted moving sums
$$
x_{\lambda,t|T} = \sum_{j=1}^T \lambda^{|t-j|}\mathbb{E}[h_j(Y_j)], \quad 
h_{\lambda,t|T} = \sum_{j=1}^T \lambda^{|t-j|}h_j(Y_j).
$$
Each of these terms can be computed recursively, implying each entire series, e.g. $h_{\lambda,1|T},...,h_{\lambda,T|T}$, can be computed in $O(\sum_{t=1}^T \dim(h_t(y_t)))$ computations.     

\begin{theorem}\label{thrm1}  Assume a minimal {\tt CEF} frame ${\tt CEF}(\theta,h_t,\psi_t)$ for each $t$.  Then the exponentially weighted filter ($\widetilde{\theta}_{t}$), predictor ($\widetilde{\theta}_{t|t-s}$ with $s \ge 1$) and smoother ($\widetilde{\theta}_{t|T}$) solve: 
$$
x_{\lambda,t} + \alpha (h_{\lambda,t}-x_{\lambda,t}) = \sum_{j=1}^t \lambda^{t-j} \psi_j'(\widetilde{\theta}_{t}),
$$
$$
(1-\alpha) x_{\lambda, t}+\alpha \lambda^s h_{\lambda, t-s}=(1-\alpha) \sum_{j=1}^t \lambda^{t-j} \psi_j^{\prime}\left(\widetilde{\theta}_{t \mid t-s}\right)+\alpha \sum_{j=1}^{t-s} \lambda^{t-j} \psi_j^{\prime}\left(\widetilde{\theta}_{t \mid t-s}\right),
$$
$$
x_{\lambda,t|T} + \alpha (h_{\lambda,t|T}-x_{\lambda,t|T}) = \sum_{j=1}^T \lambda^{|t-j|} \psi_j'(\widetilde{\theta}_{t|T}),
$$
respectively.  Then write the filtered, predicted and smoothed mean and variance as 
$$
\widetilde{\mu}_{t} := \psi_t'(\widetilde{\theta}_{t}),\quad 
\widetilde{\mu}_{t|t-s} := \psi_t'(\widetilde{\theta}_{t|t-s}),\quad 
\widetilde{\mu}_{t|T} := \psi_t'(\widetilde{\theta}_{t|T}),\quad 
$$
and
$$
\widetilde{\Sigma}_t := \psi_t''(\widetilde{\theta}_t),
\quad 
\widetilde{\Sigma}_{t|t-s} := \psi_t''(\widetilde{\theta}_{t|t-s}),
\quad 
\widetilde{\Sigma}_{t|T} := \psi_t''(\widetilde{\theta}_{t|T}).
$$   

\end{theorem}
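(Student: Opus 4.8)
The plan is to substitute the canonical exponential family form of the frame into each of the three objective functions, observe that this turns each into a smooth concave function of $\theta$ whose maximizer is characterized by a vanishing gradient, and then read off the first-order conditions, rewriting them in terms of the (double-sided) exponentially weighted moving sums $x_{\lambda,t},h_{\lambda,t},x_{\lambda,t|T},h_{\lambda,t|T}$.

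First I would use the identity displayed just before the theorem: ignoring the $b_j(\cdot)$ terms (which do not involve $\theta$), for every $j$ one has
$$
(1-\alpha)\mathbb{E}[\log L_j(\theta;Y_j)] + \alpha\log L_j(\theta;Y_j) = \theta^{\tt T}\{(1-\alpha)\mathbb{E}[h_j(Y_j)] + \alpha h_j(Y_j)\} - \psi_j(\theta).
$$
Plugging this into $Q_{\lambda,t}$, $Q_{\lambda,t|t-s}$ and $Q_{\lambda,t|T}$ expresses each objective as a sum of terms affine in $\theta$ minus a nonnegatively weighted sum of the $\psi_j$. Since each $\psi_j$ is $C^\infty$ and strictly convex on the natural parameter space, each objective is concave and infinitely differentiable; together with Assumptions~\ref{assum:start1}--\ref{assum:concave-coercive}, the discussion following Definition~\ref{defn:Q}, and the reparametrization of Remark~\ref{remark 1}(g) that lets us take $\Theta=\mathbb{R}^d$, this gives a unique maximizer lying in the interior of the parameter space, so that each maximizer is characterized by the first-order condition that its gradient vanishes.

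Next I would compute the gradients and set them to zero. Using $\partial\psi_j(\theta)/\partial\theta = \psi_j'(\theta)$ — and noting that the vectors $\mathbb{E}[h_j(Y_j)]$ are fixed constants (expectations under $P$), not the model means $\psi_j'(\theta)$ — the first-order condition for $\widetilde{\theta}_t$ is
$$
\sum_{j=1}^t \lambda^{t-j}\left\{(1-\alpha)\mathbb{E}[h_j(Y_j)] + \alpha h_j(Y_j) - \psi_j'(\widetilde{\theta}_t)\right\} = 0.
$$
Separating the three sums and recognizing $\sum_{j=1}^t\lambda^{t-j}\mathbb{E}[h_j(Y_j)] = x_{\lambda,t}$ and $\sum_{j=1}^t\lambda^{t-j}h_j(Y_j) = h_{\lambda,t}$ gives $(1-\alpha)x_{\lambda,t} + \alpha h_{\lambda,t} = \sum_{j=1}^t\lambda^{t-j}\psi_j'(\widetilde{\theta}_t)$, which is the stated equation once one writes $(1-\alpha)x_{\lambda,t} + \alpha h_{\lambda,t} = x_{\lambda,t} + \alpha(h_{\lambda,t} - x_{\lambda,t})$.

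The predictor and smoother are the same computation. For $\widetilde{\theta}_{t|t-s}$ the gradient equals $(1-\alpha)\sum_{j=1}^t\lambda^{t-j}\{\mathbb{E}[h_j(Y_j)] - \psi_j'(\theta)\} + \alpha\sum_{j=1}^{t-s}\lambda^{t-j}\{h_j(Y_j) - \psi_j'(\theta)\}$; equating it to zero and using the index shift $\sum_{j=1}^{t-s}\lambda^{t-j}h_j(Y_j) = \lambda^s\sum_{j=1}^{t-s}\lambda^{(t-s)-j}h_j(Y_j) = \lambda^s h_{\lambda,t-s}$ yields the second displayed equation. For $\widetilde{\theta}_{t|T}$ one repeats the filter argument verbatim with $\lambda^{t-j}$ replaced by $\lambda^{|t-j|}$ and the summation range extended to $j=1,\dots,T$, so that $x_{\lambda,t}$ and $h_{\lambda,t}$ become $x_{\lambda,t|T}$ and $h_{\lambda,t|T}$. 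The identities defining the filtered, predicted and smoothed means and variances are then just notation, with $\psi_t''$ positive definite because the minimal family has strictly convex $\psi_t$. I expect the only genuinely delicate point to be the justification that the unconstrained stationary point is the global maximizer and sits in the interior of the parameter space — i.e.\ the concavity, coercivity and $\Theta=\mathbb{R}^d$ bookkeeping imported from the earlier setup — since everything afterwards is a routine gradient evaluation and an index relabelling.
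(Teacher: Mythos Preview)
Your proposal is correct and follows exactly the same approach as the paper's own proof: substitute the ${\tt CEF}$ form of the frame into each $Q$, differentiate the resulting concave function, and set the gradient to zero. The paper's appendix proof is terser --- it writes out $Q_{\lambda,t}(\theta)$ explicitly and then just says ``differentiate and solve'', with ``the same argument'' for the predictor and smoother --- so your version simply supplies the bookkeeping (the index shift $\sum_{j=1}^{t-s}\lambda^{t-j}h_j(Y_j)=\lambda^s h_{\lambda,t-s}$ and the concavity/coercivity justification for using first-order conditions) that the paper leaves implicit.
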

\begin{proof} Focus on the filtering case.  
Now 
$$
Q_{\lambda,t}(\theta) = \theta^{\tt T} \{(1-\alpha)x_{\lambda,t} +\alpha h_{\lambda,t}\} - \sum_{j=1}^t \lambda^{t-j} \psi_j(\theta).
$$
Differentiate $Q_{\lambda,t}(\theta)$ with respect to $\theta$ and solve.  This yields the stated result. 
The same argument applies to prediction and smoothing.  
The remaining results follow by invariance. \end{proof}

All the left hand side terms (e.g. $x_{\lambda,t}$ and $h_{\lambda,t}$) can be computed recursively.  The right hand side, not so much, as they depend upon the value of $\theta$ that the $\psi_j$ function is evaluated at.  Typically $\widetilde{\theta}_{t}$, $\widetilde{\theta}_{t|t-s}$ and $\widetilde{\theta}_{t|T}$ have to be found by numerical root solving.  
Due to the strict convexity of $\psi$ this is numerically straightforward, for each individual value of $t$, using Newton-Raphson, e.g. Algorithm \ref{alg:NRfilter} computes $\widetilde{\theta}_t$, the exponentially weighted filter.  
\begin{algorithm}[Computing $\widetilde{\theta}_t$]\label{alg:NRfilter} Starting from some point $\widetilde{\theta}_t^{(0)}$, iterate
$$
\widetilde{\theta}_t^{(i)} = \widetilde{\theta}_t^{(i-1)} - \left\{\sum_{j=1}^t \lambda^{t-j} \psi_j''(\widetilde{\theta}_t^{(i-1)})\right\}^{-1}\left\{\sum_{j=1}^t \lambda^{t-j} \psi_j'(\widetilde{\theta}_t^{(i-1)}) - \{x_{\lambda,t} + \alpha (h_{\lambda,t}-x_{\lambda,t})\}\right\},
$$
for $i=1,2,...$, until convergence, which is the solution $\widetilde{\theta}_t$.  
\end{algorithm}  
A downside is that the numerical procedure has to be run separately for each $t$, so computing, for example, the time series $\widetilde{\theta}_{1},...,\widetilde{\theta}_{T}$ costs $O(\{\sum_{t=1}^T \dim(h_t(y_t))^3\}^2)$ calculations.  

\subsection{A ${\tt CEF}(\theta,h_t,\psi_t)$ data generating process}

We will use the following structure in our simulations later.  

\begin{assumption}[Model based data generating process]\label{def:DGP4} (a) Use a frame based on ${\tt CEF}(\theta,h_t,\psi_t)$, computing the sequence $\widetilde{\theta }_{t|t-1}$ for $t=1,2,...,T$, recursively through Definition \ref{defn:Q}.  (b) Generate the data as  
$$
Y_t | Y_{1:t-1} \sim {\tt CEF}(\widetilde{\theta }_{t|t-1},h_t,\psi_t),\quad t=1,2,...,T.
$$
The stable frame version of this is where we set the frame to have $h_t(y)=h(y)$ and $\psi_t=\psi$ for $t=1,...,T$. 
\end{assumption}

The use of the frame to derive a filter, which is then used as an input into a data generating process echos the DCS/GAS models of \cite{Harvey(13)} and \cite{CrealKoopmanLucas(13)}.

Some examples of these types of simulations will be given shortly.  

\section{Special case with analytic solution: $\psi_t=n_t \psi$}\label{sect:specialcase}

The following special case has an analytic solution. We focus on it in the rest of this paper.  

\begin{example}\label{thm:MLE} Use a minimal ${\tt CEF}(\theta,h_t,n_t\psi)$ frame for each $t$ where $n_t$ is a non-stochastic scalar.  Then the exponentially weighted filter ($\widetilde{\theta}_{t}$), predictor ($\widetilde{\theta}_{t|t-s}$ with $s \ge 1$) and smoother ($\widetilde{\theta}_{t|T}$) are:  
$$
\widetilde{\theta}_t =  \psi'^{-1}\left(\bar{m}_{t}\right),\quad 
\widetilde{\theta}_{t|t-s} =  \psi'^{-1}\left(\bar{m}_{t|t-s}\right), \quad\widetilde{\theta}_{t|T} =  \psi'^{-1}\left(\bar{m}_{t|T}\right),
$$
respectively, where 
$$
\bar{m}_{t}= \frac{m_{\lambda,t}}{n_{\lambda,t}},\quad
\bar{m}_{t|t-s}= \frac{m_{\lambda,t|t-s}}{n_{\lambda,t|t-s}},\quad
\bar{m}_{t|T}= \frac{m_{\lambda,t|T}}{n_{\lambda,t|T}},\quad
$$
with 
\begin{align*}
m_{\lambda,t} &= (1 - \alpha) x_{\lambda,t} + \alpha h_{\lambda,t},
& n_{\lambda,t} &= \sum_{j=1}^t \lambda^{t-j} n_j, \\
m_{\lambda, t \mid t-s} & =(1-\alpha) x_{\lambda, t}+\alpha \lambda^s h_{\lambda, t-s} , &
n_{\lambda, t \mid t-s} & =(1-\alpha) n_{\lambda, t}+\alpha \lambda^s n_{\lambda, t-s}, \\ 
m_{\lambda,t|T} &= (1 - \alpha) x_{\lambda,t|T} + \alpha h_{\lambda,t|T}, &
n_{\lambda,t|T} &= \sum_{j=1}^T \lambda^{|t-j|} n_j.
\end{align*}
Then 
$$
\widetilde{\mu}_t = n_t \bar{m}_t,
\quad 
\widetilde{\mu}_{t|t-s} = n_t \bar{m}_{t|t-s},
\quad 
\widetilde{\mu}_{t|T} = n_t \bar{m}_{t|T},
$$
and 
$$
\widetilde{\Sigma}_t = n_t \psi''(\widetilde{\theta}_t),
\quad 
\widetilde{\Sigma}_{t|t-s} = n_t \psi''(\widetilde{\theta}_{t|t-s}),
\quad 
\widetilde{\Sigma}_{t|T} = n_t \psi''(\widetilde{\theta}_{t|T}).
$$    
\end{example}

This Example covers many interesting models.

\begin{remark}\label{remark:SS CEF} (a) In the special case of a stable frame {\tt CEF} and strictly stationary process then 
$x_{\lambda,t} = n_{\lambda,t} \mathbb{E}[h(Y_1)]$ and $ x_{\lambda,t|T} = n_{\lambda,t|T} \mathbb{E}[h(Y_1)]$.

(b) For filtering, the $m_{\lambda,t}$ 
is a convex combination of {\tt EWMA}s of $\mathbb{E}[h(Y_1)],...,\mathbb{E}[h(Y_t)]$ and of $h(Y_1),...,h(Y_t)$.  For smoothing, the $m_{\lambda,t|T}$
is a convex combination of the double sided {\tt EWMA}s of $\mathbb{E}[h(Y_1)],...,\mathbb{E}[h(Y_T)]$ and of $h(Y_1),...,h(Y_T)$ for time $t$.  

(c) If $\lambda \in (0,1)$, and $n_t=1$ for all $t$, then $n_{\lambda,t} \rightarrow 1/(1-\lambda)$ as $t\rightarrow \infty.$ If $\lambda=1$, then $n_{\lambda,t}=t$, while if $\lambda=0$, then $n_{\lambda,t}=1$. Likewise $n_{\lambda,t|T} \rightarrow (1+\lambda)/(1-\lambda)$ as $t\rightarrow \infty.$ If $\lambda=1$, then $n_{\lambda,t|T}=T$, while if $\lambda=0$, then $n_{\lambda,t|T}=1$. 

(d) Famously the {\tt EWMA} can be computed recursively.  In our case it follows 
$$
n_{\lambda,t} = n_t + \lambda n_{\lambda,t-1}, \quad 
x_{\lambda,t} = \mathbb{E}[h(Y_t)] + \lambda x_{\lambda,t-1}, \quad 
h_{\lambda,t} = h(Y_t) + \lambda h_{\lambda,t-1},
$$
initialized at $n_{\lambda,0}, x_{\lambda,0}, h_{\lambda,0} :=0$. Likewise, the double sided {\tt EWMA} can be computed recursively
\begin{align*}
h_{\lambda,t|T} &= h_{\lambda,t} + \lambda(h_{\lambda,t+1|T}-\lambda h_{\lambda,t}),\quad h_{\lambda,T|T}=h_{\lambda,T} \\
n_{\lambda,t|T} &= n_{\lambda,t} + \lambda(n_{\lambda,t+1|T}-\lambda n_{\lambda,t}),\quad n_{\lambda,T|T} = n_{\lambda,T},
\end{align*} 
going backwards, using the output from the forward pass of the {\tt EWMA}.

(e) The $m_{\lambda,t}$ and $m_{\lambda,t|t-1}$ can also be written recursively:
\begin{align*}
m_{\lambda,t} &= \mathbb{E}[h_t(Y_t)] + \alpha \{h_t(Y_{t})-\mathbb{E}[h_{t}(Y_{t})]\} + 
\lambda m_{\lambda,t-1}, \\
m_{\lambda,t|t-1} &= (1-\alpha) \mathbb{E}[h_t(Y_t)] + \alpha \lambda h_{t-1}(Y_{t-1}) + 
\lambda m_{\lambda,t-1|t-2}.
\end{align*}
In the steady state stable frame case with $n_t=1$, the $\bar{m}_t=\widetilde{\mu}_t$ and $\bar{m}_{t|t-1}=\widetilde{\mu}_{t|t-1}$ become
\begin{align*}
\widetilde{\mu}_{t} &= (1-\alpha)(1-\lambda)\mathbb{E}[h(Y_1)] + \alpha (1-\lambda)h(Y_{t}) + 
\lambda \widetilde{\mu}_{t-1}, \\
\widetilde{\mu}_{t|t-1}&= \frac{(1-\alpha)(1-\lambda)}{1-\alpha(1-\lambda)} \mathbb{E}[h(Y_1)] + \frac{\alpha \lambda(1-\lambda)}{1-\alpha(1-\lambda)}h(Y_{t-1})+ 
\lambda \widetilde{\mu}_{t-1|t-2}. 
\end{align*}
The $\widetilde{\mu}_{t|t-1}$ relates to the generalized {\tt ARMA} model of \cite{BenjaminRigbyStasinopoulos(03)} who model a link function of the data as being linear in past conditional means and link functions of the data.

Define $U_t := h(Y_t) - \widetilde{\mu}_{t|t-1}$, then 
\begin{align*}
h(Y_t) &= \widetilde{\mu}_{t|t-1} + U_t \\
&= \frac{(1-\alpha)(1-\lambda)}{1-\alpha(1-\lambda)} \mathbb{E}[h(Y_1)] + \frac{\alpha\lambda(1-\lambda)}{1-\alpha(1-\lambda)}h(Y_{t-1})+ 
\lambda \widetilde{\mu}_{t-1|t-2} + U_t \\
&= \frac{(1-\alpha)(1-\lambda)}{1-\alpha(1-\lambda)} \mathbb{E}[h(Y_1)] + \frac{\alpha\lambda(1-\lambda)}{1-\alpha(1-\lambda)}h(Y_{t-1})+ 
\lambda \{h(Y_{t-1}) - U_{t-1}\} + U_t \\
&= \frac{(1-\alpha)(1-\lambda)}{1-\alpha(1-\lambda)} \mathbb{E}[h(Y_1)] + \frac{\lambda}{1-\alpha(1-\lambda)}h(Y_{t-1}) 
 + U_t -\lambda U_{t-1}. 
\end{align*}

If the data has the property that $\mathbb{E}[h(Y_t)|Y_{1:t-1}]=\widetilde{\mu}_{t|t-1}$ and $\mathbb{E}[|Y_t|]<\infty$, then $\{U_t\}_{t=1}^T$ is a martingale difference (MD) sequence with respect to the filtration generated by the data.  The steady state process is a vector {\tt ARMA}(1,1)-MD process, with the autoregressive root being 
$\lambda/(1-\alpha(1-\lambda)) \in [\lambda,1)$, assuming $\lambda,\alpha \in [0,1)^2,$ with the $-\lambda$ moving average root.  
For example, if $\lambda=0.93$ then the {\tt AR}(1) root is roughly 0.978 and 0.996 when $\alpha =0.7$ and $\alpha=0.95$, respectively.  Hence the process can have substantial memory, although individual autocorrelations can be modest due to near root cancellation.  If additionally ${\mathrm V}(h(Y_t))$ exists and is time invariant, then in steady state $\{U_t\}$ is weak white noise (WN) and $\{h(Y_t)\}$ is a vector {\tt ARMA}(1,1)-WN covariance stationary process.   

(f) For prediction, $m_{\lambda,t|t-s}$ places non-negative weight $(1-\alpha)\lambda^{t-j}$ on each $\mathbb{E}[h_j(Y_j)]$ for $j=1,\ldots,t$, and non-negative weight $\alpha\lambda^{t-j}$ on each $h_j(Y_j)$ for $j=1,\ldots,t-s$. These weights sum to $n_{\lambda,t|t-s}$, so $\bar{m}_{\lambda,t|t-s} = m_{\lambda,t|t-s}/n_{\lambda,t|t-s}$ has non-negative weights summing to one; likewise for $\bar{m}_{\lambda,t|t}$ and $\bar{m}_{\lambda,t|T}$. Hence Jensen's inequality applies, e.g.\ for any convex $\varphi$,
$$
\varphi(\bar{m}_{\lambda,t|t-s}(h_{1:t})) \le \bar{m}_{\lambda,t|t-s}(\varphi(h_1),\ldots,\varphi(h_t)),
$$
where $\bar{m}_{\lambda,t|t-s}(h_{1:t})$ denotes the dependence on $h_1(Y_1),\ldots,h_t(Y_t)$. The right-hand side applies the linear operator $\bar{m}_{\lambda,t|t-s}$ to the component-wise transformed vector $(\varphi(h_1),\ldots,\varphi(h_t))$ with $\varphi$ being any convex function. A simple version of this is where $h(y)=y$ and $\varphi(h(y))=y^2$.

\end{remark}

\subsection{Simulating ten CEF examples: design}\label{sec:simulating CEF examples}

We illustrate the exponentially weighted predictor on ten {\tt CEF} examples. Given $\widetilde{\mu}_{t|t-1}$, each yields a unique solution for $\widetilde{\theta}_{t|t-1} = (\psi')^{-1}(\widetilde{\mu}_{t|t-1})$. In cases 1--6 and 9, this can be solved analytically; in cases 7, 8, and 10, numerically via Algorithm~\ref{alg:NR}. Tables~\ref{tab:CEF1}--~\ref{tab:CEF2} summarize the canonical parameterizations.  

\begin{table}[h!]
\caption{Canonical exponential family parameterizations: uni-state cases.}
\label{tab:CEF1}
\small
\begin{tabular*}{\textwidth}{@{\extracolsep{\fill}}lllll}
\toprule
Distribution & $h(y)$ & $\psi(\theta)$ & $\psi'(\theta)$ & $(\psi')^{-1}(\mu)$ \\
\midrule
1. Bernoulli, $y \in \{0,1\}$ & $y$ & $\log(1+e^\theta)$ & $\frac{e^\theta}{1+e^\theta}$ & $\log\frac{\mu}{1-\mu}$ \\[6pt]
2. Gaussian (known $\sigma$), $y \in \mathbb{R}$ & $y$ & $\frac{\theta^2}{2\sigma^2}$ & $\frac{\theta}{\sigma^2}$ & $\sigma^2\mu$ \\[6pt]
3. Poisson, $y \in \{0,1,2,\ldots\}$ & $y$ & $e^\theta$ & $e^\theta$ & $\log(\mu)$ \\[6pt]
4. Exponential, $y \in \mathbb{R}_{>0}$ & $y$ & $-\log(-\theta)$ & $-\frac{1}{\theta}$ & $-\frac{1}{\mu}$ \\[6pt]
5. Gaussian (zero mean), $y \in \mathbb{R}$ & $y^2$ & $-\frac{1}{2}\log(-2\theta)$ & $-\frac{1}{2\theta}$ & $-\frac{1}{2\mu}$ \\[6pt]
6. Pareto, $y \geq m > 0$ & $\log(y)$ & $-\log(-\theta)+\theta\log m$ & $-\frac{1}{\theta}+\log m$ & $-\frac{1}{\mu - \log m}$ \\
\bottomrule
\end{tabular*}\\[2pt]
{\footnotesize \emph{Note}: Columns show the components of the ${\tt CEF}(\theta,h,\psi)$ density from Definition~\ref{ex:EF}.}
\end{table}
 
\begin{table}[h!]
\caption{Canonical exponential family parameterizations: multi-state cases.}
\label{tab:CEF2}
\small
\renewcommand{\arraystretch}{1.8}
\begin{tabular*}{\textwidth}{@{\extracolsep{\fill}}lllll}
\toprule
\noalign{\vspace{-9pt}}
Distribution & $h(y)$ & $\psi(\theta)$ & $\psi'(\theta)$ & $(\psi')^{-1}(\mu)$ \\[-4pt]
\midrule
\begin{tabular}{@{}l@{}}7. Beta, \\[-12pt] $y \in [0,1]$\end{tabular} & 
$\left(\begin{array}{c} \log(y) \\ \log(1-y) \end{array}\right)$ & 
$\log B(\theta_1,\theta_2)$ & 
$\left(\begin{array}{c} \text{Dig}(\theta_1) - \text{Dig}(\theta_1+\theta_2) \\ \text{Dig}(\theta_2) - \text{Dig}(\theta_1+\theta_2) \end{array}\right)$ & 
Alg.~\ref{alg:NR} \\
\begin{tabular}{@{}l@{}}8. Dirichlet, \\[-12pt] $y \in \Delta_{d-1}$\end{tabular} & 
$\left(\begin{array}{c} \log(y_1) \\ \vdots \\ \log(y_d) \end{array}\right)$ & 
$\log B(\theta_1,\ldots,\theta_d)$ & 
$\left(\begin{array}{c} \text{Dig}(\theta_1) - \text{Dig}(\sum_j\theta_j) \\ \vdots \\ \text{Dig}(\theta_d) - \text{Dig}(\sum_j\theta_j) \end{array}\right)$ & 
Alg.~\ref{alg:NR} \\
\begin{tabular}{@{}l@{}}9. Gaussian, \\[-12pt] $y \in \mathbb{R}$\end{tabular} & 
$\left(\begin{array}{c} y \\ y^2 \end{array}\right)$ & 
$-\frac{\theta_1^2}{4\theta_2} - \frac{1}{2}\log(-2\theta_2)$ & 
$\left(\begin{array}{c} -\frac{\theta_1}{2\theta_2} \\[4pt] \frac{\theta_1^2}{4\theta_2^2} - \frac{1}{2\theta_2} \end{array}\right)$ & \hspace{-7mm}
$\left(\begin{array}{c} \frac{\mu_1}{\mu_2-\mu_1^2} \\[4pt] -\frac{1}{2(\mu_2-\mu_1^2)} \end{array}\right)$ \\
\begin{tabular}{@{}l@{}}10. von Mises, \\[-12pt] $y \in [0,2\pi]$\end{tabular} & 
$\left(\begin{array}{c} \sin(y) \\ \cos(y) \end{array}\right)$ & 
$\log I_0(\sqrt{\theta_1^2+\theta_2^2})$ & 
$\frac{I_1\left(\sqrt{\theta_1^2+\theta_2^2}\right)}{I_0\left(\sqrt{\theta_1^2+\theta_2^2}\right)}\left(\begin{array}{c} \frac{\theta_1}{\sqrt{\theta_1^2+\theta_2^2}} \\[4pt] \frac{\theta_2}{\sqrt{\theta_1^2+\theta_2^2}} \end{array}\right)$ & 
Alg.~\ref{alg:NR} \\
\bottomrule
\end{tabular*}\\[2pt]
{\footnotesize \emph{Note}: Columns are as in Table~\ref{tab:CEF1}. $B(\cdot)$ denotes the beta function, $\text{Dig}(\cdot)$ the digamma function, $I_0$ the modified Bessel function of the 1st kind and $I_1$ the modified Bessel function of the 1st kind of order 1, e.g. \cite{AbramowitzStegun(70)}. For cases 7, 8 and 10, $(\psi')^{-1}(\mu)$ must be computed numerically, e.g.\ via Algorithm~\ref{alg:NR}; convergence typically requires fewer than 5 iterations.}
\end{table}

For case 2, the resulting predictor $\widetilde{\mu}_{t|t-1}$ is the stationary version of the {\tt ARMA}(1,1) process; we compare the filter with the Kalman filter in Section~\ref{sect:kalman}. For case 5, note that $\mu = \mathbb{E}[Y^2] = \mathrm{V}(Y)$, and the resulting predictor $\widetilde{\mu}_{t|t-1}$ is the stationary version of the {\tt GARCH}(1,1) model of \cite{Engle(82)} and \cite{Bollerslev(86)}. For case 9, $\widetilde{\sigma}^2_{t|t-1} = \widetilde{\mu}_{2,t|t-1} - \widetilde{\mu}^2_{1,t|t-1} \geq 0$ by invariance and Jensen's inequality. For the cases 7, 8, and 10, given $\mu$, the corresponding $\theta = (\psi')^{-1}(\mu)$, which is unique, can be found numerically via Algorithm~\ref{alg:NR}.

Following Assumption \ref{def:DGP4}, we simulate each process using $Y_t | Y_{1:t-1} \sim {\tt CEF}(\widetilde{\theta}_{t|t-1},h,\psi)$ with 
$$
\widetilde{\theta}_{t|t-1} = (\psi')^{-1}(\bar{m}_{t|t-1}) = (\psi')^{-1}\left( \frac{m_{\lambda,t|t-1}}{n_{\lambda,t|t-1}} \right),
$$
where $m_{\lambda,t|t-1} = (1-\alpha)\mathbb{E}[h(Y_1)] + \alpha\lambda h(Y_{t-1}) + \lambda m_{\lambda,t-1|t-2}$. We impose stability by centering at $\mathbb{E}[h(Y_1)] = \mathbb{E}[Y_1] = 0.5$ (Bernoulli), $\mathbb{E}[h(Y_1)] = \mathbb{E}[Y_1] = 0$ (Gaussian with known $\sigma^2=1$), $\mathbb{E}[h(Y_1)] = \mathbb{E}[Y_1] = 1$ (Poisson, Exponential), $\mathbb{E}[h(Y_1)] = \mathbb{E}[Y_1^2] = 1$ (Gaussian with zero mean), and $\mathbb{E}[h(Y_1)] = \mathbb{E}[\log(Y_1)] = 1/3$ (Pareto with scale $m=1$, and shape $\theta_0 = -3$). For the bivariate cases, we take $\mathbb{E}[h(Y_1)] = \psi'(\theta_0)$ with initial values $\theta_0 = (2, 5)'$ for Beta (corresponding to ${\tt Beta}(2,5)$), $\theta_0 = (0, -1/2)'$ for Gaussian (corresponding to mean $0$, variance $1$), and $\theta_0 = (0, -2)'$ for von Mises (corresponding to mean direction $\pi$, concentration $2$).

\subsection{Simulating ten {\tt CEF} examples: results}

Figures~\ref{fig:CEFunistateExact} and~\ref{fig:CEFunistate_additional} show simulated time series $Y_{t}$ and conditional predictors for $t=5,...,T=2000$ with $\lambda=0.93$, $\alpha \in \{0.70, 0.95\}$, and $\widetilde{\theta}_{1|0} = (\psi')^{-1}(\mathbb{E}[h(Y_1)])$. Figure~\ref{fig:CEFunistateExact} displays the conditional mean $\widetilde{\mathbb{E}[Y_t|Y_{1:t-1}]} = \psi'(\widetilde{\theta}_{t|t-1})$ for Bernoulli, Gaussian (known $\sigma^2=1$), and Poisson. Integer-valued observations are jittered with $\text{Unif}(-0.1, 0.1)$ noise for visualization. 
Figure~\ref{fig:CEFunistate_additional} shows the exponential distribution with conditional mean $\widetilde{\mathbb{E}[Y_t|Y_{1:t-1}]} = \psi'(\widetilde{\theta}_{t|t-1})$, the Gaussian (zero mean) with conditional standard deviation $\widetilde{\sigma}_{t|t-1} = \sqrt{\psi'(\widetilde{\theta}_{t|t-1})}$, and the Pareto with conditional mean $\widetilde{\mathbb{E}[Y_t|Y_{1:t-1}]} = \frac{\widetilde{\theta}_{t|t-1}}{\widetilde{\theta}_{t|t-1}+1}$ when $\widetilde{\theta}_{t|t-1} < -1$ and $\infty$ otherwise.

\begin{figure}[htbp]
    \centering
    \begin{tabular}{@{}c@{\hspace{-0.25cm}}c@{\hspace{-0.25cm}}c@{\hspace{-0.25cm}}c@{}}
        & \textbf{Bernoulli} & \textbf{Gaussian (known sd)} & \textbf{Poisson} \\
        \raisebox{2.0cm}[0pt][0pt]{{\small $\alpha=0.70$\hspace{0.3cm}}} &
        \includegraphics[width=0.32\linewidth]{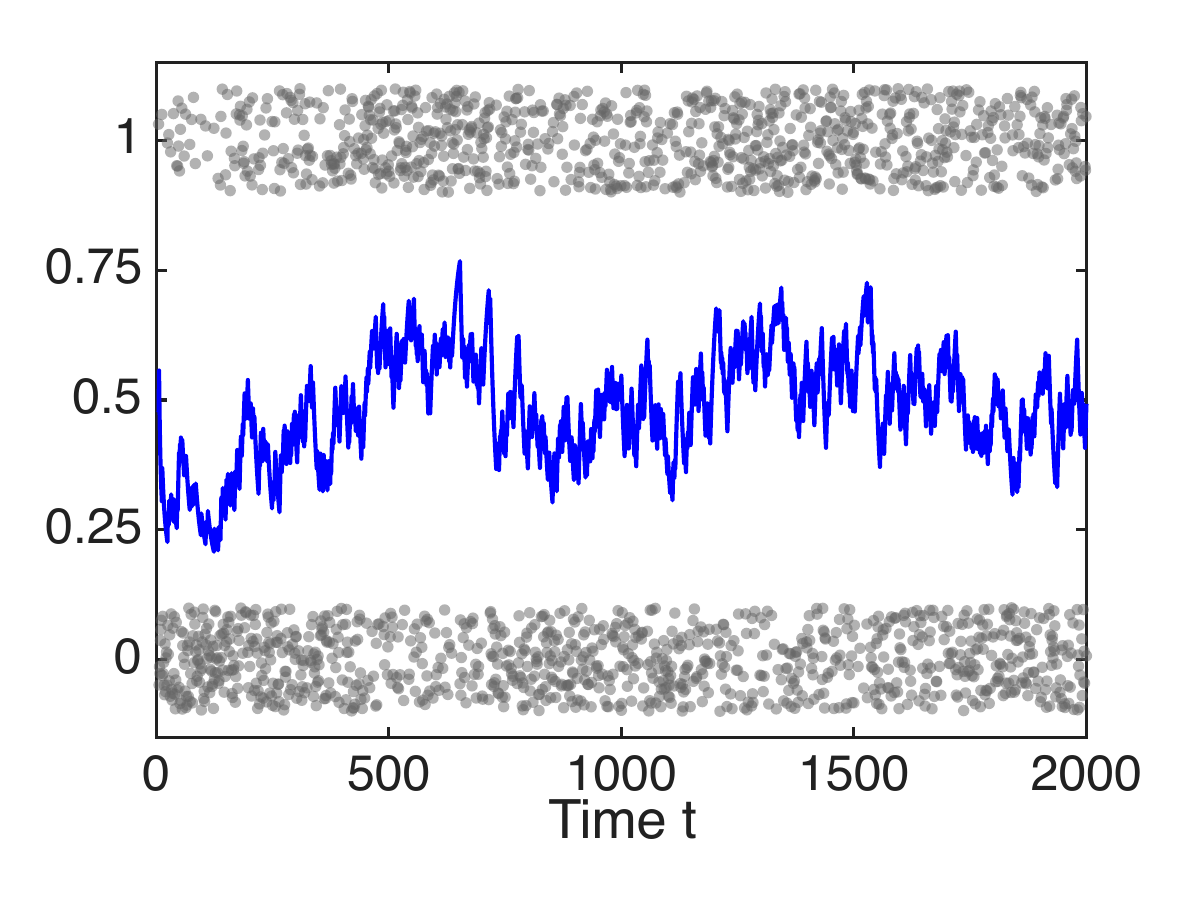} &
        \includegraphics[width=0.32\linewidth]{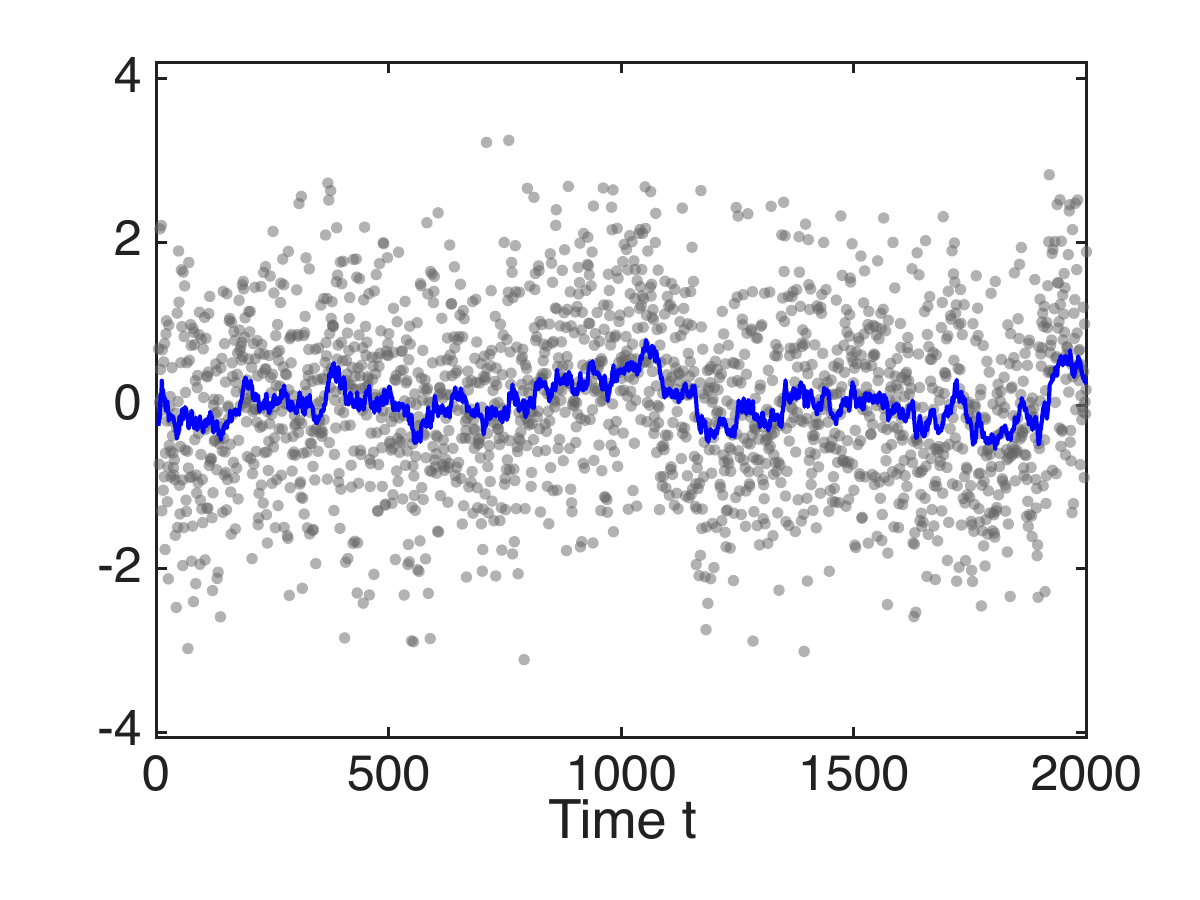} &
        \includegraphics[width=0.32\linewidth]{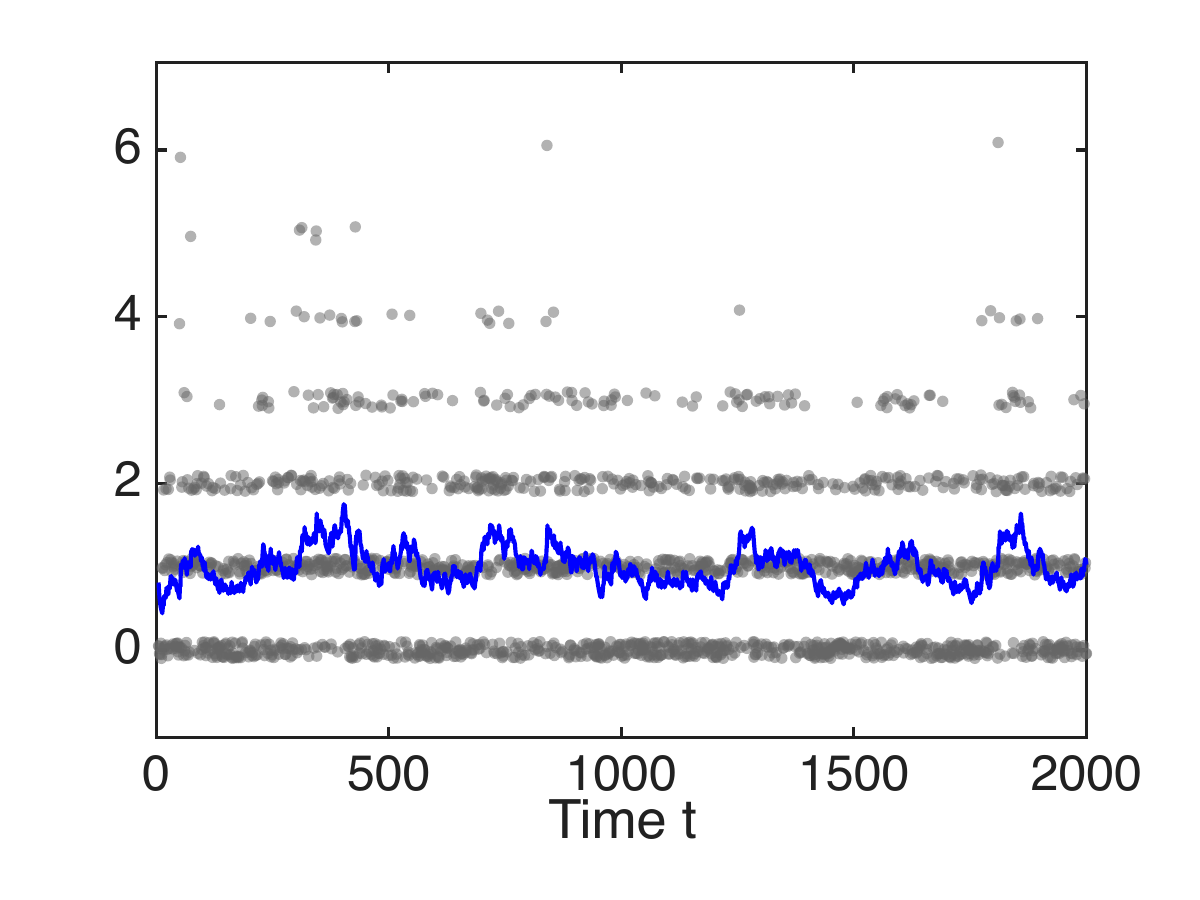} \\[-3mm]
        \raisebox{2.0cm}[0pt][0pt]{{\small $\alpha=0.95$\hspace{0.3cm}}} &
        \includegraphics[width=0.32\linewidth]{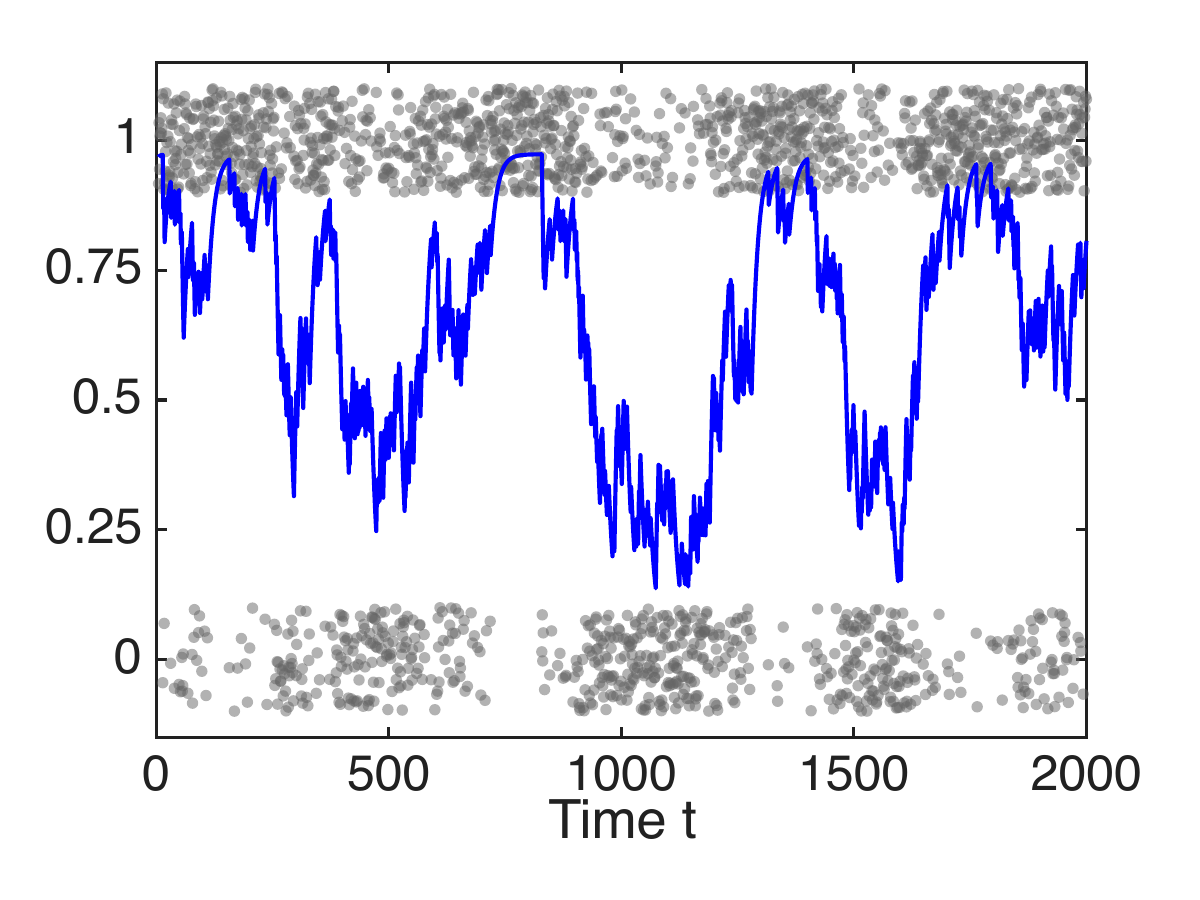} &
        \includegraphics[width=0.32\linewidth]{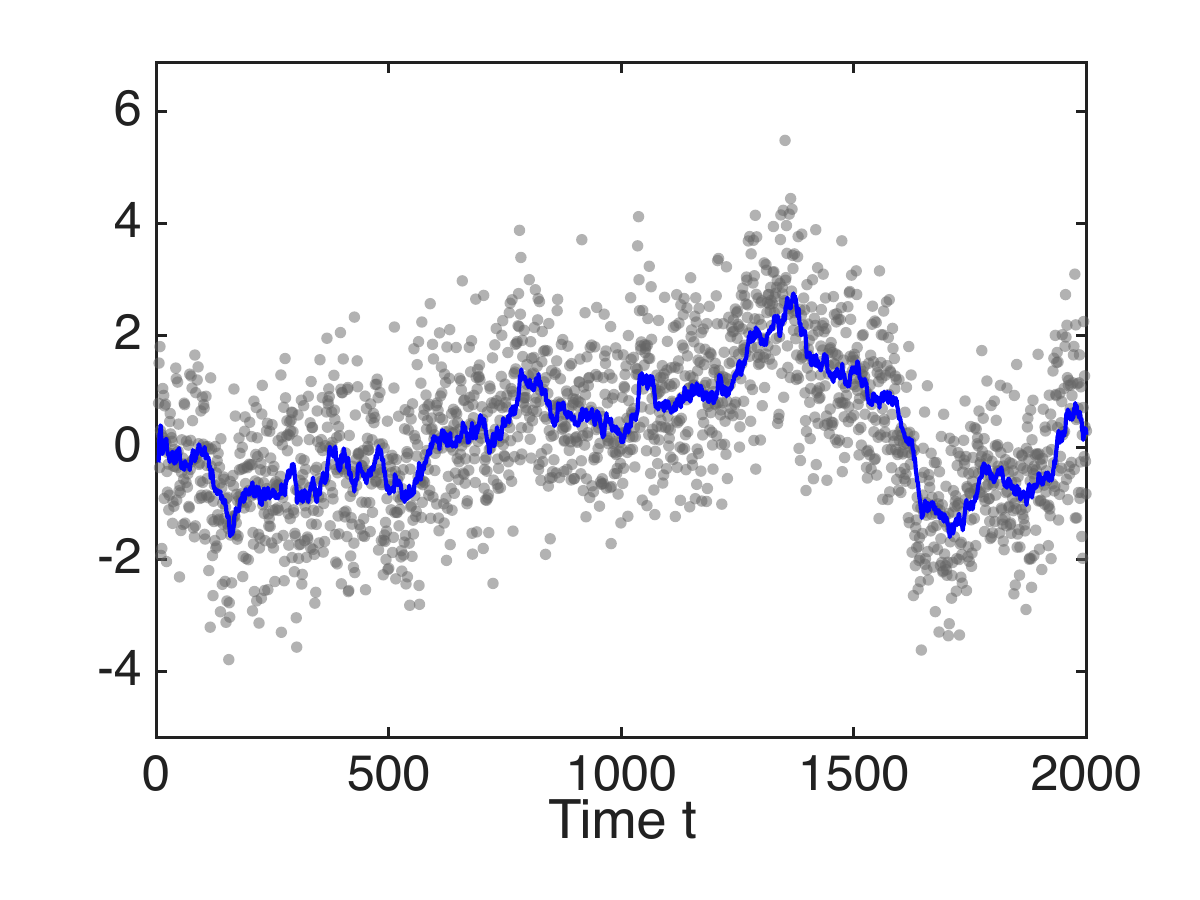} &
        \includegraphics[width=0.32\linewidth]{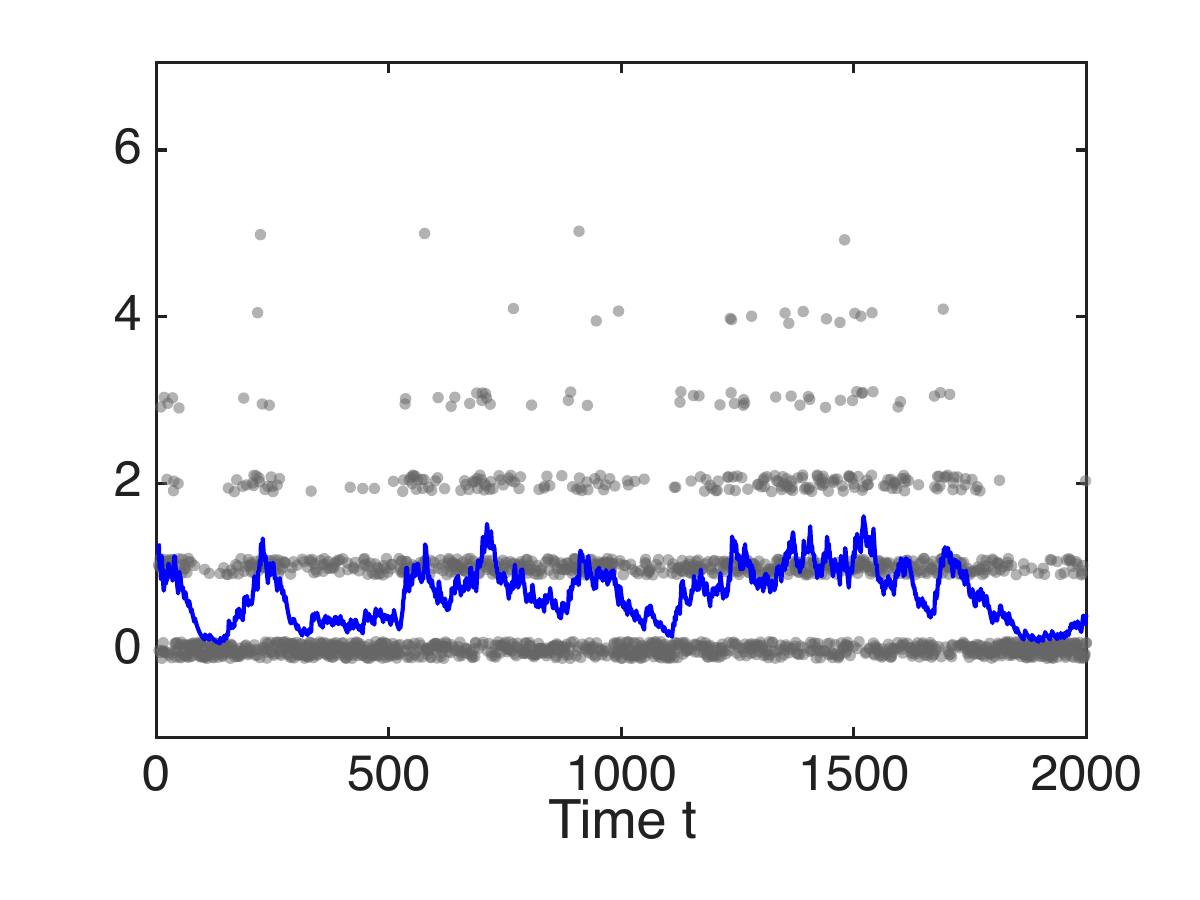}
    \end{tabular}
    \vspace{-5mm}
    \caption{Simulation from $Y_t | Y_{1:t-1} \sim {\tt CEF}(\widetilde{\theta}_{t|t-1},h,\psi)$ for time $t=5,...,T=2000$ with discount parameter $\lambda=0.93$. Top: anchoring parameter $\alpha=0.7$; bottom: $\alpha=0.95$. Gray circles show observations $Y_t$. Blue lines show the conditional mean $\widetilde{\mathbb{E}[Y_t|Y_{1:t-1}]} = \psi'(\widetilde{\theta}_{t|t-1})$.}
    \label{fig:CEFunistateExact}
\end{figure}

\begin{figure}[htbp]
    \centering
    \begin{tabular}{@{}c@{\hspace{-0.25cm}}c@{\hspace{-0.25cm}}c@{\hspace{-0.25cm}}c@{}}
        & \textbf{Exponential} & \textbf{Gaussian (zero mean)} & \textbf{Pareto} \\
        \raisebox{2.0cm}[0pt][0pt]{{\small $\alpha=0.70$\hspace{0.3cm}}} &
        \includegraphics[width=0.32\linewidth]{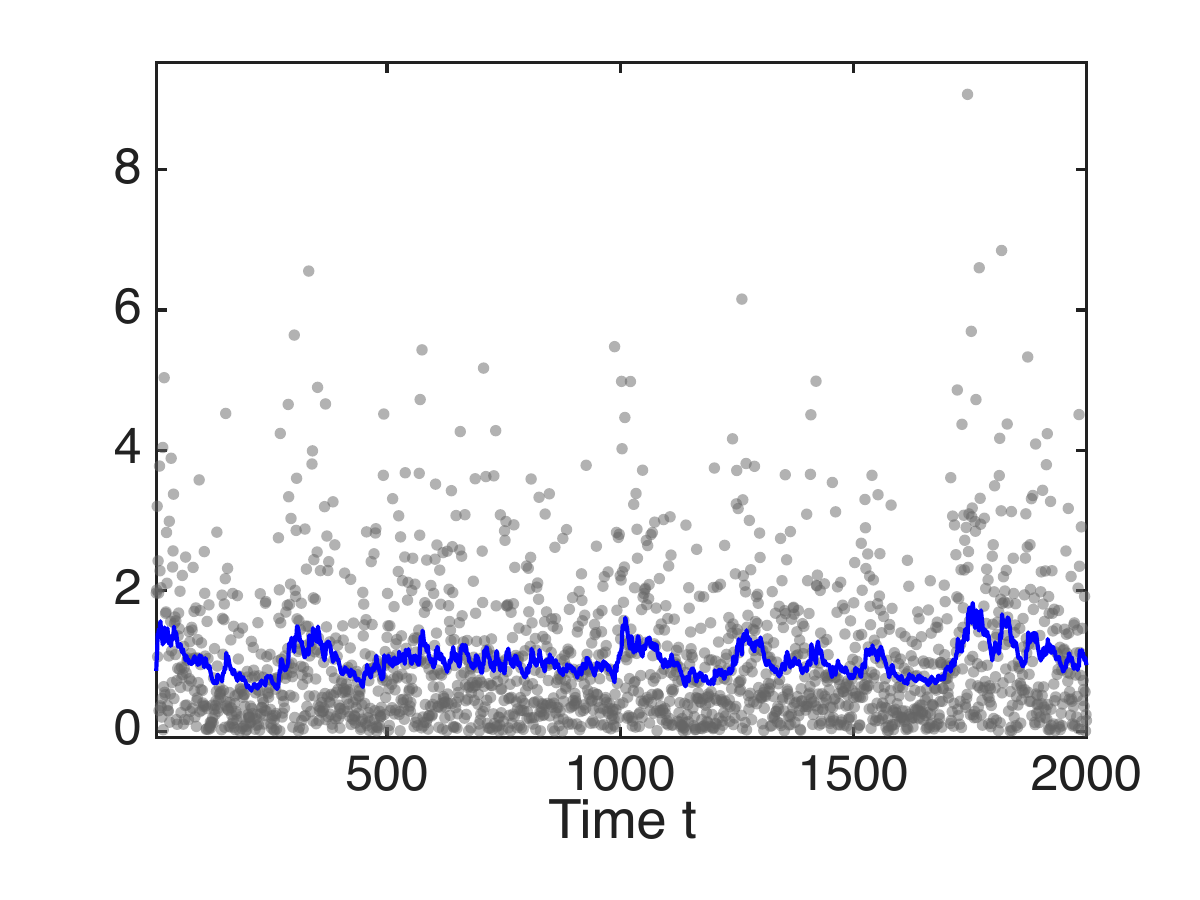} &
        \includegraphics[width=0.32\linewidth]{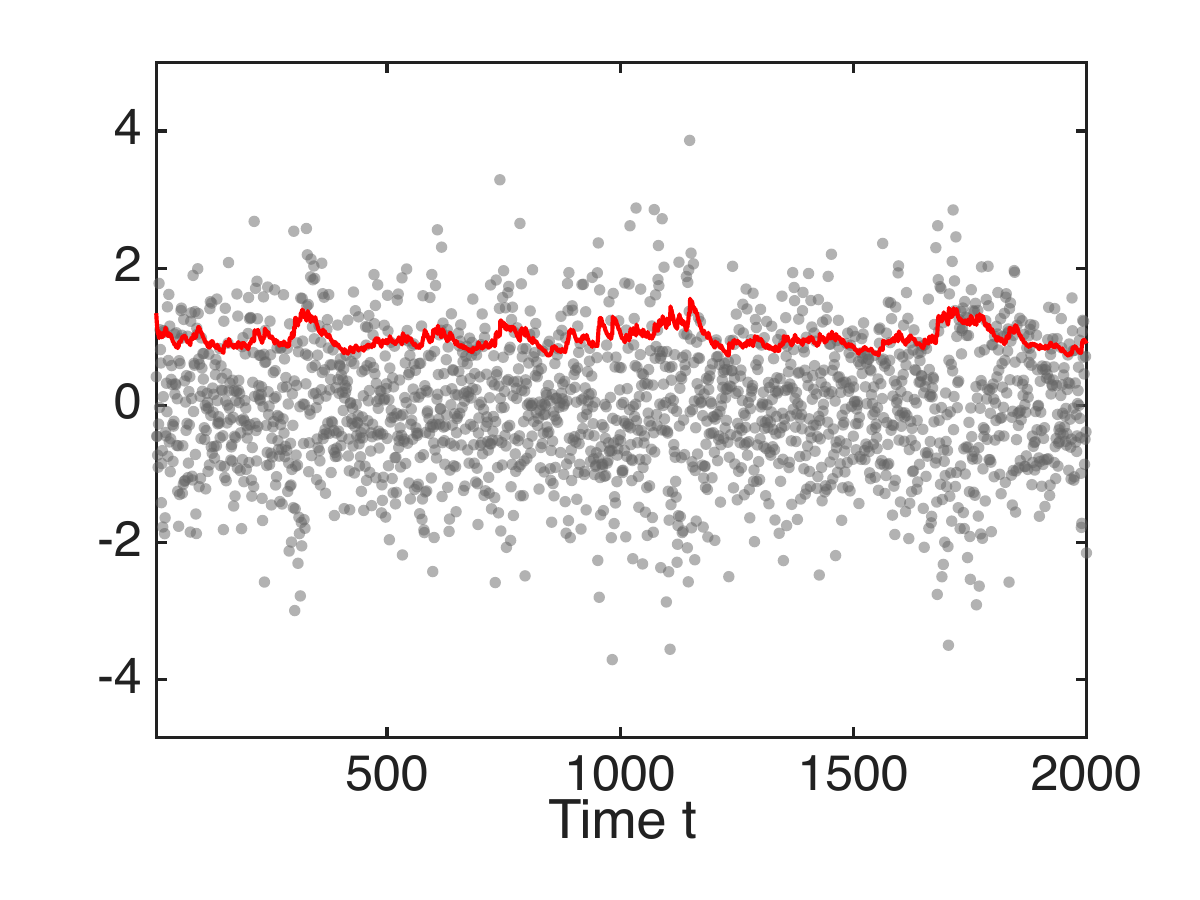} &
        \includegraphics[width=0.32\linewidth]{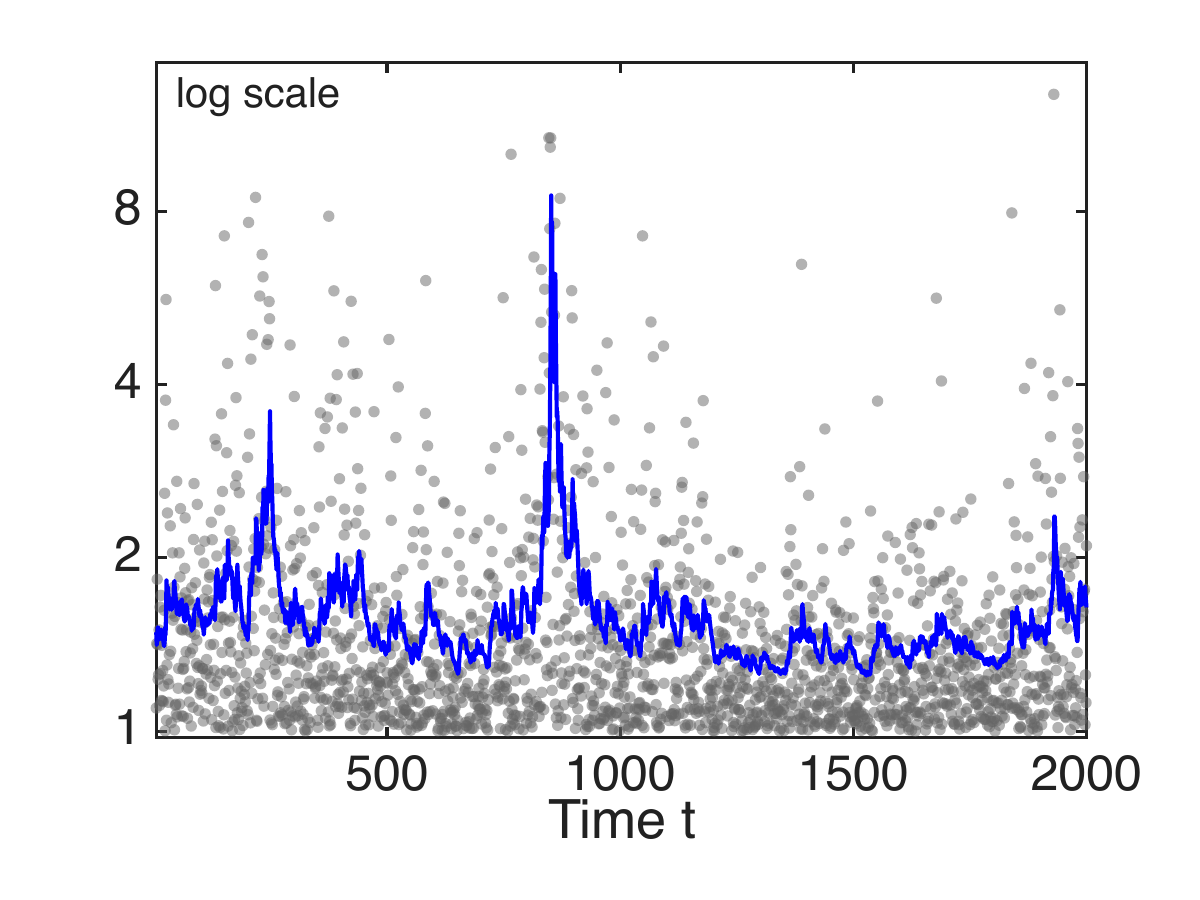} \\[-3mm]
        \raisebox{2.0cm}[0pt][0pt]{{\small $\alpha=0.95$\hspace{0.3cm}}} &
        \includegraphics[width=0.32\linewidth]{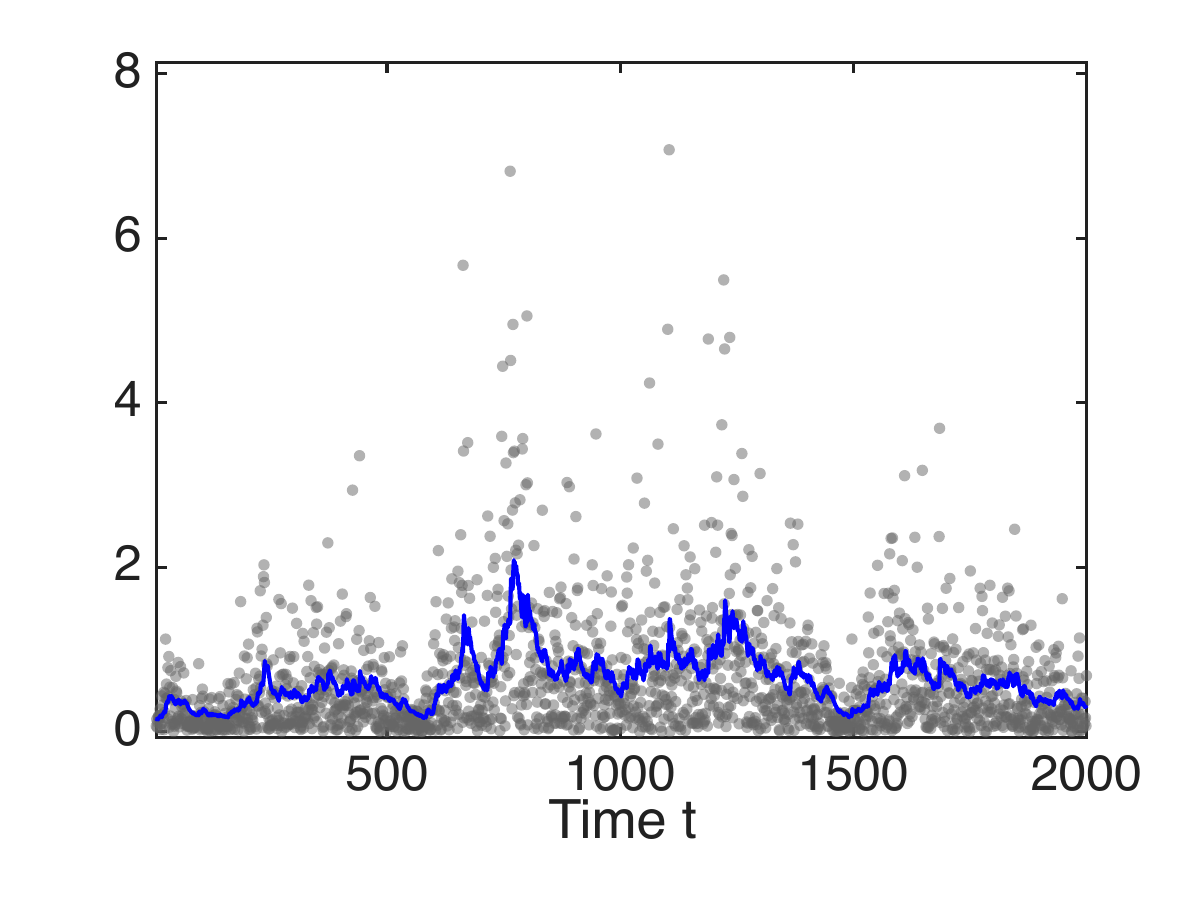} &
        \includegraphics[width=0.32\linewidth]{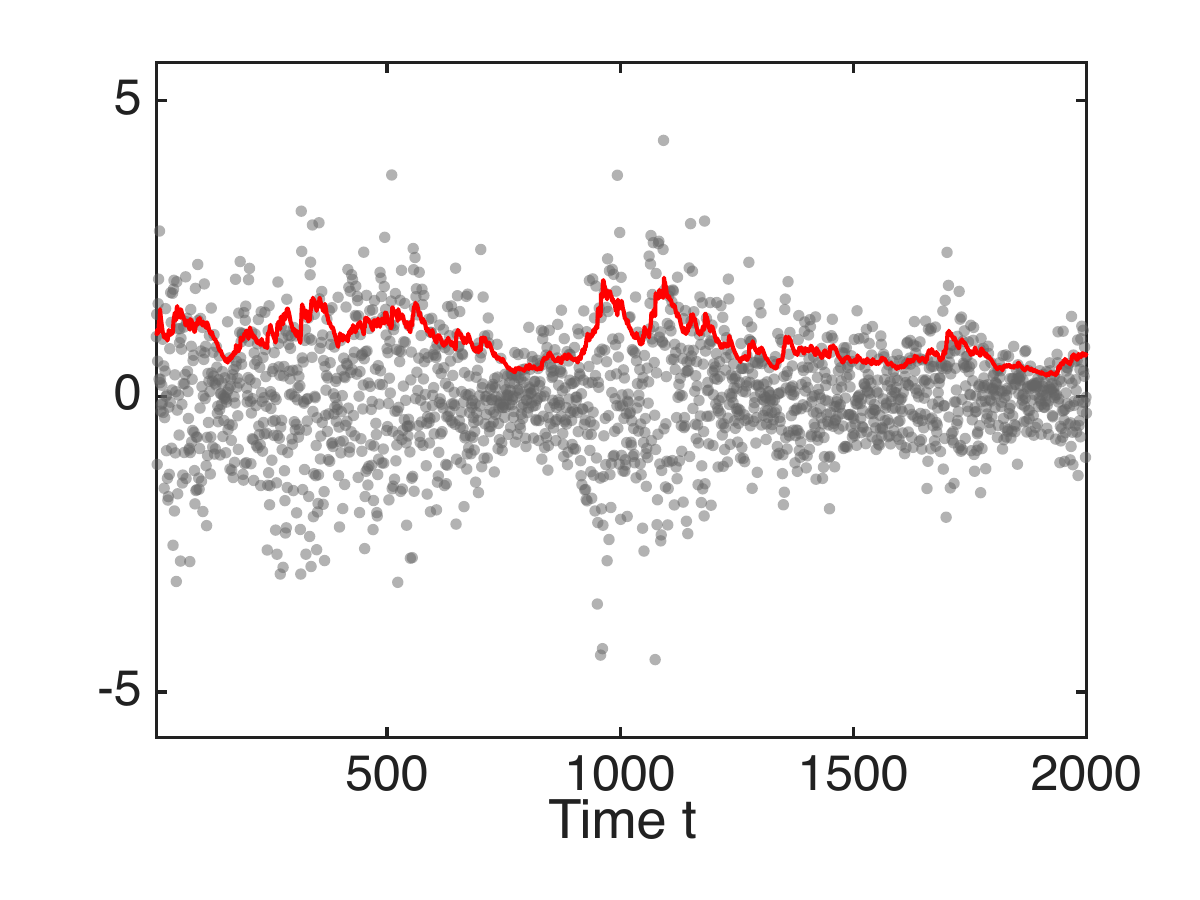} &
        \includegraphics[width=0.32\linewidth]{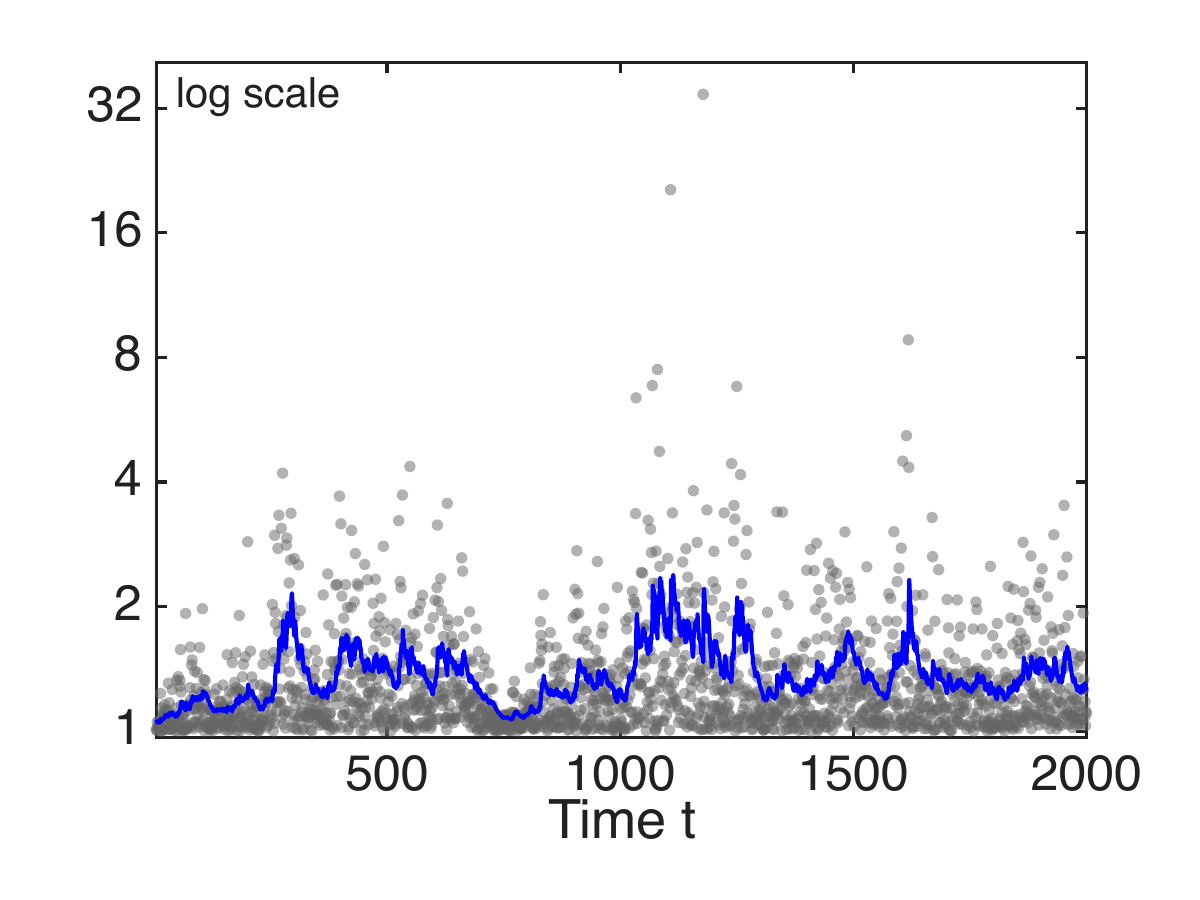}
    \end{tabular}
    \vspace{-5mm}
    \caption{Simulation from $Y_t | Y_{1:t-1} \sim {\tt CEF}(\widetilde{\theta}_{t|t-1},h,\psi)$ for time $t=5,...,T=2000$ with discount parameter $\lambda=0.93$. Top: anchoring parameter $\alpha=0.7$; bottom: $\alpha=0.95$. Gray circles show observations $Y_t$. Blue lines show the conditional mean $\widetilde{\mathbb{E}[Y_t|Y_{1:t-1}]} = \psi'(\widetilde{\theta}_{t|t-1})$ for the exponential, and $\widetilde{\mathbb{E}[Y_t|Y_{1:t-1}]} = \frac{\widetilde{\theta}_{t|t-1}}{\widetilde{\theta}_{t|t-1}+1}$ for the Pareto (y-axis on log scale) when $\widetilde{\theta}_{t|t-1} < -1$ and $\infty$ otherwise. The red line for the Gaussian (zero mean) shows the conditional standard deviation $\widetilde{\sigma}_{t|t-1} = \sqrt{\psi'(\widetilde{\theta}_{t|t-1})}$.}
    \label{fig:CEFunistate_additional}
\end{figure}

Figure~\ref{fig:two_state_simulations} shows simulation results for the three bivariate cases: Beta, Gaussian  with time-varying mean and variance, and von Mises. Each column displays the conditional expectation $\widetilde{\mathbb{E}[Y_t|Y_{1:t-1}]}$ with simulated observations. For Beta, the conditional mean is $\widetilde{\theta}_{1,t|t-1}/(\widetilde{\theta}_{1,t|t-1}+\widetilde{\theta}_{2,t|t-1})$. For Gaussian, it is $-\widetilde{\theta}_{1,t|t-1}/(2\widetilde{\theta}_{2,t|t-1})$ and the conditional standard deviation $\widetilde{\sigma}_{t|t-1} = \sqrt{-1/(2\widetilde{\theta}_{2,t|t-1})}$ is shown as a red line. For von Mises, we present the mean direction $\widetilde{\mu}_{t|t-1} = \text{atan2}(\widetilde{\theta}_{1,t|t-1}, \widetilde{\theta}_{2,t|t-1})$, modulo $2\pi$ to ensure $\widetilde{\mu}_{t|t-1} \in [0,2\pi]$, where $\text{atan2}(y,x)$ is the four-quadrant arctangent of $(x,y)$. For $\alpha = 0.95$, the predictors oscillate more around the centering values than for $\alpha = 0.70$, due to weaker mean reversion.

\begin{figure}[htbp]
    \centering
    \begin{tabular}{@{}c@{\hspace{-0.25cm}}c@{\hspace{-0.25cm}}c@{\hspace{-0.25cm}}c@{}}
        & \textbf{Beta} & \textbf{Gaussian} & \textbf{von Mises} \\
        \raisebox{2.0cm}[0pt][0pt]{{\small $\alpha=0.70$\hspace{0.3cm}}} &
        \includegraphics[width=0.32\linewidth]{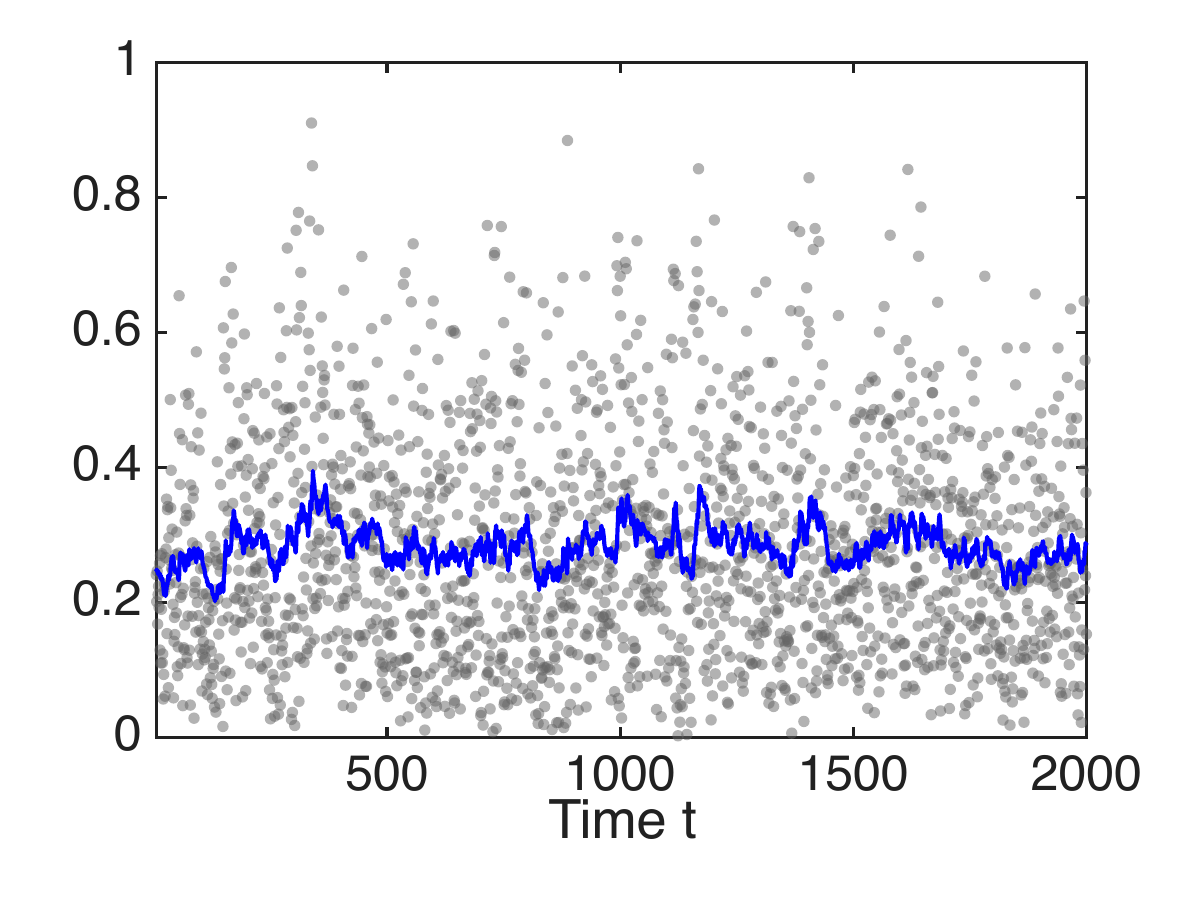} &
        \includegraphics[width=0.32\linewidth]{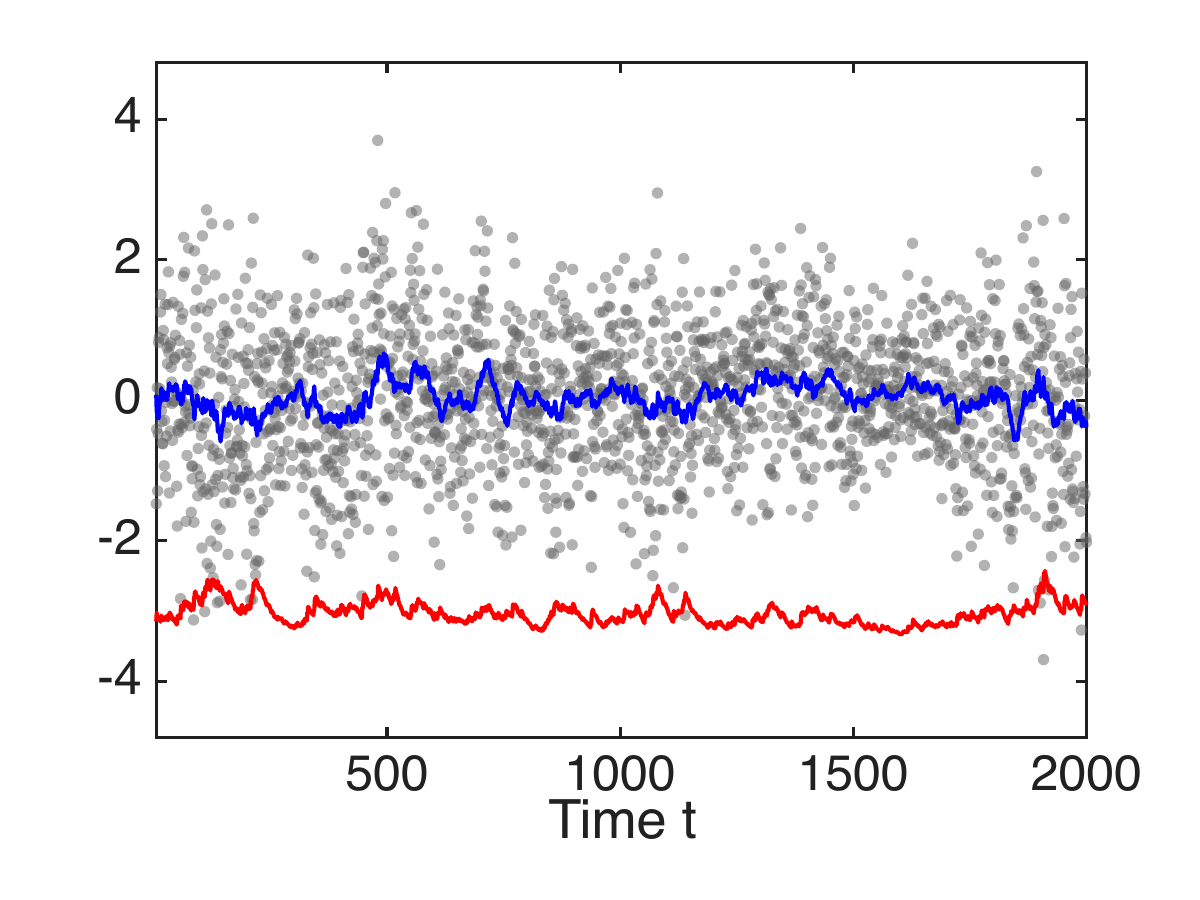} &
        \includegraphics[width=0.32\linewidth]{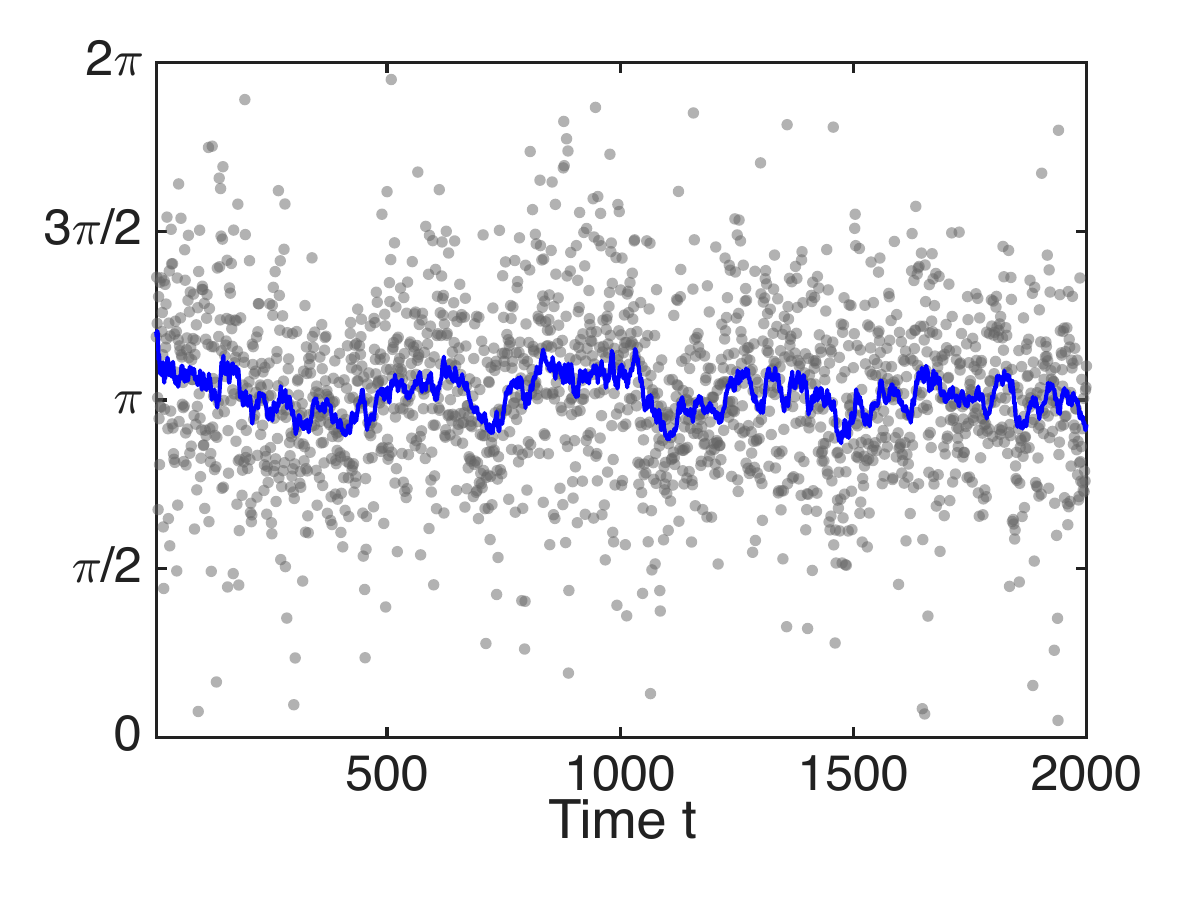} \\[-3mm]
        \raisebox{2.0cm}[0pt][0pt]{{\small $\alpha=0.95$\hspace{0.3cm}}} &
        \includegraphics[width=0.32\linewidth]{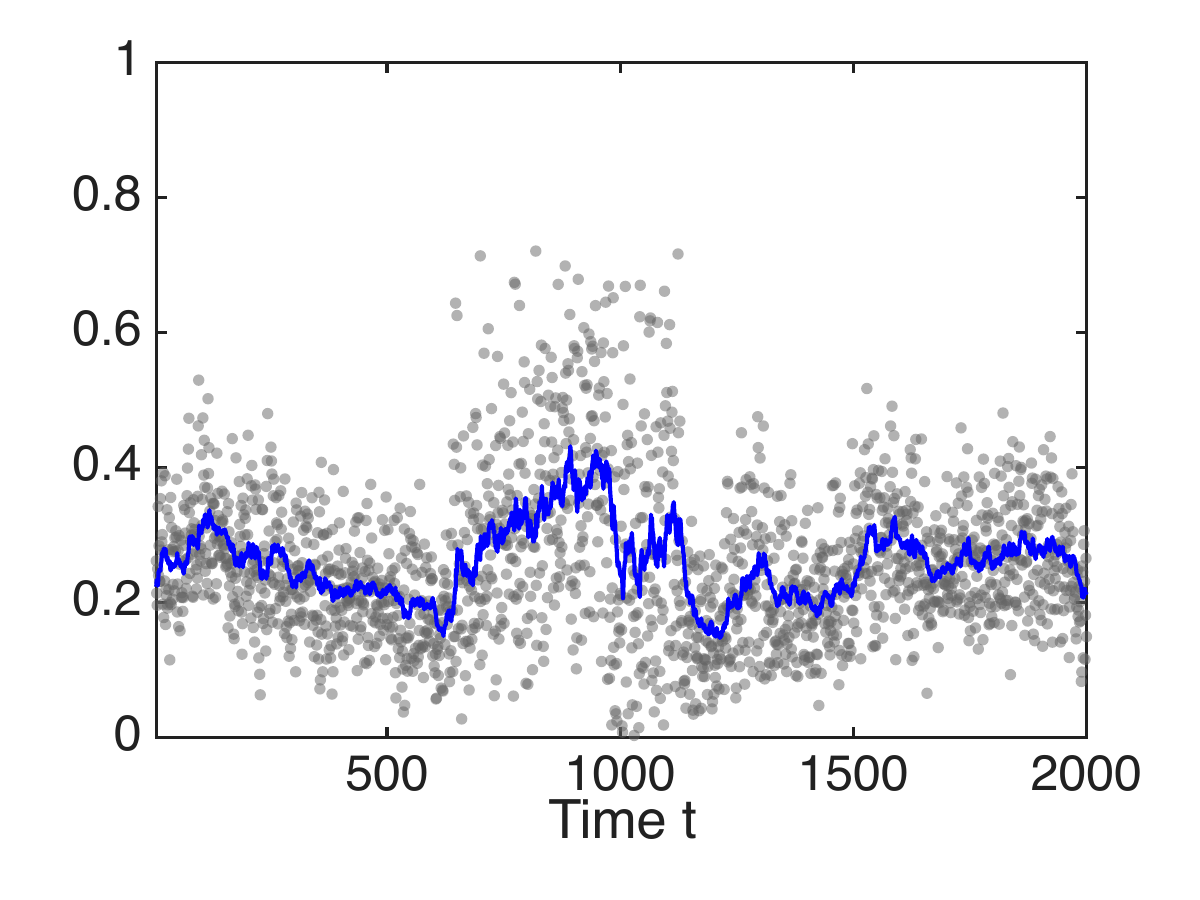} &
        \includegraphics[width=0.32\linewidth]{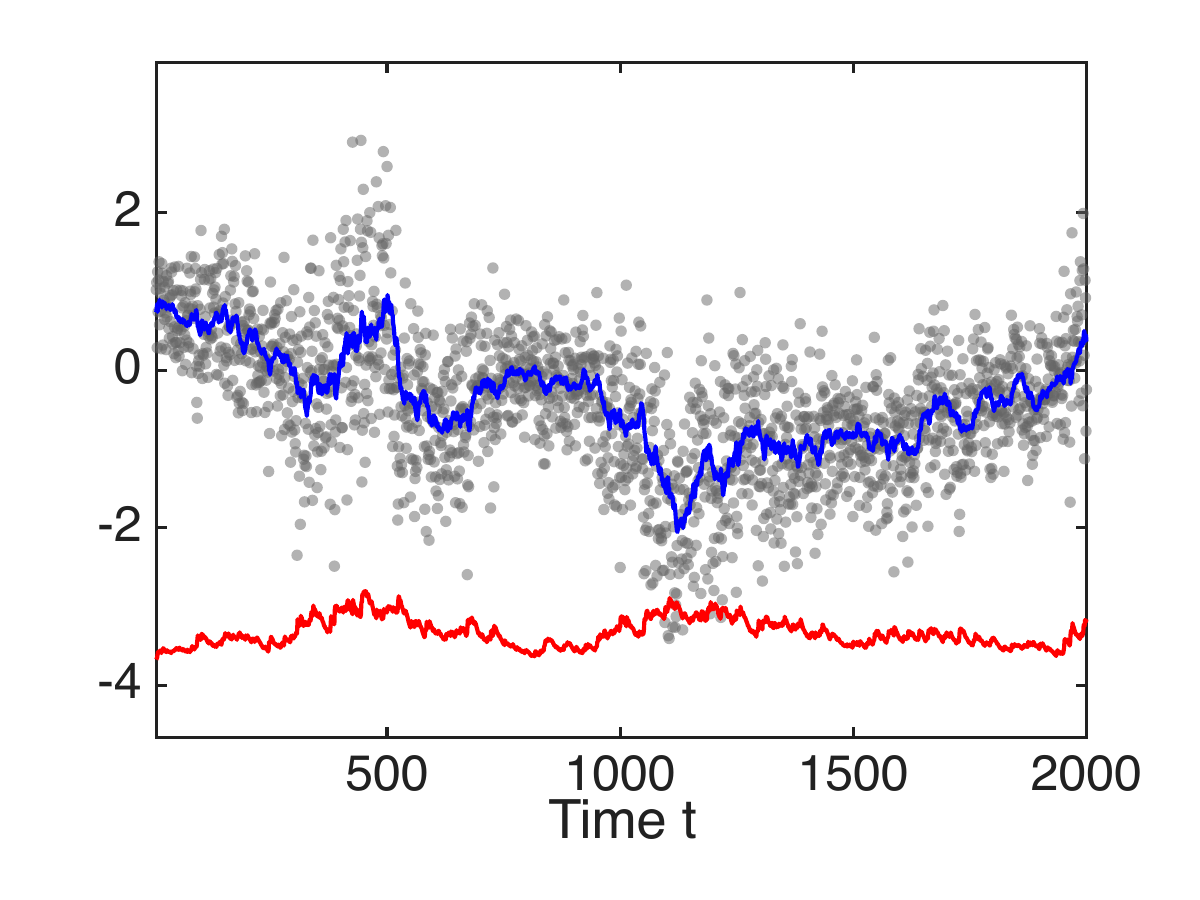} &
        \includegraphics[width=0.32\linewidth]{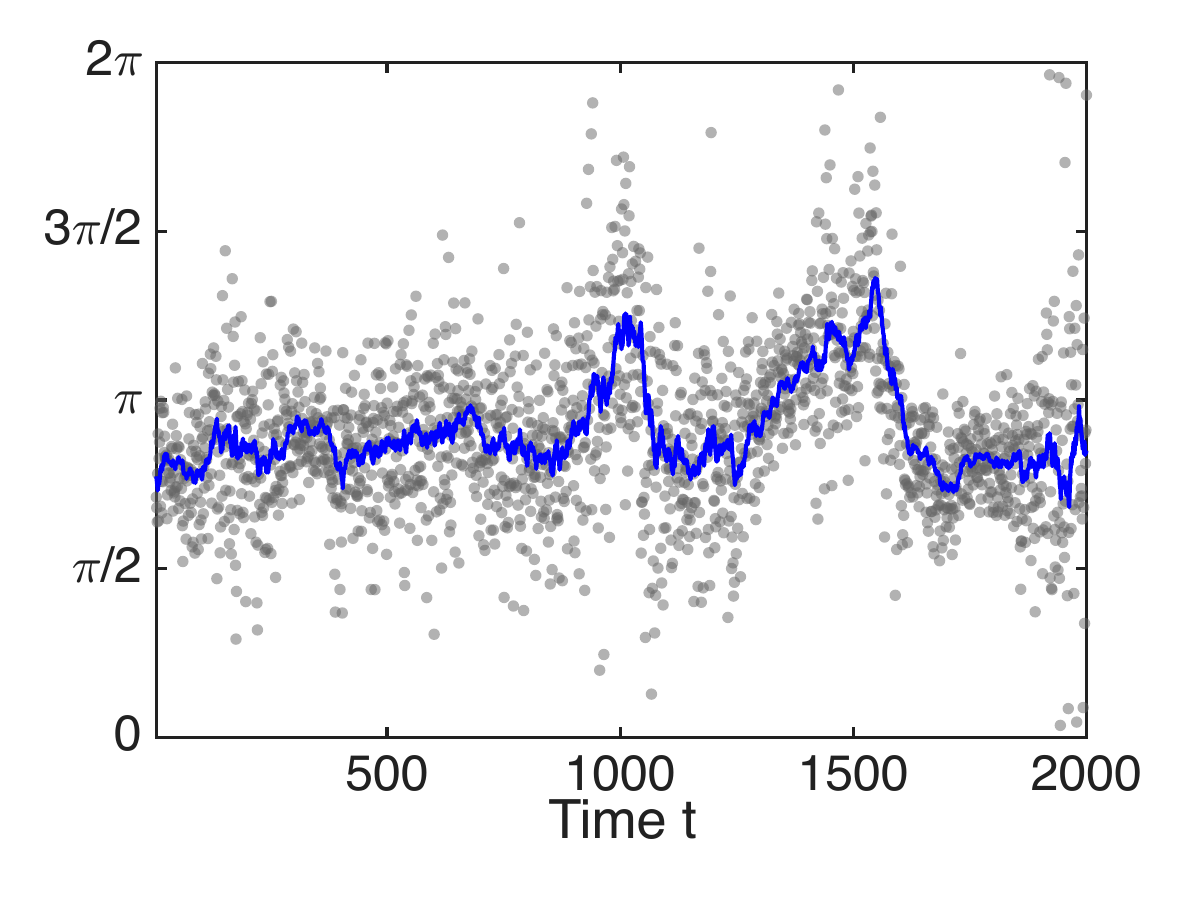}
    \end{tabular}
    \vspace{-5mm}
    \caption{Simulation from $Y_t | Y_{1:t-1} \sim {\tt CEF}(\widetilde{\theta}_{t|t-1},h,\psi)$ for time $t=5,...,T=2000$ with discount parameter $\lambda=0.93$. Top: anchoring parameter $\alpha=0.7$; bottom: $\alpha=0.95$. Gray circles show observations $Y_t$. Blue lines show the conditional mean $\widetilde{\mathbb{E}[Y_t|Y_{1:t-1}]}$ for Beta, Gaussian (time-varying mean and variance), and von Mises. For the Gaussian, the red line shows the conditional standard deviation $\widetilde{\sigma}_{t|t-1}$, shifted down by 4 units for visualization.}
    \label{fig:two_state_simulations}
\end{figure}

\subsection{Gaussian case: comparison to Kalman filter}\label{sect:kalman}

The Gaussian example with known standard deviation, case 2, links strongly to the Kalman filter for the univariate Gaussian local level model 
\begin{align*}
Y_t &= \mu_t + \varepsilon_t, \quad \varepsilon_t \overset{iid}{\sim } N(0, \sigma^2_\varepsilon), \\
\mu_{t+1} &= \mu_t + \eta_t, \quad \eta_t \overset{iid}{\sim } N(0, q\sigma^2_\varepsilon),\quad (\varepsilon_{1:T} \ind \eta_{2:T})|\mu_1,\quad \mu_1\sim N(a_{1|0},\sigma^2_\varepsilon P_{1}).
\end{align*}
The Kalman filter implies $a_{t|t-1} = {\mathrm E}[\mu_t|Y_{1:t-1}] = {\mathrm E}[\mu_{t-1}|Y_{1:t-1}]$ is given by (e.g. \cite{Harvey(89)} and \cite{DurbinKoopman(12)}) 
\begin{align*}
a_{t+1|t} &= a_{t|t-1} + K_t(Y_t - a_{t|t-1}),\quad K_t = \frac{P_t}{P_{t}+1},\quad P_{t+1} = K_t + q, 
\end{align*}
where $K_t$ is called the Kalman gain and $\sigma^2_\varepsilon P_t = {\mathrm V}(\mu_t|Y_{1:t-1})$. If $q>0$, in steady state, $K_t$ and $P_t$ converge to
$
K = \frac{P}{P + 1}$ and $ P=K+q. $
Then $K^2 + qK - q = 0$, implying 
$K = \frac{-q + \sqrt{q^2 + 4q}}{2}.
$ 
In steady state, the one-step predictor is
$$
a_{t+1|t} = (1-K)a_{t|t-1} + K Y_t= K \sum_{j=0}^{\infty} (1-K)^j Y_{t-j},
$$
which is an {\tt EWMA} with discount hyperparameter 
$$\lambda=1-K = \frac{2+q-\sqrt{(2+q)^2-4}}{2}.$$
The discount rate monotonically declines as $q$ increases, going from 1 to 0.  

Our predictor in steady state, setting $\mathbb{E}[Y_1] = 0$ and $\alpha=1$, implies 
$$
\widetilde{\mu}_{t+1|t} = \lambda \widetilde{\mu}_{t|t-1} + (1-\lambda)Y_{t}. 
$$

A more subtle comparison is out of steady state. For the Kalman filter, take $P_{1}=\infty$, then $K_1 = 1$. In our predictor, 
$$
\widetilde{\mu}_{t+1|t} = \left(1-\frac{1}{n_{\lambda,t}}\right) \widetilde{\mu}_{t|t-1} + \frac{1}{n_{\lambda,t}}Y_{t}.
$$
We note that if $\lambda>0$, then  
$
n_{\lambda,t} = \frac{1-\lambda^t}{1-\lambda}.
$
Hence it is interesting to plot $K_t \times n_{\lambda,t}$ 
against $t$, when $P_{1|0}=\infty$, the so called diffuse initial conditions for a Kalman filter (e.g. Chapter 1 of \cite{DurbinKoopman(12)}). If this product is less than one, then our recursion gives more weight on $Y_t$ than the Kalman filter, and subsequently, less on the past.

Then $K_t n_{\lambda,t}=1$ if $t=1$. It is also 1 for every value of $t$ when $q=0$ for then $n_{\lambda,t}=t$.  The limit is also 1 as $t \rightarrow \infty$ for $q>0$ (steady state), but the product is not 1 for every $t$ and $q>0$.  

The easiest way to see differences is to think of $q$ as small, then, for example,  
$
\lambda \approx 1-q^{1/2}$, $ n_{\lambda,2} \approx 2 - q^{1/2}$, $ K_2 \approx \frac{1}{2} -q,$ and $K_2 n_{\lambda,2} \approx 1-q^{1/2}/2.$
Figure \ref{fig:DiscountErr} plots the product $K_t \times n_{\lambda,t}$ for $q\in \{0.001,0.1,0.3,1,2,10\}$ against $t$.  The worst case is when $q$ is tiny, as it takes quite a large $t$ for the effect to disappear (as we move to the steady state). 

\begin{figure}[h]
\centering
\includegraphics[width=.55 \textwidth]{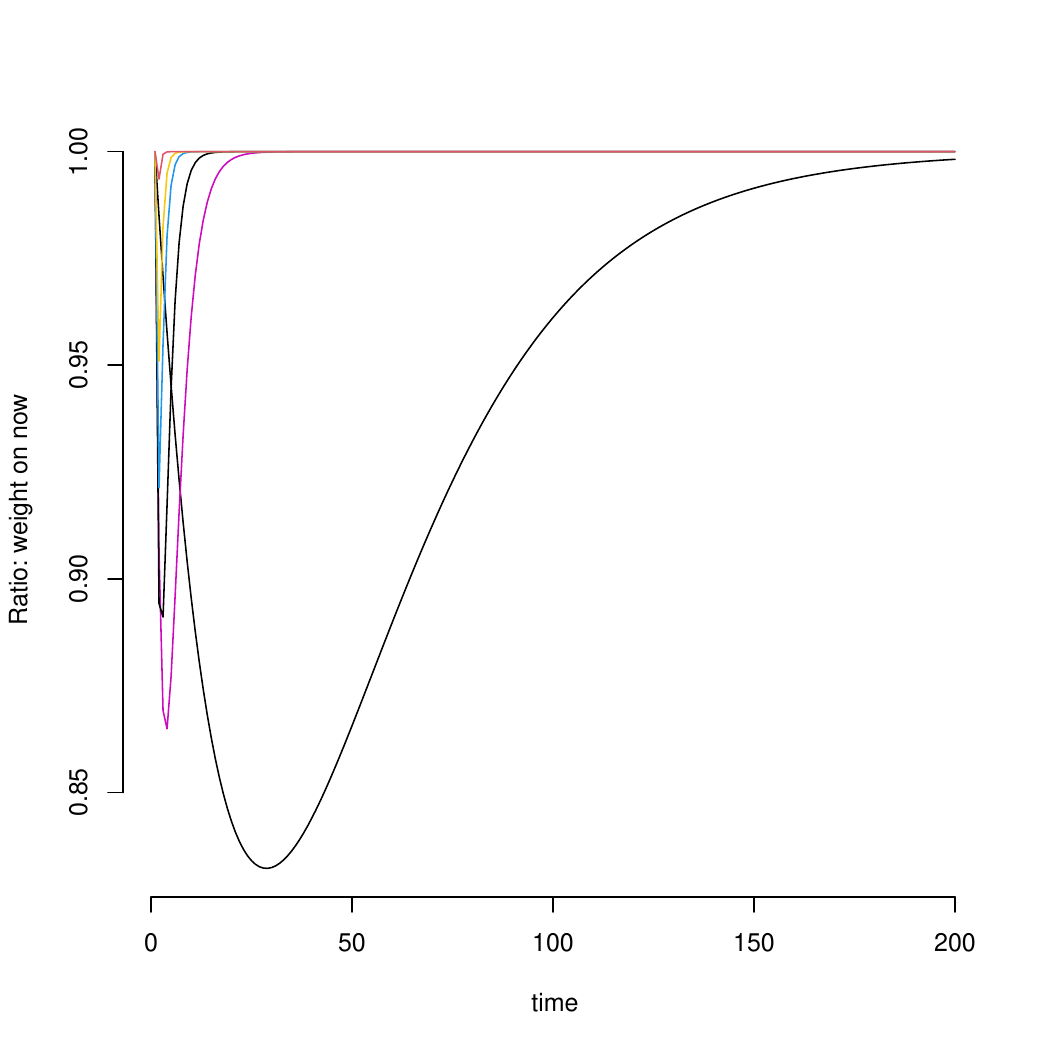} 
\vspace{-4mm}
    
\caption{The product $K_t n_{\lambda,t}$ against $t$ for $q \in \{0.001, 0.1, 0.3, 1, 2, 10\}$. 
This is the ratio of the Kalman filter's weight on $Y_t$ (under diffuse initialization) 
to the EWMA weight on $Y_t$, showing the impact of initial conditions. Values below 1 mean the Kalman filter places less weight 
on the current observation than the EWMA. All curves converge to 1 in steady state.
Deviations from 1 are largest, and convergence to steady-state
is slowest, for small $q$.
}   
    \label{fig:DiscountErr}
\end{figure}

Appendix~\ref{app:scoredriven-full} compares our predictor to the widely applied class of score-driven filters \citep{Harvey(13), CrealKoopmanLucas(13), lange2024robust}.

\section{Hyperparameter estimation}\label{sect:quasi}

The analysis in this Section will be based on a working model 
$$
Y_t|Y_{1:t-1} \sim {\tt CEF}(\widetilde{\theta}_{t|t-1};h_t,\psi_t),\quad t=1,...,T,
$$ 
where $\widetilde{\theta}_{t|t-1}$ comes from the exponentially weighted estimands. Later we will explicitly note the dependence of $\widetilde{\theta}_{t|t-1}$ on some hyperparameters, denoted $\omega$.  

Under the working model, write  
$$
\widetilde{\mu}_{t|t-1} := \psi_t'(\widetilde{\theta }_{t|t-1}),\quad 
\widetilde{\Sigma}_{t|t-1} := \psi_t''(\widetilde{\theta }_{t|t-1}),
$$
the model's conditional mean and variance, respectively. 

The working model yields a working log-likelihood via the prediction decomposition, ignoring constants, 
\begin{align*}
l_t &= \sum_{j=1}^t \Delta l_j,\quad \Delta l_t = h_t(y_t)^{\tt T}\widetilde{\theta }_{t|t-1} - \psi_t(\widetilde{\theta }_{t|t-1}).
\end{align*}
To make sense of this working log-likelihood function we need some regularity assumptions.  

\begin{assumption}\label{ass:quasiReg} Assume for a sequence $\{\widetilde{\theta}_{t|t-1}\}_{t=1}^T$ there exist constants $c_{1:2},d_{1:2}$ such that for all~$t$:

\noindent (a) the  
$\mathbb{E}[|h(Y_t)^{\tt T}\widetilde{\theta }_{t|t-1}|]<c_1$ and $\mathbb{E}[|\psi_t(\widetilde{\theta }_{t|t-1})|]<d_1$.  

\noindent (b) the ${\mathrm V}[h(Y_t)^{\tt T}\widetilde{\theta }_{t|t-1}]<c_2$ and ${\mathrm V}[\psi_t(\widetilde{\theta }_{t|t-1})]<d_2$.
\end{assumption}

\subsection{A quasi-likelihood interpretation}

As the working log-likelihood $$l_t,\quad t=1,...,T,$$ comes from the ${\tt CEF}(\widetilde{\theta}_{t|t-1},h_t,\psi_t)$, under weak regularity conditions, it will also be a quasi-likelihood, so long as the predictive mean is modeled correctly.  This property does not depend on the particular features of  $\widetilde{\theta}_{t|t-1}$.  We demonstrate this for the canonical exponential family time series case.  It applies when $\widetilde{\theta}_{t|t-1}$ is formed from the weighted predictor.  

Now make some assumptions about the data.  

\begin{assumption}\label{ass:MDx} Under the data, assume that:

\noindent (a) $\mathbb{E}[|h(Y_t)|]<\infty$ and denote 
$
\mathbb{E}[h_t(Y_t)|Y_{1:t-1}] := \mu_{t|t-1},$ where $t=1,...,T.$

\noindent (b) ${\mathrm V}[h_t(Y_t)]$ exists, the 
$
\Sigma_{t|t-1}:= {\mathrm V}[h_t(Y_t)|Y_{1:t-1}],$ for $t=1,...,T
$ 
and there exists a positive definite matrix $C$ such that $\Sigma_{t|t-1} - C$ is positive semidefinite for every $t$.  
\end{assumption}

Using strict convexity of $\psi_t$ and Assumption \ref{ass:MDx}(a) there exists a unique $\theta_{t|t-1}$ such that $\mu_{t|t-1}  =\psi'_t(\theta_{t|t-1})$.  

\begin{definition} Assume the sequence $\{\theta_{t|t-1}\}_{t=1}^T$ obeys Assumption \ref{ass:quasiReg}(a) and $\theta_{t|t-1}=(\psi')^{-1}(\mu_{t|t-1})$ from Assumption \ref{ass:MDx}(a).  Then define the oracle log-likelihood: 
\begin{align*}
l^*_t &= \sum_{j=1}^t \Delta l^*_j,\quad \Delta l^*_t = h_t(y_t)^{\tt T}\theta _{t|t-1} - \psi_t(\theta _{t|t-1}),\quad t=1,...,T.
\end{align*}
\end{definition}

Under Assumptions \ref{ass:quasiReg}(a) and \ref{ass:MDx}(a), define the two sequences 
\begin{align*}
M_t &= \sum_{j=1}^t \{l_j - \mathbb{E}[l_j^*|Y_{1:j-1}]\}, \quad t=1,2...,T,\\
C_t &= \sum_{j=1}^t \{\mathbb{E}[l_j|Y_{1:j-1}]-\mathbb{E}[l^*_j|Y_{1:j-1}]\}. 
\end{align*}
then 
$\{M_t\}_{t=1}^T$ is a supermartingale with respect to the data's natural filtration. The $\{C_t\}_{t=1}^T$ is a previsible drift with increments
$$
c_t = \psi'(\theta _{t|t-1})^{\tt T} \{\widetilde{\theta }_{t|t-1}-\theta_{t|t-1}\} - [\psi\{\widetilde{\theta }_{t|t-1}\}
-\psi\{\theta _{t|t-1}\}] \le 0,
$$
by convexity of $\psi$ from Definition \ref{ex:EF} and then using p.69 of \cite{BoydVandenberghe(04)}.  The equality is only obtained iff 
$
\widetilde{\theta }_{t|t-1} = \theta _{t|t-1},
$
that is only if the working model has the correct predictive mean.  Thus under Assumptions \ref{ass:quasiReg}(a) and \ref{ass:MDx}(a) the log-likelihood process $\{l_t\}_{t=1}^T$ can be viewed as a quasi-likelihood process. 

If the magnitude of $\widetilde{\theta }_{t|t-1}-\theta_{t|t-1}$ is small, then $$c_t \approx -\frac{1}{2}(\widetilde{\theta }_{t|t-1}-\theta_{t|t-1})^{\tt T} \psi_t''(\theta_{t|t-1}) (\widetilde{\theta }_{t|t-1}-\theta_{t|t-1}).$$ 

More broadly, under Assumptions \ref{ass:quasiReg}(a) and \ref{ass:MDx}(a), the $\{M_t-C_t\}_{t=1}^T$ is a martingale.  Further, under Assumptions \ref{ass:quasiReg}(b) and \ref{ass:MDx}(b) the angle bracket of the log-likelihood process  
\begin{align*}
\langle l\rangle_t & := \sum_{j=1}^t {\mathrm V}(l_j|Y_{1:j-1}) = \sum_{j=1}^t  \widetilde{\theta }_{j|j-1}^{\tt T}\Sigma_{j|j-1}\widetilde{\theta }_{j|j-1}
= \langle M-C\rangle_t,\quad t=1,...,T
\end{align*}
the angle-bracket process of the $\{M_t-C_t\}_{t=1}^T$ process.  For all cases where $\langle M-C\rangle_T\rightarrow \infty$ then $(M_T-C_T)/\langle M-C\rangle_T \rightarrow 0$ almost surely, by the martingale strong law of large numbers.    

\begin{remark} (a) This quasi-likelihood interpretation is unsurprising, falling in line with the history of generalized linear models, e.g. \cite{Wedderburn(74)} and \cite{McCullaghNelder(89)}.  More broadly, quasi-likelihood estimation theory goes back at least to \cite{Cox(61)}, \cite{Huber(67)}, \cite{Gallant(87)} and \cite{White(82)}.

(b) The scaled $\{C_t(\omega)\}_{t=1}^T$ process measures how the scaled working log-likelihood $l_T$ minus the scaled oracle $l^*_T$ drifts downwards as $T$ increases.

\end{remark}

\subsection{Maximum likelihood estimation}

Now turn to the maximum likelihood estimator of the hyperparameters based on the above exponential family quasi-likelihood, where the predictor comes from exponential weighting and uses a {\tt CEF} frame.  

\begin{assumption}\label{ass:psuedo} Based on a ${\tt CEF}(\theta,h_t,\psi_t)$ frame, write $\widetilde{\mu}_{t|t-1}(\omega)$ as the predictor using some finite dimensional hyperparameter vector $\omega \in \Omega$. The hyperparameter vector may include any static parameters $\phi \in \Phi$ from the model density. Assume there exists a $\omega^*$ such that 
$$\widetilde{\mu}_{t|t-1}(\omega^*)=\widetilde{\mu}_{t|t-1},\quad \text{for all} \quad t=1,...,T,
$$
in the data and that $\widetilde{\mu}_{t|t-1}(\omega^*)\ne\widetilde{\mu}_{t|t-1}$ for all $\omega \in (\Omega \setminus \omega^*)$.  Then refer to $\omega^*$ as the pseudo-true or oracle value.  
\end{assumption}

The leading version of this is given in Example \ref{ref:hyper}.

\begin{example}\label{ref:hyper} For a strictly stationary stochastic process $\{Y_t\}_{t\ge 1}$, assume that
$\mathbb{E}[h(Y_1)]$ exists and lies in $\mathcal{H} := h(\mathcal{Y}) \subseteq \mathbb{R}^k$. For the exponentially weighted estimands under the stable frame minimal ${\tt CEF}(\theta,h,\psi)$, in steady state, 
$$
\widetilde{\mu}_{t|t-1}(\omega) = \frac{(1-\alpha)(1-\lambda)}{1-\alpha(1-\lambda)}\mathbb{E}[h(Y_1)] + \frac{\alpha \lambda(1-\lambda)}{1-\alpha(1-\lambda)} h(Y_{t-1}) + \lambda \widetilde{\mu}_{t-1|t-2},
$$ following Remark~\ref{remark:SS CEF}(e).
Here the hyperparameters are
$$
\omega := (\mathbb{E}[h(Y_1)]^{\tt T},\alpha, \lambda)^{\tt T} \in \Omega^* = \mathcal{H}\times (0,1)\times(0,1).
$$  
The $\widetilde{\mu}_{t|t-1}(\omega)$ is infinitely differentiable with respect to $\omega$, linear in $h(y_1),...,h(y_{t-1})$ and $\mathbb{E}[h(Y_1)]$, but nonlinear in $\lambda$.   
More generally, if the model density contains additional static parameters $\phi$ (e.g., a static variance parameter $\sigma^2$ in the Gaussian case that is not modeled by the exponentially weighted estimand), these can be included in the hyperparameter vector
$$
\omega := (\mathbb{E}[h(Y_1)]^{\tt T},\alpha, \lambda, \phi^{\tt T})^{\tt T} \in \Omega^* = \mathcal{H}\times (0,1)\times(0,1) \times \Phi.
$$
Such static parameters enter the $l_t(\omega)$ but do not affect the structure of the predictor $\widetilde{\mu}_{t|t-1}$.

\end{example}

The resulting maximum likelihood estimate formed from the quasi-likelihood is: 
$$
\hat{\omega}_{\tt MLE} = \underset{\omega \in \Omega}{\arg }\max \ l_T(\omega),  
$$ 
noting explicitly how the quasi-likelihood is impacted by the choice of $\omega$ through $\{\widetilde{\theta}_{t|t-1}(\omega)\}_{t=1}^T$. Under Assumption \ref{ass:psuedo}, the corresponding $\omega^*$ is the pseudo-true value of $\omega$ for this quasi-likelihood. 

Throughout we will posit that $\widetilde{\theta}_{t|t-1}(\omega)$ is infinitely differentiable for all $\omega \in \Omega^*$ and write 
$$
\widetilde{\mu}'_{t|t-1}(\omega) =  \frac{\partial \widetilde{\mu}_{t|t-1}(\omega)^{\tt T}}{\partial \omega}.
$$

The corresponding score up to time $1 \le t\le T$ is 
\begin{align*}
S_t(\omega;y_{1:t}) &:= \frac{\partial l_t(\omega)}{\partial \omega}\\
&= \sum_{j=1}^t s_j(\omega), \quad 
s_t(\omega):= \frac{\partial \Delta l_t(\omega)}{\partial \omega}
= \frac{\partial \widetilde{\theta}_{t|t-1}(\omega)^{\tt T}}{\partial \omega}\frac{\partial \Delta l_t(\omega)}{\partial \widetilde{\theta}_{t|t-1}(\omega)}.
\end{align*}
Assume that for all $\omega \in \Omega$ that $\widetilde{\Sigma}_{t|t-1}(\omega)=\psi''(\widetilde{\theta}_{t|t-1}(\omega))$ is invertible.  Then 
$$
s_t(\omega) 
= \widetilde{\theta}'_{t|t-1}(\omega)\{h(y_t) -\widetilde{\mu}_{t|t-1}(\omega) \}, \text{where } \widetilde{\theta}'_{t|t-1}(\omega):= \frac{\partial \widetilde{\theta}_{t|t-1}(\omega)^{\tt T}}{\partial \omega} = \widetilde{\mu}'_{t|t-1}(\omega)\widetilde{\Sigma}^{-1}_{t|t-1}(\omega),
$$
recalling $\widetilde{\Sigma}_{t|t-1}(\omega):= \psi''(\widetilde{\theta}_{t|t-1}(\omega))$.  The Hessian is 
$$
H_t(\omega) = -\sum_{j=1}^t l_j''(\omega),\quad \text{where} \quad l_t''(\omega)=\frac{\partial^2 l_t(\omega)}{\partial \omega \partial \omega^{\tt T}}.
$$

\begin{remark} (a) Let $s_t := s_t(\omega^*;Y_{1:t})$, and $S_t = \sum_{j=1}^t s_j$.  Then $\{S_t\}_{t=1}^T$ is a martingale sequence with respect to the data's natural filtration so long as $\mathbb{E}[|s_t|]<\infty.$  Under the \cite{Brown(71)} martingale central limit theorem the 
$$
\langle S,S\rangle_T^{-1/2} S_T \xrightarrow{D} N(0,I),
$$
noting that 
$\langle S,S\rangle_T = 
\sum_{t=1}^T \widetilde{\theta}'_{t|t-1} \Sigma_{t|t-1} (\widetilde{\theta}'_{t|t-1})^{\tt T}$, where $\widetilde{\theta}'_{t|t-1}:=\widetilde{\theta}'_{t|t-1}(\omega^*)$ and $\Sigma_{t|t-1}:=\Sigma_{t|t-1}(\omega^*)$.

(b) By a multivariate mean value expansion $0 = S_T - \bar{H}_T (\hat{\omega}_{\tt MLE}-\omega^*)$,
where $\bar{H}_T = \int_0^1 H_T(\omega^* + u(\hat{\omega}_{\tt MLE} - \omega^*)) \, du$. Hence if $\langle S,S \rangle_T$ is invertible then 
\begin{align*}
\{\langle S,S\rangle_T^{-1/2} \bar{H}_T \}(\hat{\omega}_{\tt MLE} - \omega^*) &= \langle S,S\rangle_T^{-1/2} S_T 
\xrightarrow{D} N(0,I).
\end{align*}
If $\bar{H}_T$ is invertible, then 
$
W_T = \bar{H}^{-1}_T \langle S,S\rangle_T \bar{H}^{-1}_T,
$
is an infeasible approximation to the variance-covariance matrix of the estimator.

(c) In practice we use the estimator 
$
\hat{W}_T = H^{-1}_T(\hat{\omega}_{\tt MLE}) [\hat{S},\hat{S}]_T H^{-1}_T(\hat{\omega}_{\tt MLE}),$ where $ [\hat{S},\hat{S}]_T=\sum_{t=1}^T \hat{s}_t \hat{s}_t^{\tt T},$ with $\hat{s}_t = s_t(\hat{\omega}_{\tt MLE};Y_{1:t}),
$
as a feasible approximation to the variance-covariance matrix of the estimator.  
Hence this is a sandwich matrix of the tradition we see in quasi-likelihood estimation, going back to at least \cite{Cox(61)} and \cite{Huber(67)}.

\end{remark}

\subsection{Two step alternative to the MLE}

An alternative to the MLE is a two step procedure (e.g. \cite{NeweyMcFadden(94)}, \cite{EngleMezrich(96)}, \cite{FrancqHorvathZakoian(13)}), which has the following structure. 

\begin{definition}[2-step estimator]\label{def:2-step estimator} For a strictly stationary stochastic process $\{Y_t\}_{t\ge 1}$, assume $\mathbb{E}[h(Y_1)] \in \mathcal{H}$ exists.  For the exponentially weighted estimands under the stable frame minimal ${\tt CEF}(\theta,h,\psi)$, let $\phi$ denote any additional static parameters from the model density. Write  
$$
\hat{\omega}_{\tt 2Step} = \left(\begin{matrix}
\widehat{\mathbb{E}[h(Y_1)]}_{\tt 2Step} \\
\widehat{\alpha}_{\tt 2Step}\\
\widehat{\lambda}_{\tt 2Step}\\
\widehat{\phi}_{\tt 2Step}
\end{matrix}\right). 
$$ Then compute: 
\begin{enumerate}
\item The method of moments estimator:
$$
\widehat{\mathbb{E}[h(Y_1)]}_{\tt 2Step} = \frac{1}{T} \sum_{t=1}^T h(Y_t).
$$

\item The likelihood based estimator:  
$$
\{\widehat{\alpha}_{\tt 2Step},\widehat{\lambda}_{\tt 2Step}, \widehat{\phi}_{\tt 2Step}\} =
\underset{\{\alpha,\lambda,\phi\} \in (0,1)^2 \times \Phi}{\arg }\max \ \sum_{t=1}^T l_t(\widehat{\mathbb{E}[h(Y_1)]}_{\tt 2Step},\alpha,\lambda,\phi),
$$
\end{enumerate}
\end{definition}

Here the numerical optimization of the likelihood is only $2+\dim(\phi)$ dimensional. This estimation strategy can be very attractive when $\mathbb{E}[h(Y_1)]$ is high dimensional.

The two step procedure can be viewed as a method of moments estimator \citep{Pearson(1894)}, based around a random function 
$$
S_t(\omega;y_{{1:t}}) := \sum_{j=1}^t s_j(\omega;y_{1:j}),\quad s_t(\omega;y_{1:t}) := \left(\begin{matrix} h(Y_t) - \mathbb{E}[h(Y_1)] \\
\partial l_t(\omega)/\partial \alpha  \\
\partial l_t(\omega)/\partial \lambda \\
\partial l_t(\omega)/\partial \phi 
\end{matrix}  \right)
$$
for $t=1,...,T,$ 
then $\hat{\omega}_{\tt 2Step}$ is a method of moments estimator which solves $S_T(\hat{\omega}_{\tt 2Step};Y_{1:T})=0,$ while $\mathbb{E}[s_t(\omega^*;Y_{1:t})]=0$ for each $t=1,...,T$.  The corresponding Hessian is 
$$
H_T(\omega) := -\sum_{t=1}^T \frac{\partial s_t(\omega)}{\partial \omega^{\tt T}} = \sum_{t=1}^T \left( \begin{matrix} I \ \ 0 \ \ 0 \\
-\partial^2 l_t(\omega)/\partial \alpha \partial \omega^{\tt T}  \\
-\partial^2 l_t(\omega)/\partial \lambda \partial \omega^{\tt T}  \\
-\partial^2 l_t(\omega)/\partial \phi \partial \omega^{\tt T}  \\
\end{matrix}\right).
$$

\begin{remark} (a) Write $s_t = s_t(\omega^*;Y_{1:t})$ and $S_t = S_t(\omega^*;Y_{1:t})$. The last three elements of $\{s_t\}$ are martingale differences.  The remaining elements, $h(Y_t)-\mathbb{E}[h(Y_1)]$, have an unconditional zero mean but possibly substantial time series memory. 
Write out a corresponding CLT as  
$$
V_T^{-1/2} S_T(\omega^*) \xrightarrow{D} N(0,I),\quad \text{where} \quad V_T = {\mathrm V}(S_T(\omega^*)),
$$
which needs the memory in $\{s_t\}$ to be controlled.  The martingale difference elements have limited memory, the problem is the time series of $\{h(Y_t)-\mathbb{E}[h(Y_1)]\}$.  A basic way of generating a CLT for these type of objects is to assume the $h(Y_t)$ series is strictly stationary and exhibits $m$-dependence (e.g. \cite{janson2021central}). 

(b) By a multivariate mean value expansion $0 = S_T - \bar{H}_T (\hat{\omega}_{\tt 2Step} - \omega^*),$ where $\bar{H}_T = \int_0^1 H_T(\omega^* + u(\hat{\omega}_{\tt 2Step} - \omega^*)) \, du$. Thus 
\begin{align*}
V_T^{-1/2} \bar{H}_T (\hat{\omega}_{\tt 2Step} - \omega^*) &= V_T^{-1/2} S_T 
\xrightarrow{D} N(0,I).
\end{align*}
So the infeasible variance matrix for the two step estimator is 
$
W_T = \bar{H}^{-1}_T V_T \bar{H}^{-1}_T.
$

(c) In practice we use the estimator of the covariance matrix
$
\hat{W}_T = H^{-1}_T(\hat{\omega}_{\tt 2Step}) \hat{V}_T H^{-1}_T(\hat{\omega}_{\tt 2Step})^{\tt T},
$
where $\hat{V}_T$ approximates ${\mathrm V}(\hat{S}_T)$.  The latter can be estimated using $T$ times a long-run variance of the time series $\hat{s}_1,...,\hat{s}_T$, where $\hat{s}_t = s_t(\hat{\omega}_{\tt 2Step};Y_{1:t})$. 
\end{remark}

\section{Empirical example}\label{Sec: Empirical example}

\subsection{Dirichlet based frame: household financial situation}\label{sec:household}

We apply the exponentially weighted predictor and smoother to monthly data on household financial situation expectations from the University of Michigan Survey of Consumers, covering January 1978 to September 2025 ($T = 573$ observations). The survey asks respondents two questions: first, whether they are better off or worse off financially than a year ago, and second, whether they expect to be better off or worse off a year from now. The combination of responses yields seven mutually exclusive categories:
\vspace{-2mm}
\begin{enumerate}[label=(\roman*), itemsep=0pt, parsep=0pt]
    \item Continuous Increase (better off now and expect to be better off),
    \item Intermittent Increase (one period better, one period same),
    \item Remain Unchanged (same in both periods),
    \item Intermittent Decline (one period worse, one period same),
    \item Continuous Decline (worse off now and expect to be worse off),
    \item Mixed Change (improvement followed by decline or vice versa),
    \item Don't Know/No answer.
\end{enumerate}
\vspace{-2mm}

For each month, only the sample proportions (percentages) falling into each category are reported, not the underlying individual responses or sample sizes.\footnote{While monthly sample sizes are available, our framework would require within-month i.i.d. 
responses, which is unlikely given the clustered 
sampling design. We leave them out of our analysis.}

We model the observed proportions $y_t = (y_{1,t}, \ldots, y_{7,t})^{\tt T}$ as realizations from a time-varying seven category Dirichlet distribution 
$
Y_t \mid Y_{1:t-1} \sim \text{Dirichlet}(\widetilde{\theta}_{t|t-1}),$ with $ \widetilde{\theta}_{t|t-1} = (\widetilde{\theta}_{1,t|t-1}, \ldots, \widetilde{\theta}_{7,t|t-1})^{\tt T},
$
with concentration parameters $\widetilde{\theta}_{j, t|t-1} > 0$ for all $j \in \{1, \ldots, 7\}$ and all $t$. Each observation is treated as a single draw of a proportion vector. The predictor is based on a Dirichlet frame. 

The exponentially weighted predictor and smoother are computed via the recursions in Example~\ref{thm:MLE} using the Dirichlet distribution's CEF parameterization given in Section~\ref{sec:simulating CEF examples}. 
Following the two-step quasi-likelihood procedure from Definition~\ref{def:2-step estimator}, we  estimate the centering parameter $\mathbb{E}[h(Y_1)]$ by the sample mean $\widehat{\mathbb{E}[h(Y_1)]} = T^{-1} \sum_{t=1}^T (\log y_{1,t}, \ldots, \log y_{7,t})^{\tt T}.$ Next, we fix $\widehat{\mathbb{E}[h(Y_1)]}$ and estimate $(\alpha, \lambda)$ as maximizing 
$\sum_{t=1}^T \log f(y_t; \widetilde{\theta}_{t|t-1}(\alpha, \lambda)).$

For the household financial situation data with seven categories, we obtain the parameter estimates $(\widehat{\mathbb{E}[h(Y_1)]}^{\tt T}, \widehat{\alpha}, \widehat{\lambda}) \approx (-1.76, -1.41, -1.78, -1.77, -2.73, -2.23, -3.53, 0.95, 0.64)$. The high anchoring estimate $\widehat{\alpha} = 0.95$ indicates little long-run effect, suggesting that shifts in household financial expectations, once they occur, tend to persist a long time. The discount parameter $\widehat{\lambda} =0.64$ implies a half-life of approximately $1.56$ months for the exponential weights. Following Remark~\ref{remark:SS CEF}(e), the steady state process is a vector {\tt ARMA}(1,1)-MD process, with autoregressive root $\widehat{\lambda}/\{1-\widehat{\alpha} (1-\widehat{\lambda})\}\approx 0.97$, and moving average root $-\widehat{\lambda} = -0.64$. Hence, the process exhibits substantial memory, with substantial root cancellation.

Figure~\ref{fig:household_financial_results} presents the estimation results. Each panel shows the evolution of one expectation category over the 47-year period. The gray circles represent observed monthly proportions $y_{k,t}$. The light colored lines show the exponentially weighted predictor, the dark colored lines the exponentially weighted smoother. Both track major shifts in household financial expectations. There are the sharp declines in both ``Continuous Increase'' and ``Intermittent Increase'' during the 2008 financial crisis, and, more strikingly, since around 2018. This deterioration in optimism about household finances is mirrored by corresponding increases in negative expectations: ``Intermittent Decline'' and ``Continuous Decline'' roughly track their inverse patterns. The categories ``Remain Unchanged'' and ``Don't Know/No Answer'' stay relatively stable at 15-20\% and 3\%, respectively, throughout the sample. Appendix~\ref{app:montecarlo} uses the empirical results to provide simulation based evidence for the performance of the hyperparameter estimators.

\begin{figure}[htbp] 
\centering 
\begin{tabular}{@{}c@{\hspace{0.2cm}}c@{\hspace{0.2cm}}c@{\hspace{0.2cm}}c@{}}
\textbf{Cont. Increase} & \textbf{Int. Increase} & \textbf{Unchanged} & \textbf{Int. Decline} \\
\includegraphics[width=0.24\linewidth]{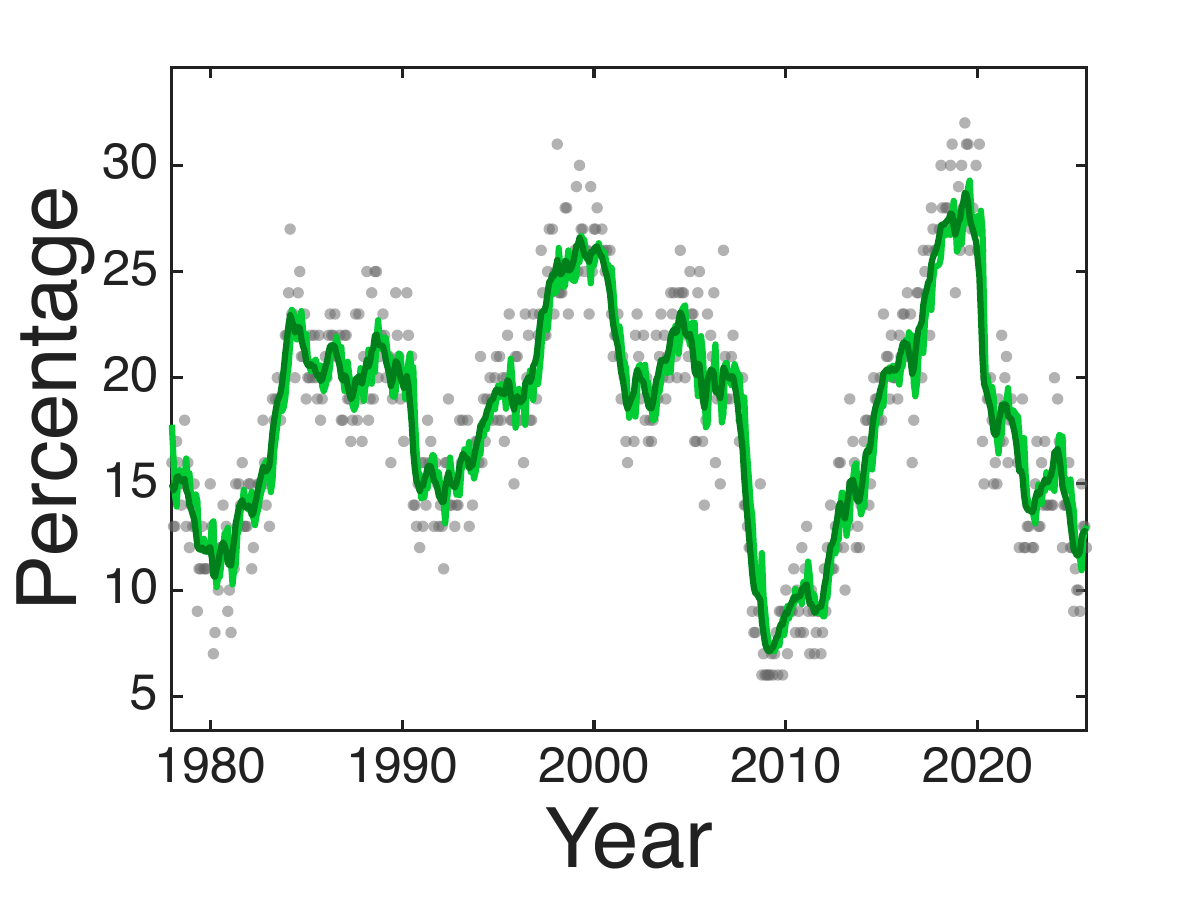} & 
\includegraphics[width=0.24\linewidth]{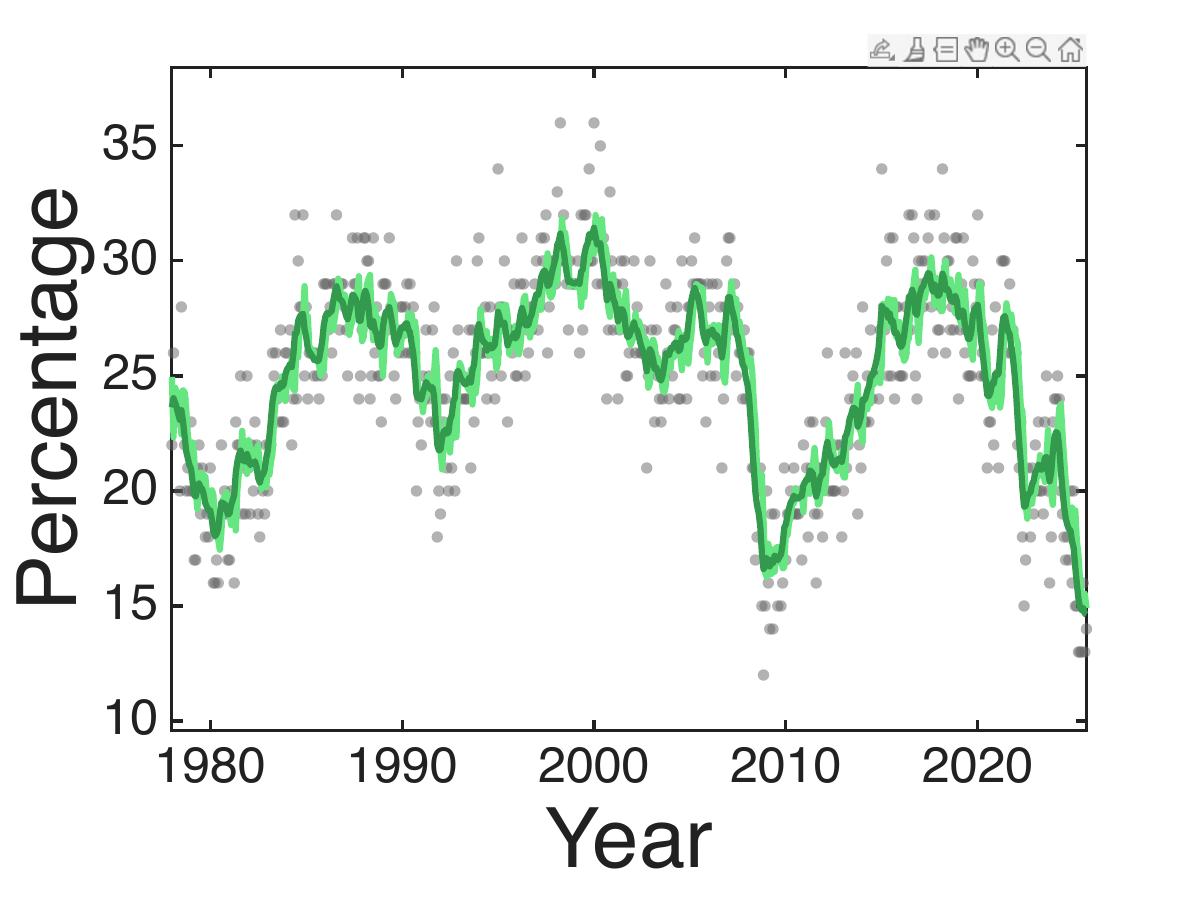} &
\includegraphics[width=0.24\linewidth]{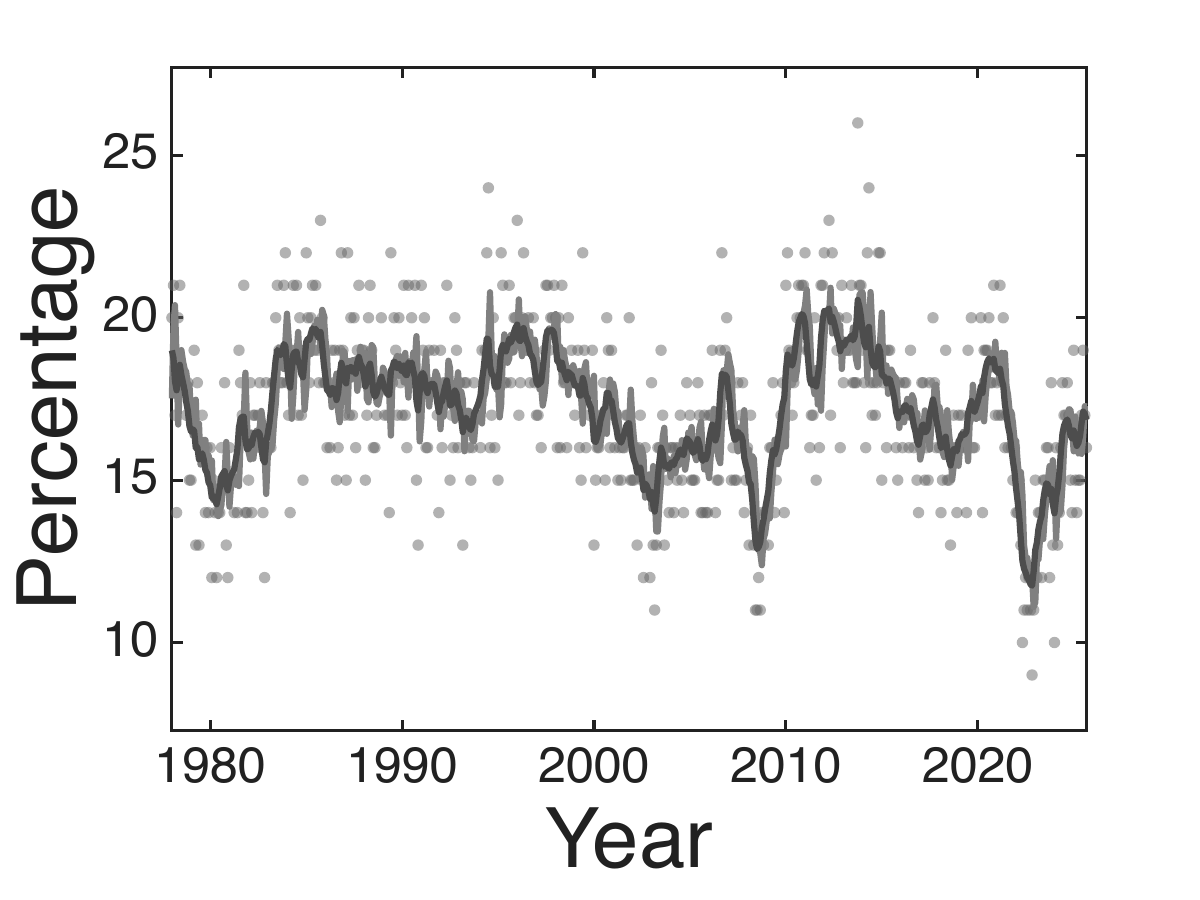} &
\includegraphics[width=0.24\linewidth]{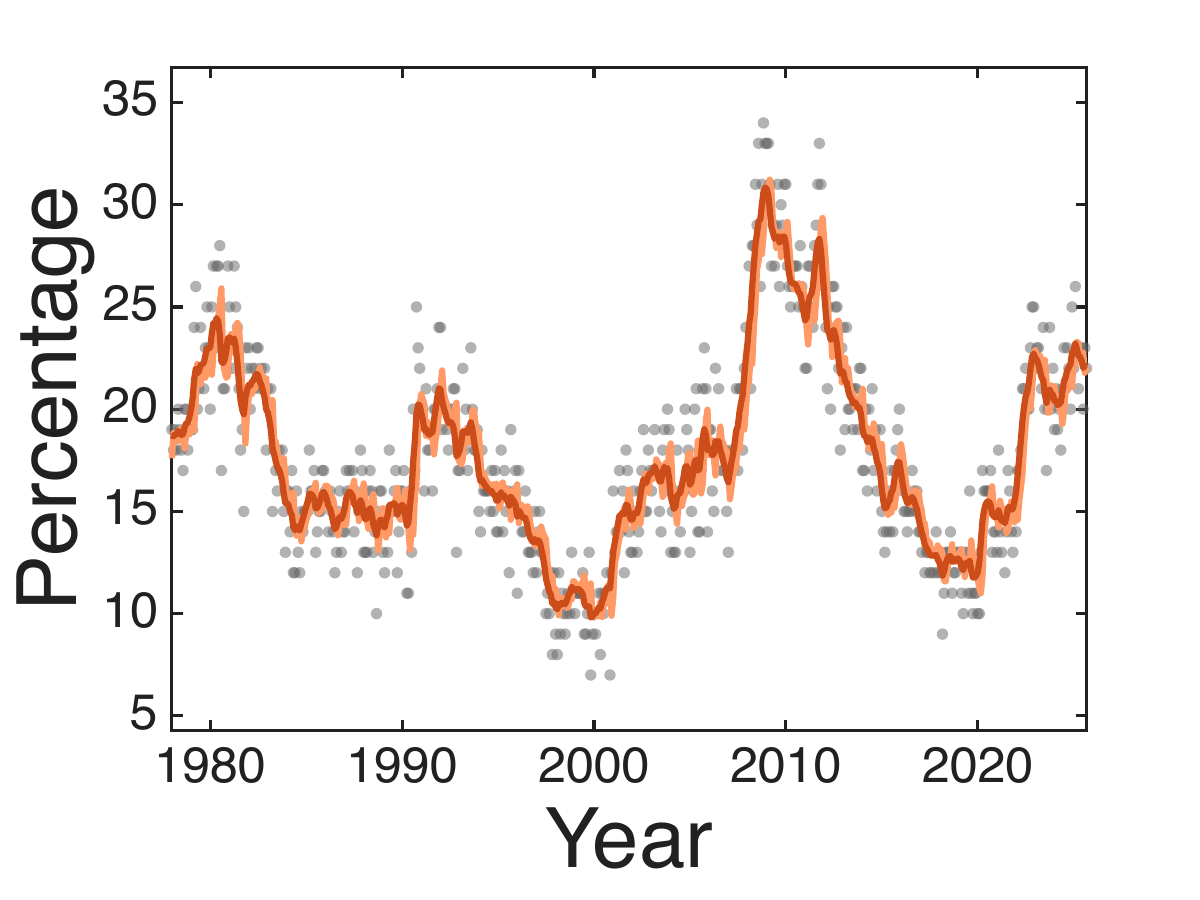} \\[3mm]
\textbf{Cont. Decline} & \textbf{Mixed} & \textbf{DK/NA} \\
\includegraphics[width=0.24\linewidth]{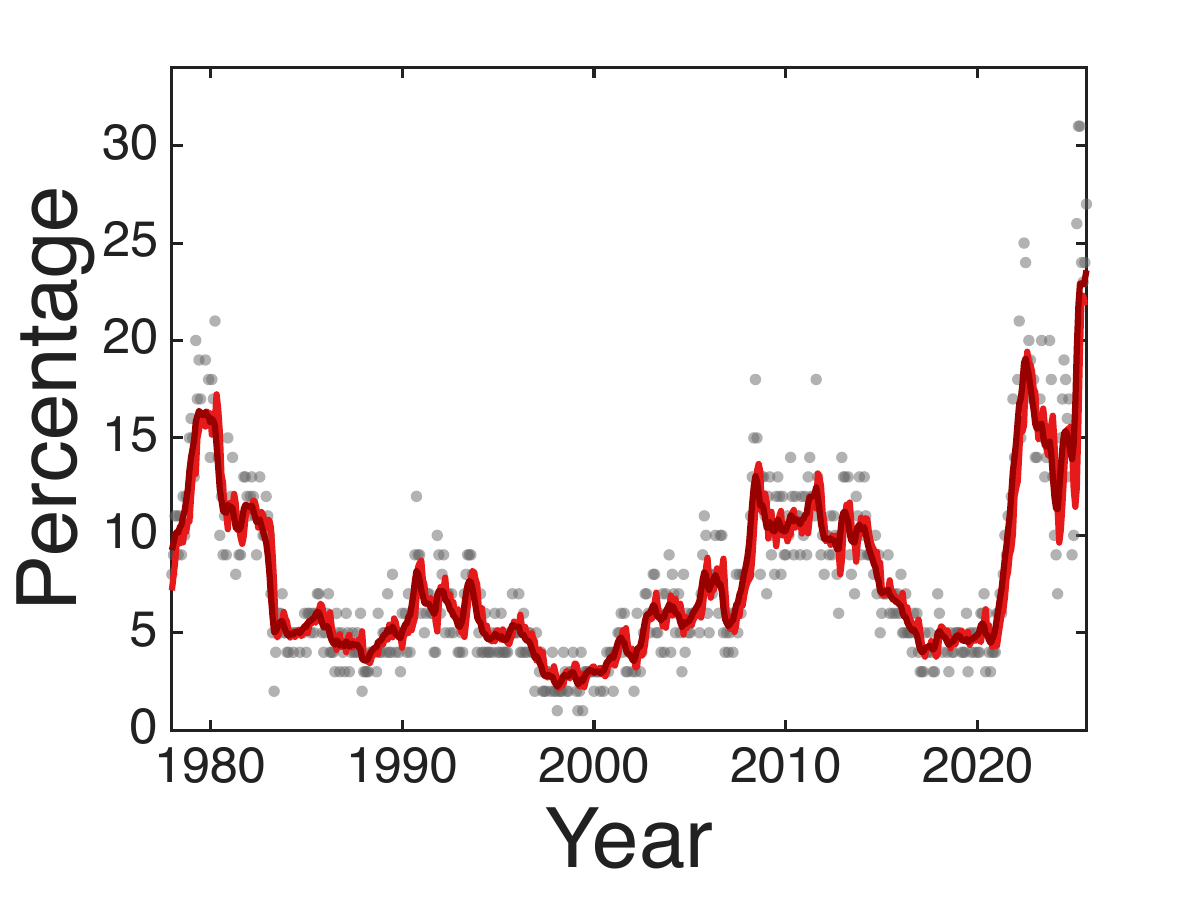} &
\includegraphics[width=0.24\linewidth]{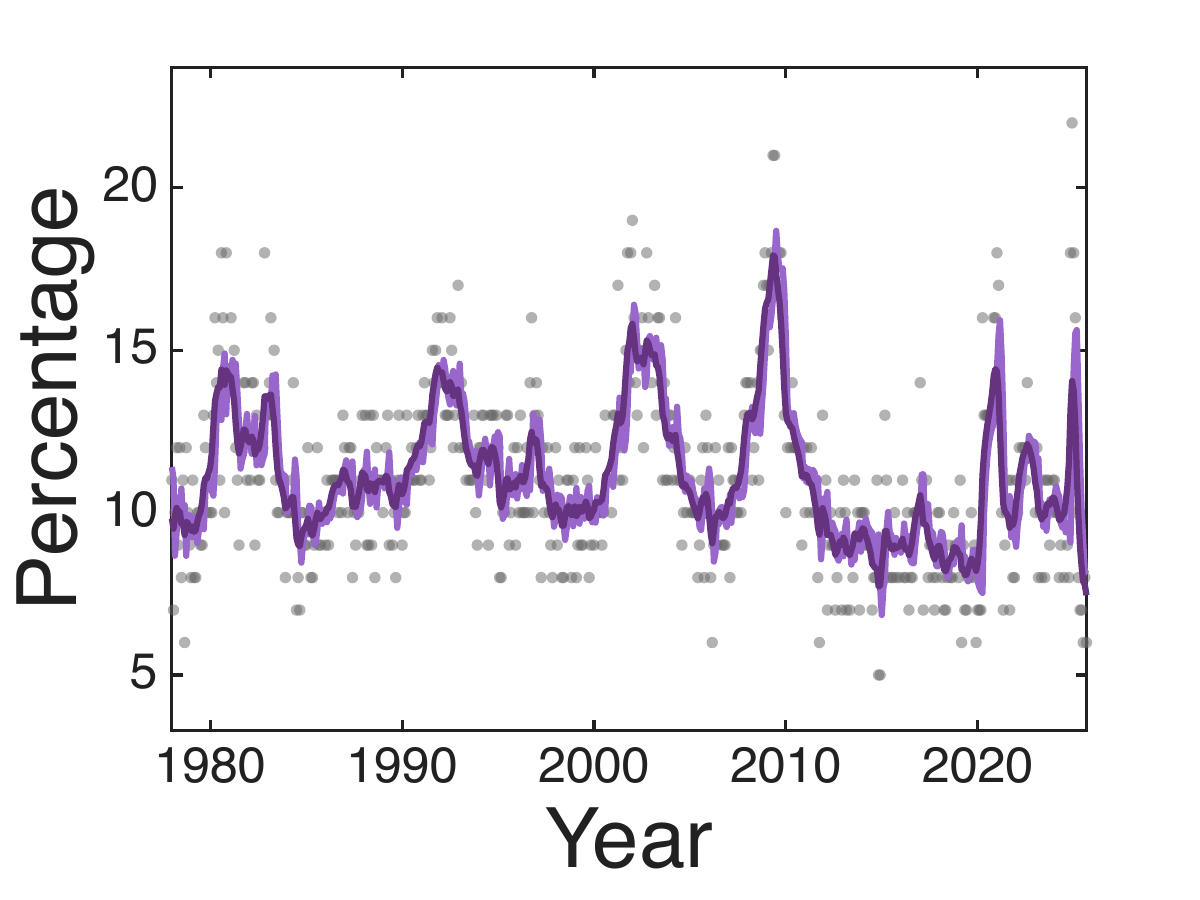} &
\includegraphics[width=0.24\linewidth]{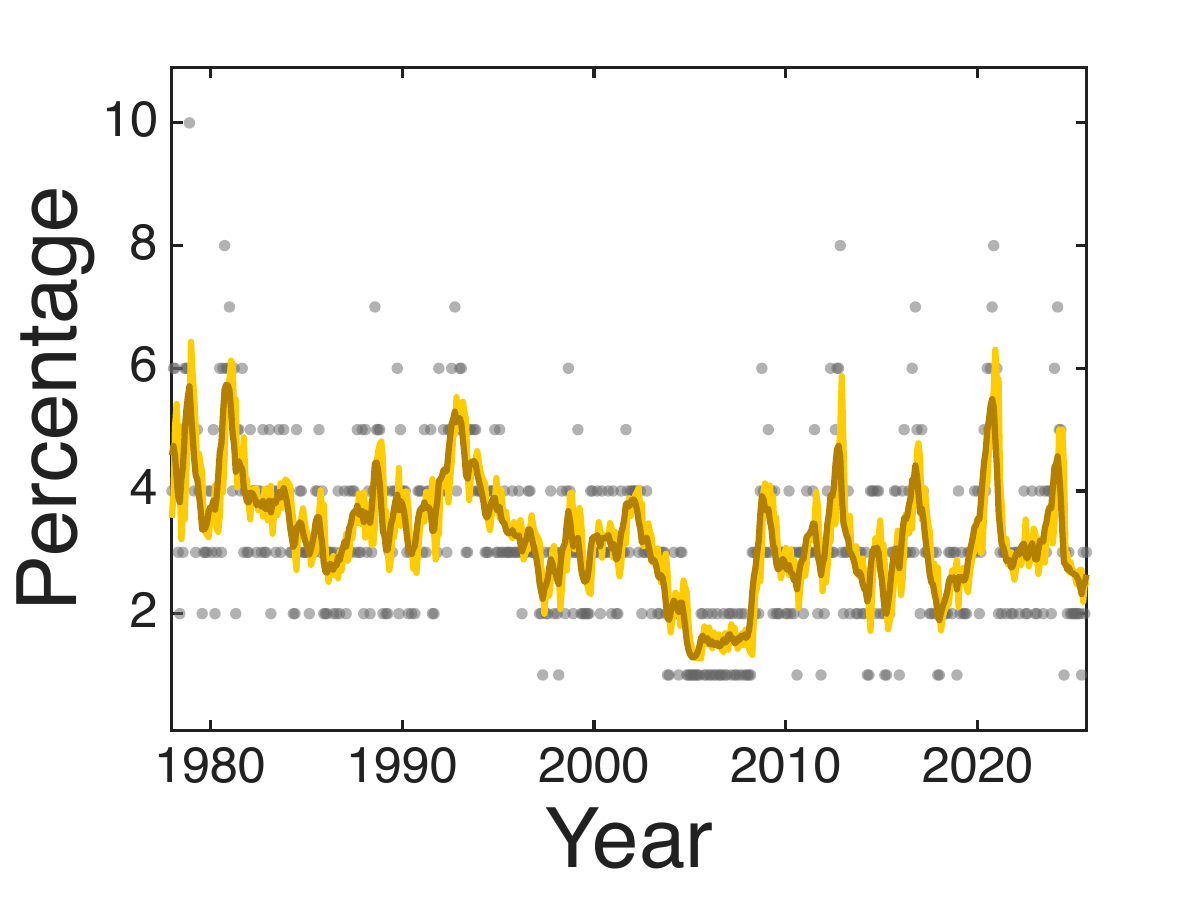}
\end{tabular} 
\vspace{-3mm} 
\caption{Household financial situation:  University of Michigan Survey of Consumers (Jan 1978 to Sept 2025). Categories combine current versus past financial situation, and expectations for the year ahead (Cont. = Continuous, Int. = Intermittent, DK = Don't Know, NA = No Answer): Continuous Increase means better off now and expecting better, Intermittent Increase indicates improvement in one period only. Each panel shows, for one response category, the observed monthly proportions $y_{k,t}$ (gray circles), the predictor $y_{k, t|t-1} = \widetilde{\theta}_{k, t|t-1}/\sum_{j=1}^7 \widetilde{\theta}_{j, t|t-1}$ (light line) and the smoother $y_{k, t|T} = \widetilde{\theta}_{k, t|T}/\sum_{j=1}^7 \widetilde{\theta}_{j, t|T}$ (dark line). Estimates: $(\widehat{\alpha}, \widehat{\lambda}) = (0.95, 0.64).$}
\label{fig:household_financial_results}
\end{figure}

\section{Conclusions}\label{sect:conc}

The exponentially weighted estimands combined with the {\tt CEF} delivers simple exact filters, predictors and smoothers.  They are in the same statistical spirit as generalized linear models (e.g. \cite{McCullaghNelder(89)}) and are relatively easy to fit to data using a quasi-likelihood approach either all at once or using a two step approach.       

The exponentially weighted estimands framework can be used on non-exponential family models and even for general loss functions.  In such cases the optimization has to be carried out numerically or 
through a sequential approximation.



\section*{Acknowledgments}
We are grateful for the comments and encouragement from Dick van Dijk, Siem Jan Koopman, Rutger-Jan Lange and Bram van Os.\ Donker van Heel conducted this research during a visiting research appointment at Harvard University's Department of Statistics, supported by the Fulbright Program. The authors report there are no competing interests to declare.

\bibliographystyle{chicago}
\bibliography{neil.bib, Simon.bib}

@STRING{American="The American Statistican"}

@STRING{Annals="Ann. Statist."}

@preamble{ "\providecommand\SortNoop[1]{}" }

@STRING{Annals="Annals of Statistics"}

@STRING{OUP="Oxford University Press"}

@article{Davis03072021,
author = {Richard A. Davis and Konstantinos Fokianos and Scott H. Holan and Harry Joe and James Livsey and Robert Lund and Vladas Pipiras and Nalini Ravishanker},
title = {Count Time Series: A Methodological Review},
journal = {Journal of the American Statistical Association},
volume = {116},
number = {535},
pages = {1533--1547},
year = {2021}}

@article{lange2024robust,
  title={Implicit score-driven filters for time-varying parameter models},
  author={Lange, Rutger-Jan and van Os, Bram and van Dijk, Dick JC},
  journal={\textit{Preprint}},
  year={2024},
  note={\url{https://arxiv.org/abs/2512.02744}}
}

@article{donkervanheel2025stability,
  title={Gradient-based filtering under misspecification: Stability and error bounds},
  author={Donker van Heel, Simon and Lange, Rutger-Jan and van Os, Bram and van Dijk, Dick J C},
  journal={\textit{Preprint}},
  year={2025},
  note={\url{https://arxiv.org/abs/2502.05021}}
}

@article{luxenberg2024exponentially,
  title={Exponentially Weighted Moving Models},
  author={Luxenberg, Eric and Boyd, Stephen},
  journal={\textit{Preprint}},
  year={2024},
  note={\url{https://arxiv.org/abs/2404.08136}}
}

@article{janson2021central,
  title={A central limit theorem for m-dependent variables},
  author={Janson, Svante},
  journal={\textit{Preprint}},
  year={2021},
  note={\url{https://arxiv.org/abs/2108.12263}}
}

@article{toulis2017asymptotic,
  title={Asymptotic and finite-sample properties of estimators based on stochastic gradients},
  author={Toulis, Panos and Airoldi, Edoardo M},
  journal={Annals of Statistics},
  volume={45},
number={4},
  pages={1694--1727},
  year={2017}
}

@incollection{artemova2022score,
  title={Score-driven models: Methodology and theory},
  author={Artemova, Mariia and Blasques, Francisco and van Brummelen, Janneke and Koopman, Siem Jan},
  booktitle={Oxford Research Encyclopedia of Economics and Finance},
  year={2022},
  publisher={OUP}
}

@article{toulis2015scalable,
  title={Scalable estimation strategies based on stochastic approximations: {C}lassical results and new insights},
  author={Toulis, Panos and Airoldi, Edoardo M},
  journal={Statistics and Computing},
  volume={25},
  pages={781--795},
  year={2015},
  publisher={Springer}
}

@inproceedings{toulis2016towards,
  title={Towards stability and optimality in stochastic gradient descent},
  author={Toulis, Panos and Tran, Dustin and Airoldi, Edo M},
  booktitle={Artificial Intelligence and Statistics},
  pages={1290--1298},
  year={2016},
  organization={PMLR}
}

@article{toulis2021proximal,
  title={The proximal \uppercase{R}obbins--\uppercase{M}onro method},
  author={Toulis, Panos and Horel, Thibaut and Airoldi, Edoardo M},
  journal={Journal of the Royal Statistical Society Series B: Statistical Methodology},
  volume={83},
  number={1},
  pages={188--212},
  year={2021},
  publisher={OUP}
}

@article{blasques2016weighted,
  title={Weighted maximum likelihood for dynamic factor analysis and forecasting with mixed frequency data},
  author={Blasques, Francisco and Koopman, Siem Jan and Mallee, M and Zhang, Zhaoyong},
  journal={Journal of Econometrics},
  volume={193},
  number={2},
  pages={405--417},
  year={2016},
  publisher={Elsevier}
}

@article{dixon1997modelling,
  title={Modelling association football scores and inefficiencies in the football betting market},
  author={Dixon, Mark J and Coles, Stuart G},
  journal={Journal of the Royal Statistical Society: Series C (Applied Statistics)},
  volume={46},
  number={2},
  pages={265--280},
  year={1997},
  publisher={Wiley Online Library}
}

@article{hu2002weighted,
  title={The weighted likelihood},
  author={Hu, Feifang and Zidek, James V},
  journal={Canadian Journal of Statistics},
  volume={30},
  number={3},
  pages={347--371},
  year={2002},
  publisher={Wiley Online Library}
}

\clearpage
\appendix

\begin{center}
\Large \textbf{Online supplement to: \\
[6pt]
``Exponentially weighted estimands and the exponential family: \\
Filtering, prediction and smoothing''}\\[4pt]
\normalsize
S. W. Donker van Heel and N. Shephard
\end{center}

\pagenumbering{arabic}
\renewcommand*{\thepage}{S\arabic{page}}
\setcounter{page}{1}

\setcounter{equation}{0}
\renewcommand{\theequation}{\Alph{section}.\arabic{equation}}
\setcounter{table}{0}
\renewcommand{\thetable}{\Alph{section}.\arabic{table}}
\setcounter{figure}{0}
\renewcommand{\thefigure}{\Alph{section}.\arabic{figure}}

\section{Comparison to score-driven filters}\label{app:scoredriven-full}
 
We compare our predictor to the score-driven filter of \cite{Harvey(13)} and \cite{CrealKoopmanLucas(13)} for the univariate stable frame case with $\psi_t = n_t \psi$ and $n_t = 1$, assuming strictly stationary data (see Example~\ref{thm:MLE}).
 
For the ${\tt CEF}(\theta, h, \psi)$, the score is $\nabla_t(\theta) = h(y_t) - \psi'(\theta)$ and the Fisher information is $\mathcal{I}(\theta) = \psi''(\theta)$. With the most commonly used scaling for the score, inverse Fisher information scaling,\footnote{More generally, the score may be scaled by $[\mathcal{I}(\theta)]^{-\zeta}$ with $\zeta \in \{0, 1/2, 1\}$; see \cite{artemova2022score}. We focus on inverse Fisher scaling ($\zeta = 1$). For the Gaussian, $\psi''(\theta)$ is constant, so all three scalings are equivalent up to a rescaling of $A$.} the score-driven filter uses the prediction-to-prediction recursion
\begin{equation}\label{eq:SD recursion}
\theta^{\text{sd}}_{t|t-1} = \omega + A \, [\psi''(\theta^{\text{sd}}_{t-1|t-2})]^{-1} [h(y_{t-1}) - \psi'(\theta^{\text{sd}}_{t-1|t-2})] + \Phi \, \theta^{\text{sd}}_{t-1|t-2},
\end{equation}
where $\omega$, $A$, $\Phi$ are static hyperparameters (see e.g.~\citealp[Eq.\ 2]{CrealKoopmanLucas(13)}) and the superscript $\text{sd}$ denotes score driven. The predictor $\tilde{\theta}_{t|t-1} = (\psi')^{-1}(\bar{m}_{t|t-1})$ from Example~\ref{thm:MLE} admits a linear recursion in $\widetilde{\mu}_{t|t-1} = \psi'(\widetilde{\theta}_{t|t-1})$; in steady state, 
\begin{equation}\label{eq:our predictor SS}
\widetilde{\mu}_{t|t-1} = \underbrace{\frac{(1-\alpha)(1-\lambda)}{1 - \alpha(1-\lambda)}}_{=:\,c_1} \mathbb{E}[h(Y_1)] + \underbrace{\frac{\alpha\lambda(1-\lambda)}{1 - \alpha(1-\lambda)}}_{=:\,c_2} h(y_{t-1}) + \lambda \, \widetilde{\mu}_{t-1|t-2},
\end{equation}
where $c_1 + c_2 + \lambda = 1$. To compare, we transform~\eqref{eq:SD recursion} to $\mu$-space.
 
\noindent\textit{Equivalence in the Gaussian case.}  The Gaussian distribution with known variance $\sigma^2$ is the only ${\tt CEF}$ member for which $\psi'$ is linear. Here $h(y) = y/\sigma^2$, $\psi'(\theta) = \theta/\sigma^2$, and $\psi''(\theta) = 1/\sigma^2$. As the Fisher information is constant, the scaled score in~\eqref{eq:SD recursion} simplifies to $[\psi''(\theta)]^{-1}[h(y_{t-1}) - \psi'(\theta)] = y_{t-1} - \theta$. Dividing~\eqref{eq:SD recursion} by $\sigma^2$ and writing $\mu^{\text{sd}}_{t|t-1} = \theta^{\text{sd}}_{t|t-1}/\sigma^2$ gives
\begin{equation}\label{eq:SD mu recursion}
\mu^{\text{sd}}_{t|t-1} = \psi'(\omega) + A \, h(y_{t-1}) + (\Phi - A) \, \mu^{\text{sd}}_{t-1|t-2}.
\end{equation}
Comparing~\eqref{eq:SD mu recursion} to~\eqref{eq:our predictor SS}, we match unconditional means. The unconditional mean of~\eqref{eq:SD mu recursion}, obtained by setting $\mu^{\text{sd}}_{t|t-1} = \mu^{\text{sd}}_{t-1|t-2} = \bar{\mu}^{\text{sd}}$ and taking expectations, is $\bar{\mu}^{\text{sd}} = \{\psi'(\omega) + A \, \mathbb{E}[h(Y_1)]\}/(1 - \Phi + A)$. Setting this equal to $\mathbb{E}[h(Y_1)]$ yields $\psi'(\omega) = (1 - \Phi)\mathbb{E}[h(Y_1)]$, under which the intercept in~\eqref{eq:SD mu recursion} becomes $(1 - \Phi)\mathbb{E}[h(Y_1)]$. Matching the remaining coefficients with~\eqref{eq:our predictor SS} requires $\Phi - A = \lambda$, $A = c_2$, and $1 - \Phi = c_1$. From the first two, $\Phi = \lambda + c_2$, and substituting into the third gives $c_1 + c_2 + \lambda = 1$, which holds by construction. Hence the score-driven predictor~\eqref{eq:SD recursion} coincides with ours in steady state under $\Phi = \lambda + c_2$, $A = c_2$, and $\psi'(\omega) = (1 - \Phi)\mathbb{E}[h(Y_1)]$. The same conclusion holds for the implicit score-driven filter of \cite{lange2024robust}, which uses the recursion
\begin{equation}\label{eq:ISD recursion}
\theta^{\textnormal{isd}}_{t|t-1} = \omega + H_t \, [h(y_{t-1}) - \psi'(\theta^{\textnormal{isd}}_{t-1})] + \Phi \, \theta^{\textnormal{isd}}_{t-1|t-2},
\end{equation}
where the score is evaluated at the filtered value $\theta^{\textnormal{isd}}_{t-1}$ rather than the predicted $\theta^{\textnormal{isd}}_{t-1|t-2}$. In the Gaussian case, $\psi'$ is linear and the filtered value can be solved analytically, yielding $\mu^{\textnormal{isd}}_{t-1} = \{\mu^{\textnormal{isd}}_{t-1|t-2} + (H_t/\sigma^2) \, h(y_{t-1})\}/(1 + H_t/\sigma^2)$. Substituting into~\eqref{eq:ISD recursion} and matching coefficients with~\eqref{eq:our predictor SS} gives $\Phi = \lambda + c_2$, $H_t = c_2 \sigma^2 / (1 - c_2)$ for all $t$, and $\psi'(\omega) = (1 - \Phi)\mathbb{E}[h(Y_1)]$. As noted by \cite{lange2024robust}, the explicit and implicit score-driven filters coincide in the Gaussian case for appropriately chosen hyperparameters.

\noindent\textit{Differences out of steady state.} From Example~\ref{thm:MLE}, our predictor is $\widetilde{\theta}_{t|t-1} = (\psi')^{-1}(m_{\lambda,t|t-1}/n_{\lambda,t|t-1})$. Under strict stationarity, $x_{\lambda,t} = \mathbb{E}[h(Y_1)] n_{\lambda,t}$ by Remark~\ref{remark:SS CEF}(a). Using the recursions $n_{\lambda,t} = 1 + \lambda n_{\lambda,t-1}$ and $h_{\lambda,t-1} = h(y_{t-1}) + \lambda h_{\lambda,t-2}$, we can write
\begin{align*}
\widetilde{\mu}_{t|t-1} &= \frac{(1-\alpha)\mathbb{E}[h(Y_1)] + \alpha\lambda \, h(y_{t-1}) + \lambda \, m_{\lambda,t-1|t-2}}{n_{\lambda,t|t-1}} \\
&= \underbrace{\frac{1-\alpha}{n_{\lambda,t|t-1}}}_{=:\,c_{1,t}} \mathbb{E}[h(Y_1)] + \underbrace{\frac{\alpha\lambda}{n_{\lambda,t|t-1}}}_{=:\,c_{2,t}} h(y_{t-1}) + \underbrace{\frac{\lambda \, n_{\lambda,t-1|t-2}}{n_{\lambda,t|t-1}}}_{=:\,c_{3,t}} \widetilde{\mu}_{t-1|t-2},
\end{align*}
where $c_{1,t} + c_{2,t} + c_{3,t} = 1$ for each $t$, since $(1-\alpha) + \alpha\lambda + \lambda \, n_{\lambda,t-1|t-2} = (1-\alpha)n_{\lambda,t} + \alpha\lambda n_{\lambda,t-1} = n_{\lambda,t|t-1}$. Both score-driven filters use constant steady-state coefficients $(c_1, c_2, \lambda)$ for all $t$. The ratios of our time-varying coefficients to these steady-state values are
$$
\frac{c_{1,t}}{c_1} = \frac{c_{2,t}}{c_2} = \frac{n_{\lambda,\infty|\infty}}{n_{\lambda,t|t-1}}, \qquad \frac{c_{3,t}}{\lambda} = \frac{n_{\lambda,t-1|t-2}}{n_{\lambda,t|t-1}},
$$
where $n_{\lambda,\infty|\infty} := (1-\alpha(1-\lambda))/(1-\lambda)$. These satisfy $c_{1,t}/c_1 = c_{2,t}/c_2 > 1$ and $c_{3,t}/\lambda < 1$ out of steady state. That is, relative to the score-driven filter, our predictor places more weight on the anchor $\mathbb{E}[h(Y_1)]$ and on the most recent observation $h(y_{t-1})$, and less weight on the previous predictor $\widetilde{\mu}_{t-1|t-2}$. These differences decay as $t \to \infty$, with slower convergence when $\lambda$ is closer to 1. Table~\ref{tab:coefficients} illustrates this for $\lambda = 0.95$ and $\alpha = 0.5$.
 
\begin{table}[htbp]
\centering
\caption{Ratio of time-varying to steady-state coefficients for $\lambda = 0.95$, $\alpha = 0.5$.}
\label{tab:coefficients}
\begin{tabular}{cccc}
\toprule
$t$ & $n_{\lambda,t|t-1}$ & $c_{1,t}/c_1 = c_{2,t}/c_2$ & $c_{3,t}/\lambda$ \\
\midrule
1 & 0.50 & 39.0 & 0.00 \\
2 & 1.45 & 13.4 & 0.34 \\
3 & 2.35 & 8.3 & 0.62 \\
5 & 3.94 & 4.9 & 0.80 \\
10 & 8.14 & 2.4 & 0.93 \\
$\infty$ & 19.5 & 1.0 & 1.00 \\
\bottomrule
\end{tabular}
\end{table}


\noindent\textit{Non-Gaussian comparison.} For non-Gaussian $\textnormal{CEF}$ members, $\psi'$ is nonlinear and score-driven $\theta$-recursions no longer reduce to a linear recursion on $\mu$. We simulate data from the Poisson ($\psi'(\theta) = \exp(\theta)$), Pareto ($\psi'(\theta) = -1/\theta + \log m$), and Bernoulli ($\psi'(\theta) = \exp(\theta)/(1+\exp(\theta))$) distributions following Assumption~\ref{def:DGP4}, using $\lambda = 0.93$, $T = 2{,}000$, and the centering values ($\mathbb{E}[h(Y_1)]$) from Section~\ref{sec:simulating CEF examples}. The coefficients $(\omega, A, \Phi)$ in the score-driven filter~\eqref{eq:SD recursion} are set via the Gaussian steady-state mapping derived above, with $\theta^{\textnormal{sd}}_{1|0} = \widetilde{\theta}_{1|0} = (\psi')^{-1}(\mathbb{E}[h(Y_1)])$, so that differences are mostly driven by the nonlinearity of $\psi'$. The data and our predictor are identical to those in Figures~\ref{fig:CEFunistateExact}--\ref{fig:CEFunistate_additional}.

Figure~\ref{fig:sd_comparison} shows the results for $\alpha = 0.7$ (top) and $\alpha = 0.95$ (bottom). When $\alpha = 0.7$, the score-driven predictor accurately tracks the conditional mean for all three distributions. When $\alpha = 0.95$, the score-driven predictor remains close to the conditional mean for the Poisson distribution, but overshoots briefly for the Pareto distribution around $t = 1{,}179$, where the score-driven conditional mean reaches $13.17$ while the observation is $2.16$ and our conditional mean is $2.14$. For the Bernoulli distribution, the score-driven filter overshoots the data at $t = 834$ and does not recover the conditional mean path, i.e., it has diverged. The score-driven filter tracks the conditional mean well in most cases, but can become unstable when the scaled score $[\psi''(\theta)]^{-1}[h(y) - \psi'(\theta)]$ in~\eqref{eq:SD recursion} is not Lipschitz continuous and the coefficient multiplying it is not small; see \cite{donkervanheel2025stability} for formal stability conditions for score-driven filters.

\begin{figure}[htbp]
    \centering
    \begin{tabular}{@{}c@{\hspace{-0.25cm}}c@{\hspace{-0.25cm}}c@{\hspace{-0.25cm}}c@{}}
        & \textbf{Poisson} & \textbf{Pareto} & \textbf{Bernoulli} \\
        \raisebox{2.0cm}[0pt][0pt]{{\small $\alpha=0.70$\hspace{0.3cm}}} &
        \includegraphics[width=0.32\linewidth]{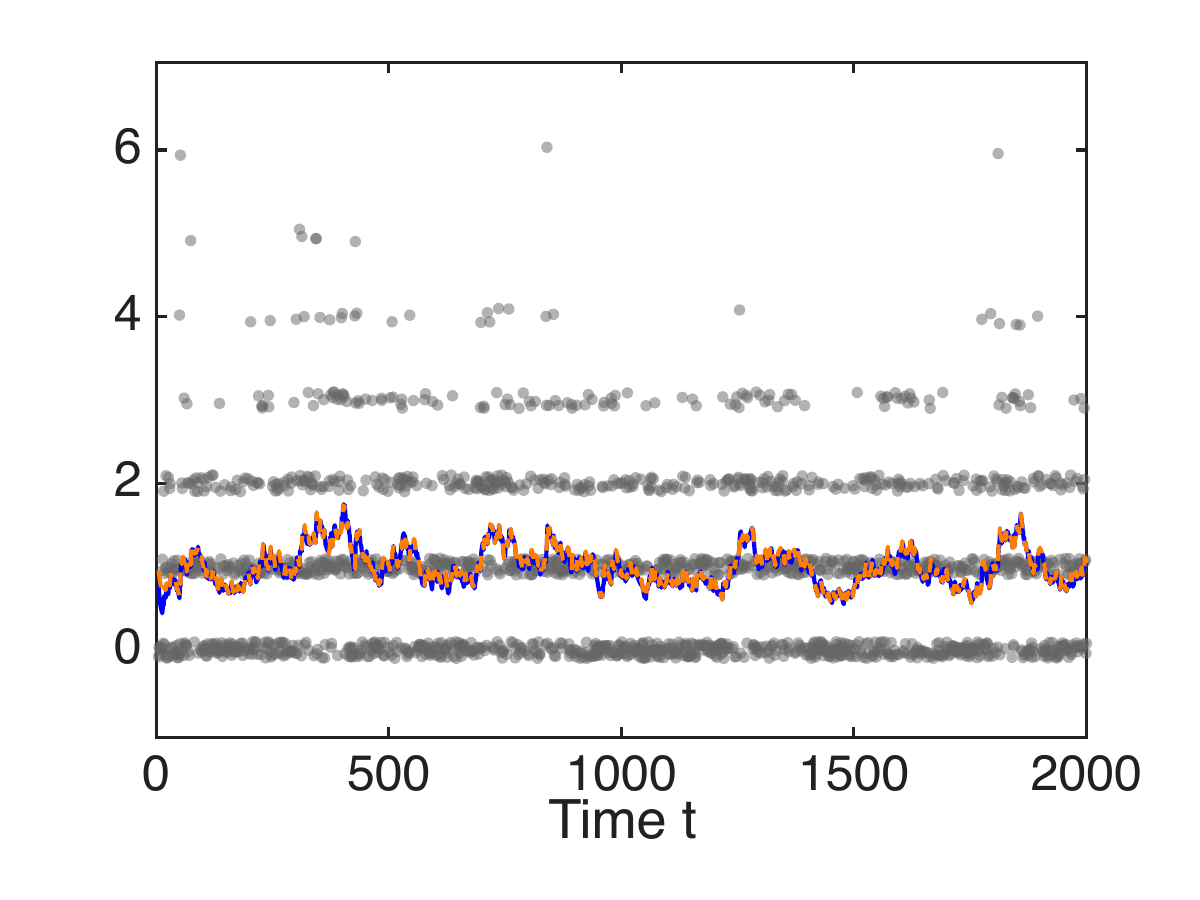} &
        \includegraphics[width=0.32\linewidth]{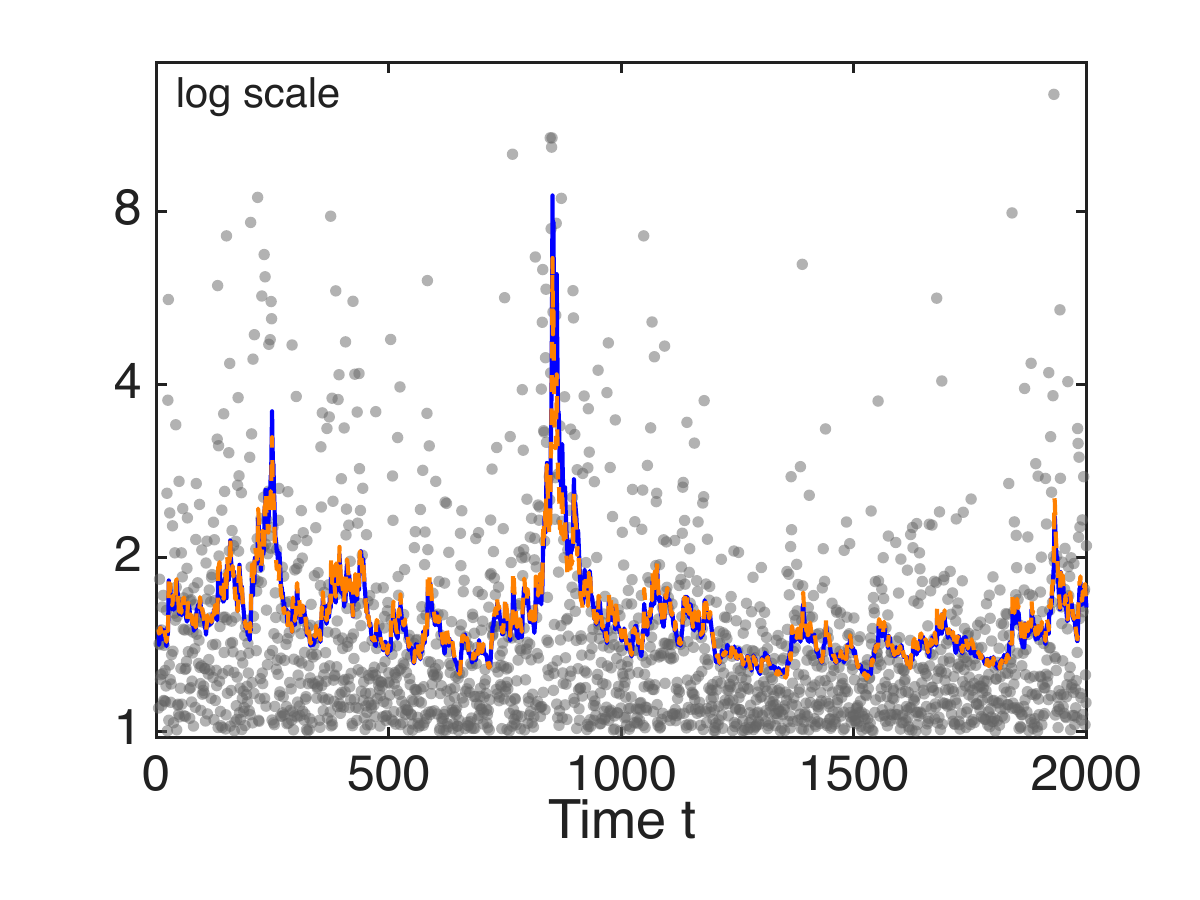} &
        \includegraphics[width=0.32\linewidth]{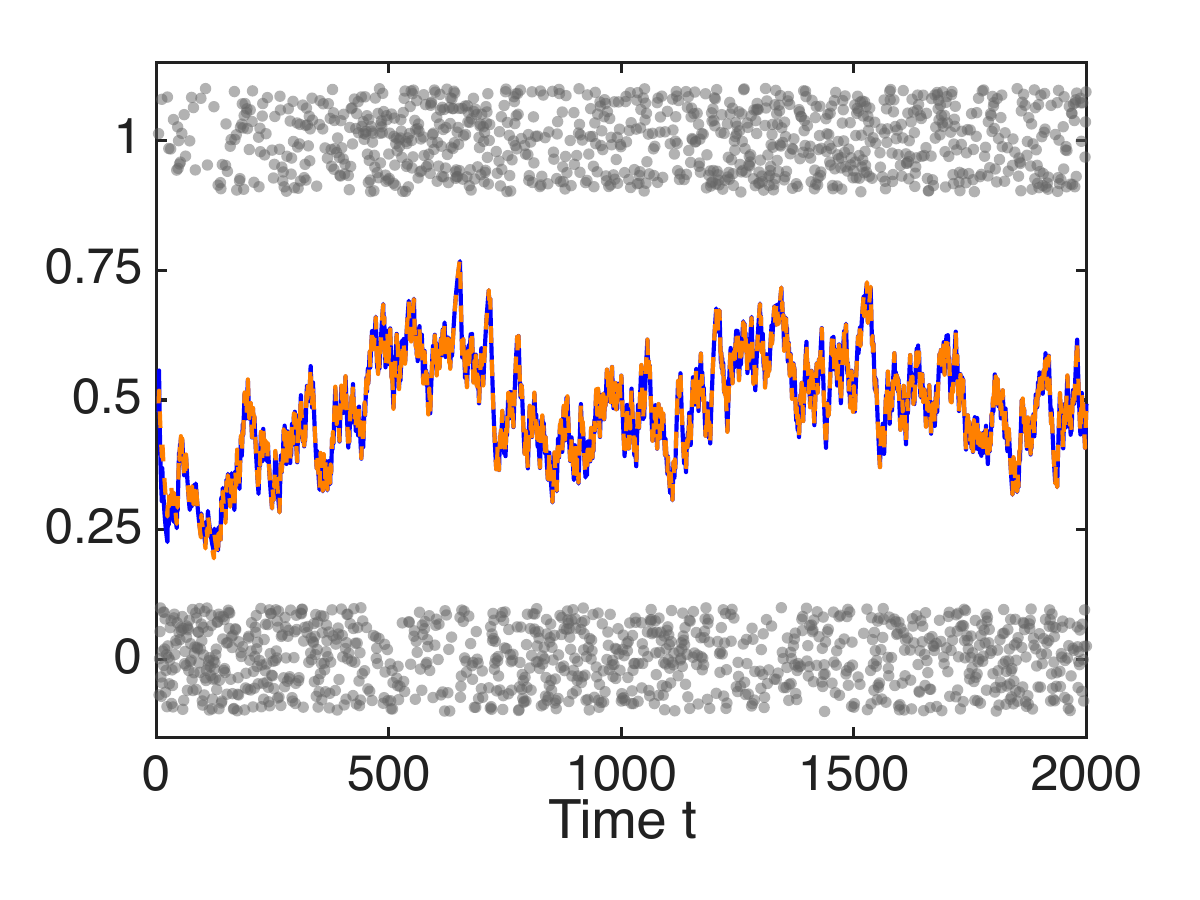} \\[-3mm]
        \raisebox{2.0cm}[0pt][0pt]{{\small $\alpha=0.95$\hspace{0.3cm}}} &
        \includegraphics[width=0.32\linewidth]{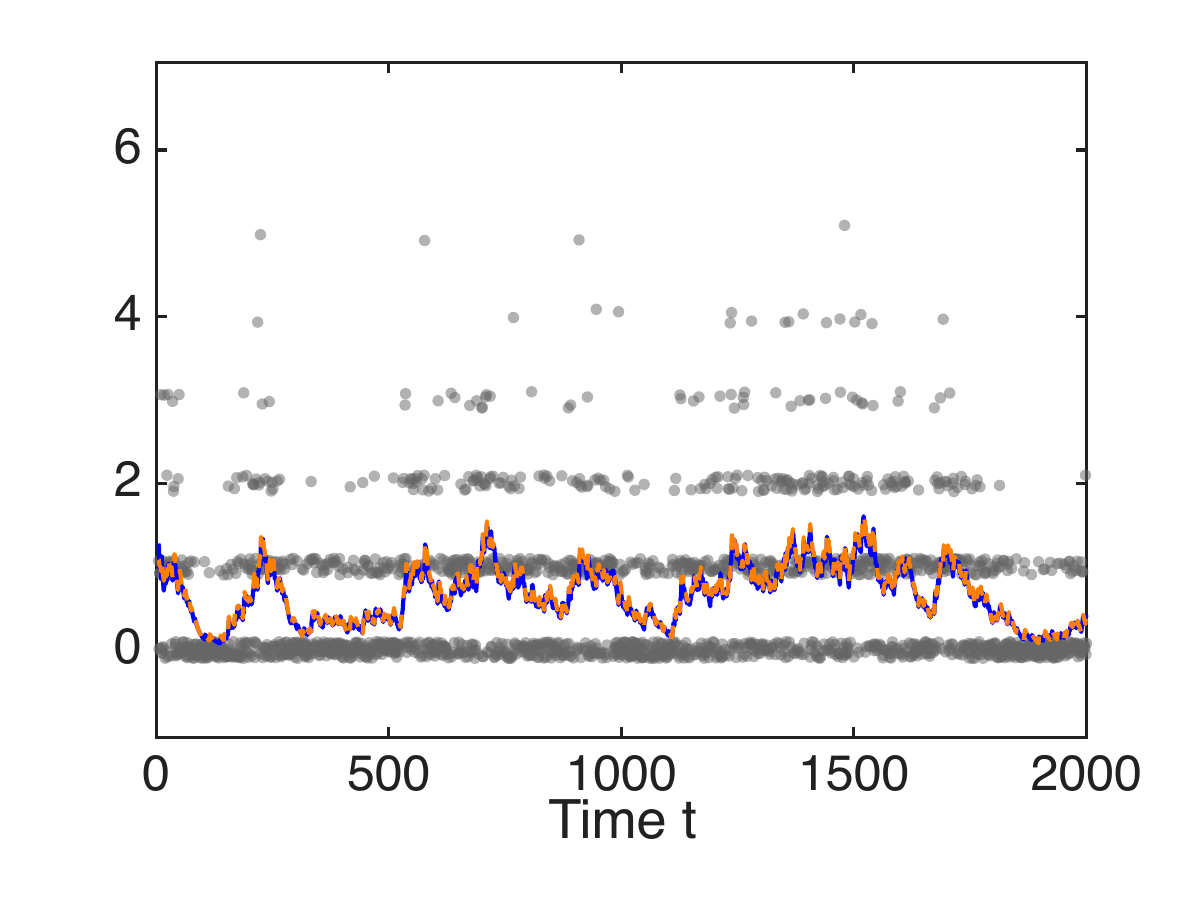} &
        \includegraphics[width=0.32\linewidth]{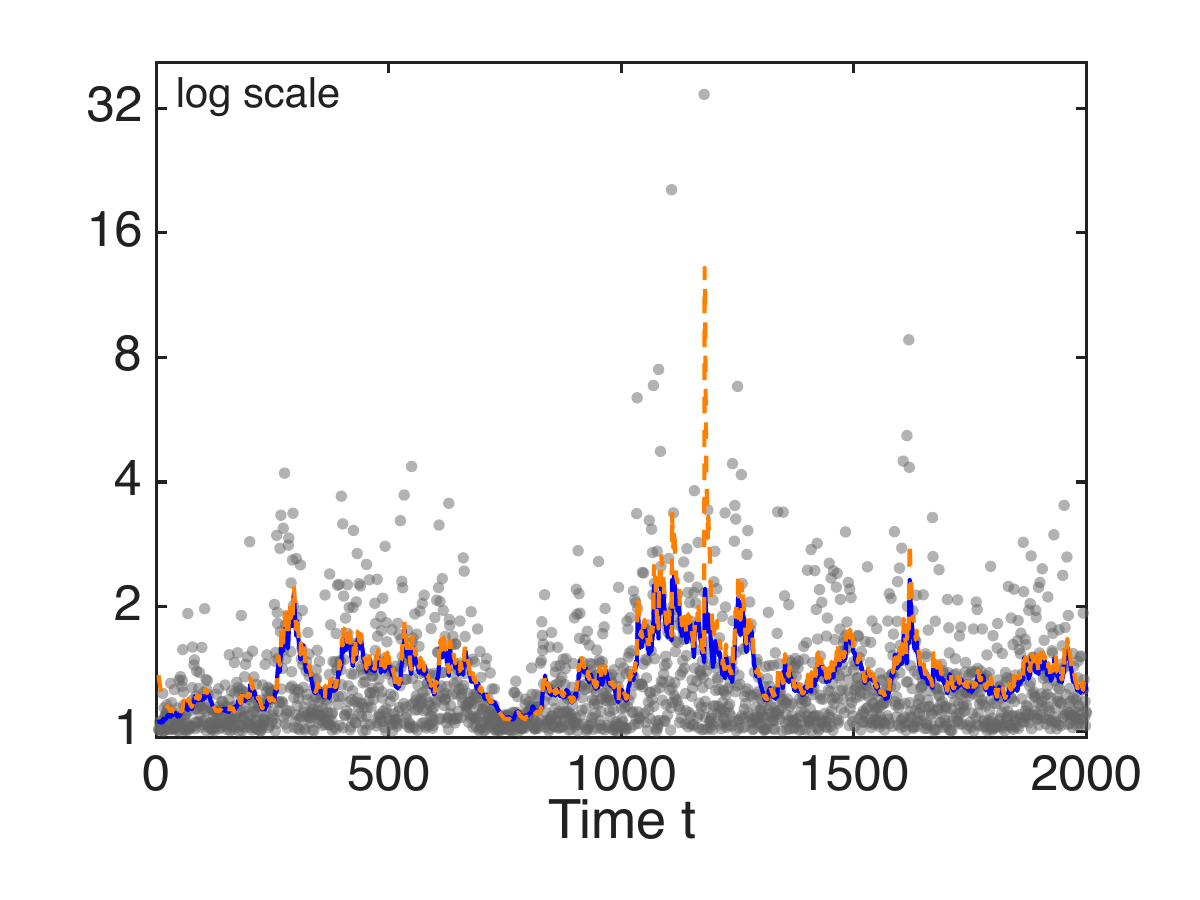} &
        \includegraphics[width=0.32\linewidth]{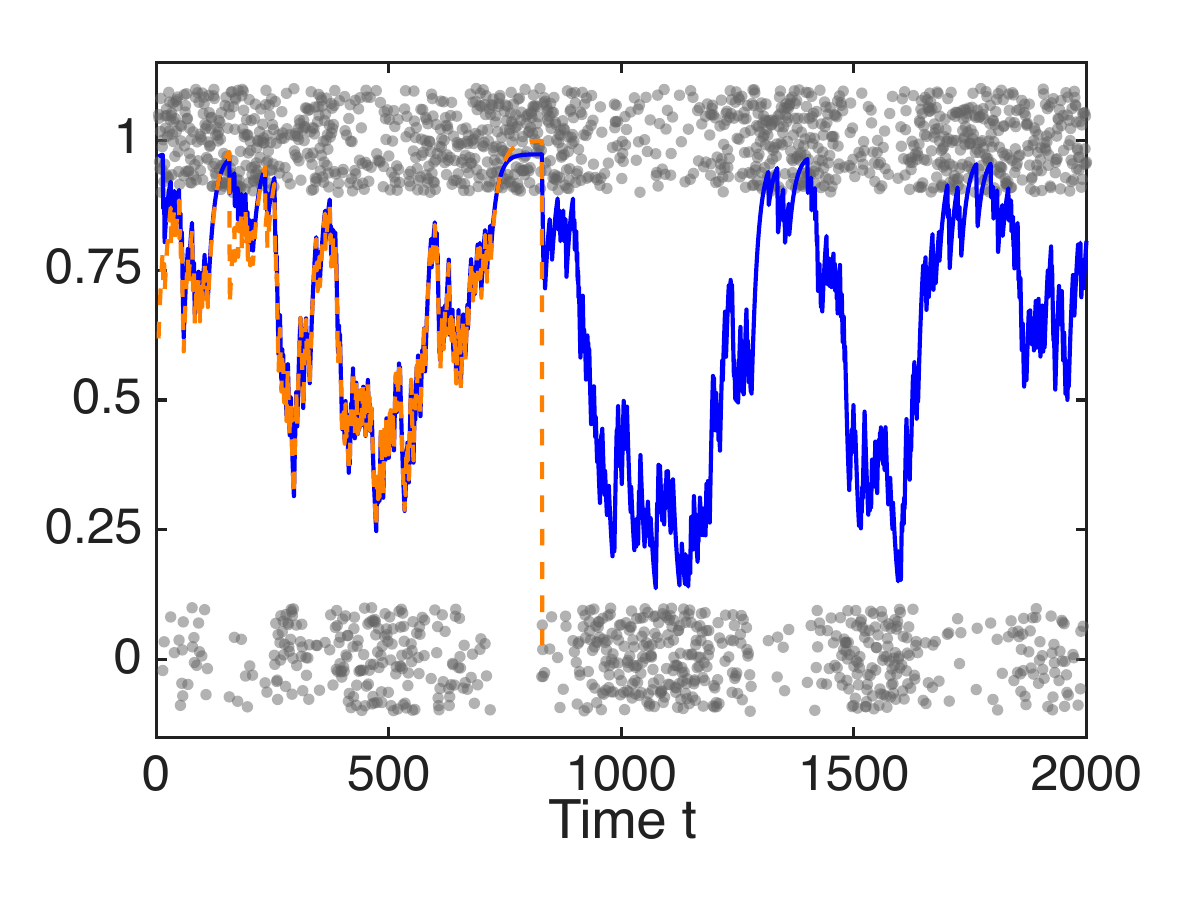}
    \end{tabular}
    \vspace{-5mm}
    \caption{Simulation from $Y_t | Y_{1:t-1} \sim \textnormal{CEF}(\widetilde{\theta}_{t|t-1},h,\psi)$ for time $t=5,...,T=2000$ with discount parameter $\lambda=0.93$. Top: anchoring parameter $\alpha=0.7$; bottom: $\alpha=0.95$. Gray circles show observations $Y_t$. Blue solid lines show the conditional mean $\widetilde{\mathbb{E}[Y_t|Y_{1:t-1}]} = \psi'(\widetilde{\theta}_{t|t-1})$ for Poisson and Bernoulli, and $\widetilde{\mathbb{E}[Y_t|Y_{1:t-1}]} = \widetilde{\theta}_{t|t-1}/(\widetilde{\theta}_{t|t-1}+1)$ for Pareto (when $\widetilde{\theta}_{t|t-1} < -1$; $y$-axis on log scale). Orange dashed lines show the score-driven predictor, computed as $\psi'(\theta^{\textnormal{sd}}_{t|t-1})$ for Poisson and Bernoulli and as $\theta^{\textnormal{sd}}_{t|t-1}/(\theta^{\textnormal{sd}}_{t|t-1}+1)$ for Pareto (when $\theta^{\textnormal{sd}}_{t|t-1} < -1$). The coefficients $(\omega, A, \Phi)$ in~\eqref{eq:SD recursion} are set via the Gaussian steady-state mapping, with $\theta^{\textnormal{sd}}_{1|0} = \widetilde{\theta}_{1|0}=(\psi')^{-1}(\mathbb{E}[h(Y_1)])$.}
    \label{fig:sd_comparison}
\end{figure}

\section{Monte Carlo assessment of estimator precision}\label{app:montecarlo}

To investigate the finite-sample precision of the quasi-likelihood estimator from Definition~\ref{def:2-step estimator}, we conduct a Monte Carlo study using the estimated model from Section~\ref{sec:household} as the data-generating process (DGP). We generate time-varying proportion vectors following Assumption~\ref{def:DGP4}, that is,
$
Y_t \mid Y_{1:t-1} \sim {\tt Dirichlet}(\theta_{t|t-1}),$ where $ \theta_{t|t-1} = (\theta_{1,t|t-1}, \ldots, \theta_{7,t|t-1})^{\top},
$
where $\theta_{t|t-1}$ evolves according to the exponentially weighted predictor (Example~\ref{thm:MLE}), denoting $(\pi^*, \alpha^*, \lambda^*)$ for the hyperparameters in the DGP. We initialize each prediction path with $\widetilde{\theta}_{1|0} = (\psi')^{-1}(\pi^*)$ where $\pi^* = \widehat{\mathbb{E}[h(Y_1)]}$ is the estimated centering parameter from Section~\ref{sec:household}.

To investigate how $\alpha^*$ affects hyperparameter estimation, we consider four scenarios:
\begin{enumerate}[label=(\roman*), itemsep=0pt, parsep=0pt]
\item $(\pi^*, \alpha^*, \lambda^*) = (\widehat{\pi}, 0.01, 0.64)$ (strong anchoring),
\item $(\pi^*, \alpha^*, \lambda^*) = (\widehat{\pi}, 0.30, 0.64)$ (moderate anchoring),
\item $(\pi^*, \alpha^*, \lambda^*) = (\widehat{\pi}, 0.60, 0.64)$ (weak anchoring),
\item $(\pi^*, \alpha^*, \lambda^*) = (\widehat{\pi}, 0.95, 0.64)$ (no anchoring, matching estimated values),
\end{enumerate}
where $\pi^{*} = (-1.76, -1.41, -1.78, -1.77, -2.73, -2.23, -3.53)'$ and $\lambda^* = 0.64$ is fixed across all scenarios to match the estimates from the household data.

Figure~\ref{fig:household_simulation} displays the quasi log-likelihood surfaces and confidence regions for the estimated $(\alpha, \lambda)$. Confidence regions contract systematically as $T$ increases, with maximum quasi-likelihood estimates (red stars) concentrating near the DGP values (blue circles). The pattern of contraction depends strongly on the anchoring hyperparameter $\alpha^*$. When $\alpha^*=0.01$ (first row), the confidence interval for $\lambda$ remains wide even at $T=10{,}000$. This aligns with Remark~\ref{remark 1}(f): when $\alpha$ is close to zero, the frame is nearly stationary, and $\lambda$ has minimal impact on the filter and predictor, making it difficult to estimate from the data. Contrastingly, when $\alpha^* \in \{0.60, 0.95\}$ (third and fourth rows), confidence intervals for both hyperparameters contract rapidly even at moderate sample sizes.

\begin{figure}[htbp]
\centering
\begin{tabular}{@{}c@{\hspace{-0.2cm}}c@{\hspace{0cm}}c@{\hspace{0cm}}c@{\hspace{0cm}}c@{}}
    & \textbf{\boldmath $T=100$} & \textbf{\boldmath $T=1000$} & \textbf{\boldmath $T=5000$} & \textbf{\boldmath $T=10000$} \\
    \raisebox{1.72cm}[0pt][0pt]{{\small $\alpha^*=0.01$\hspace{0.3cm}}} &
    \includegraphics[width=0.23\linewidth]{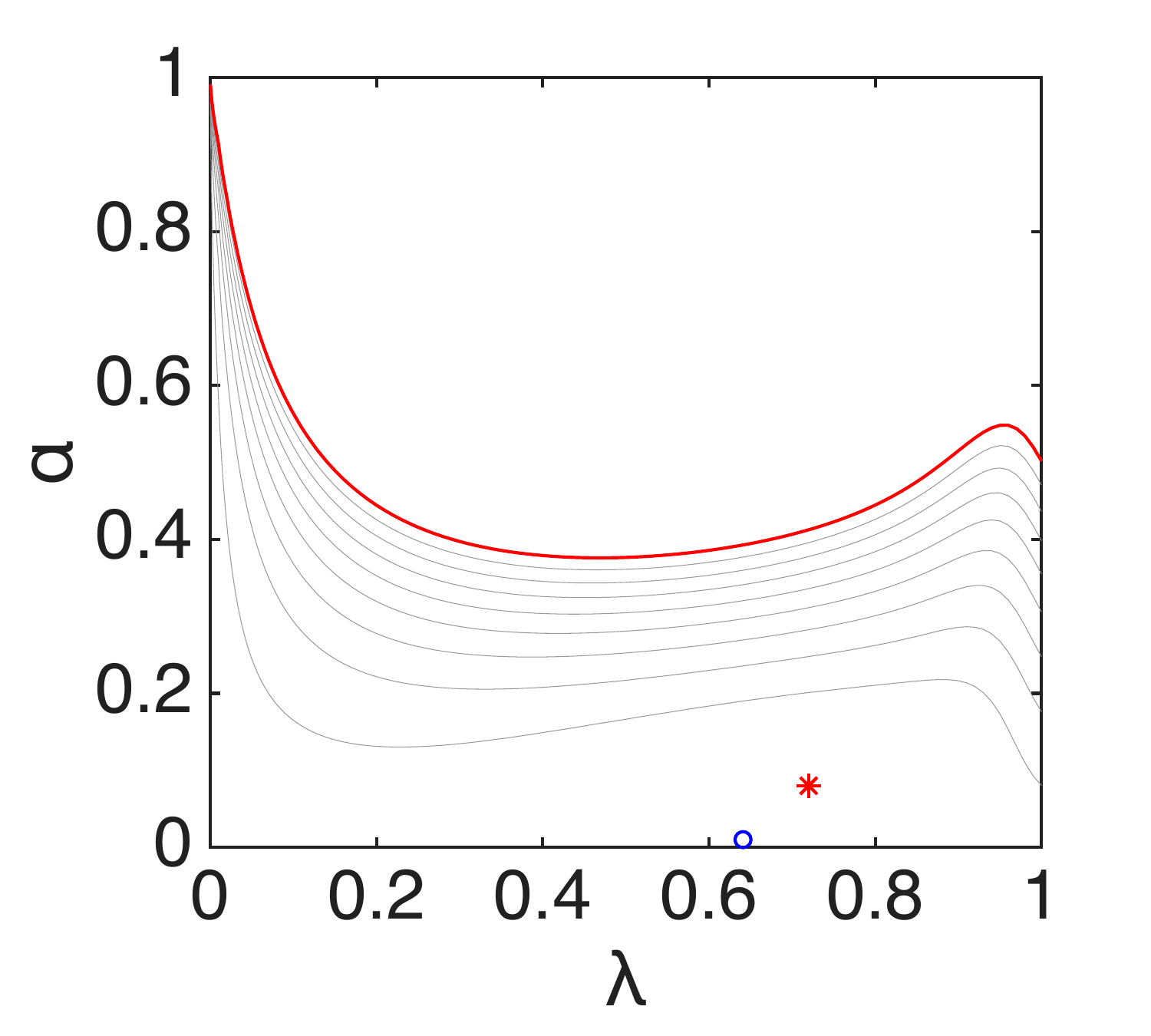} &
    \includegraphics[width=0.23\linewidth]{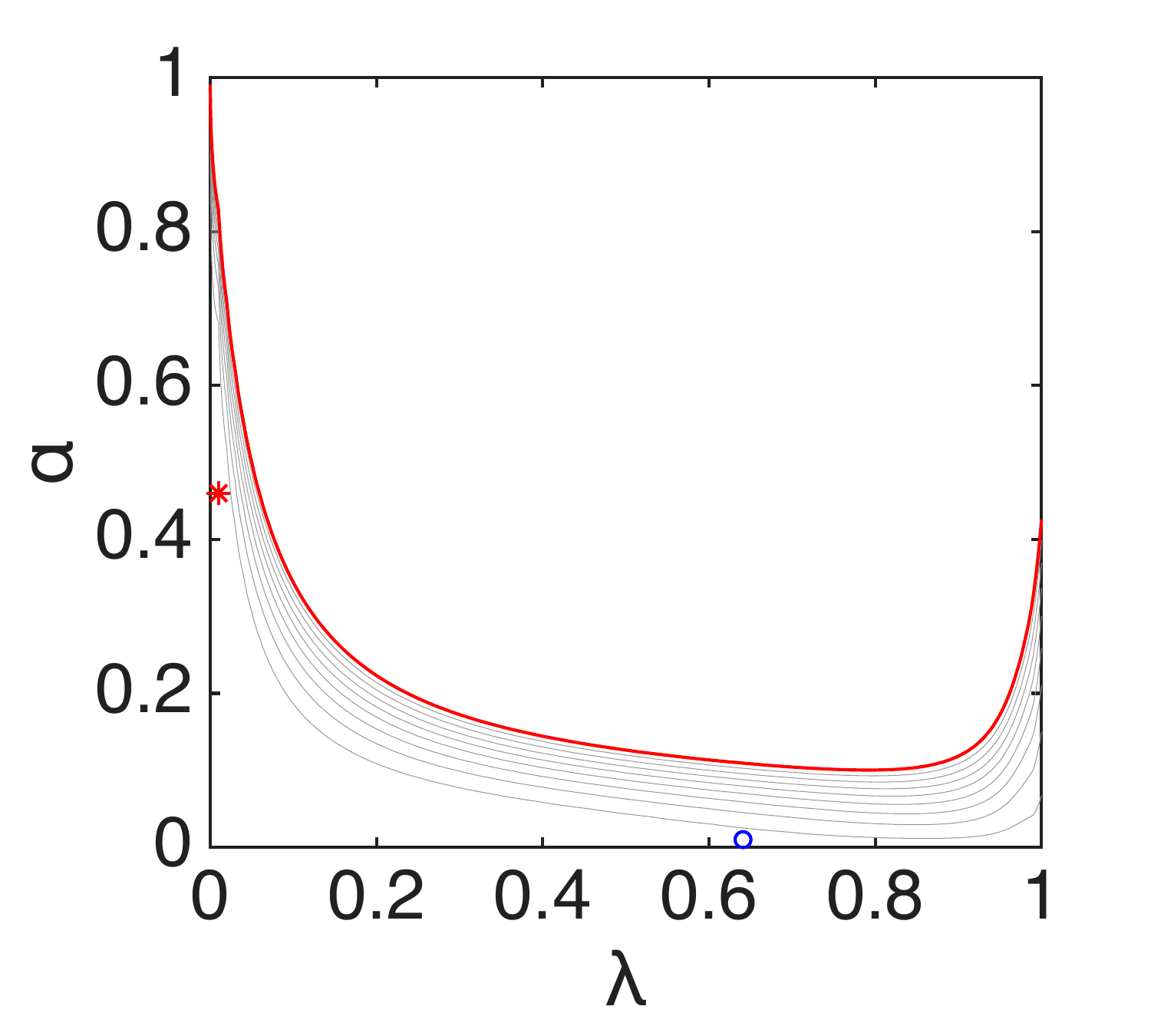} &
    \includegraphics[width=0.23\linewidth]{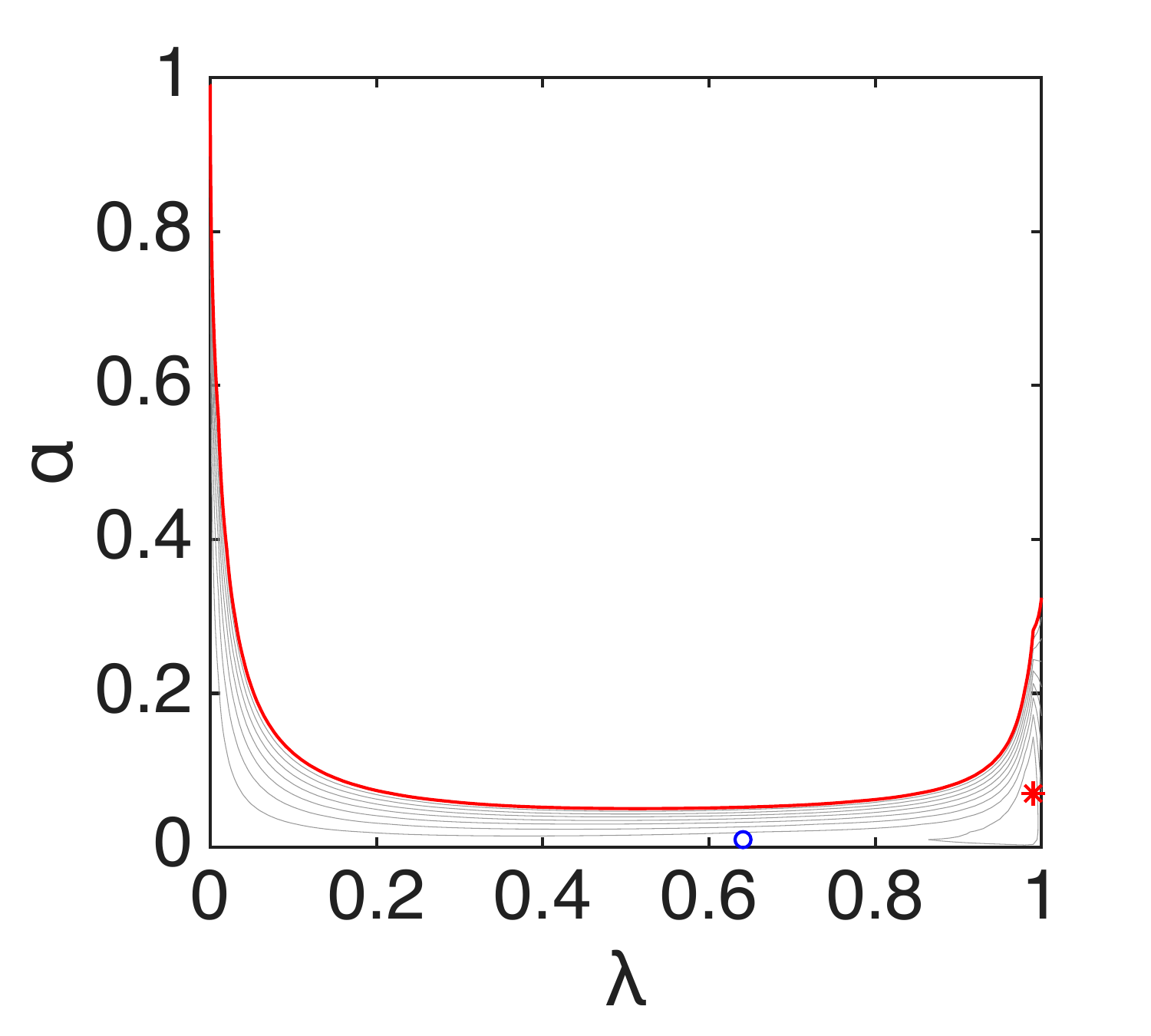} &
    \includegraphics[width=0.23\linewidth]{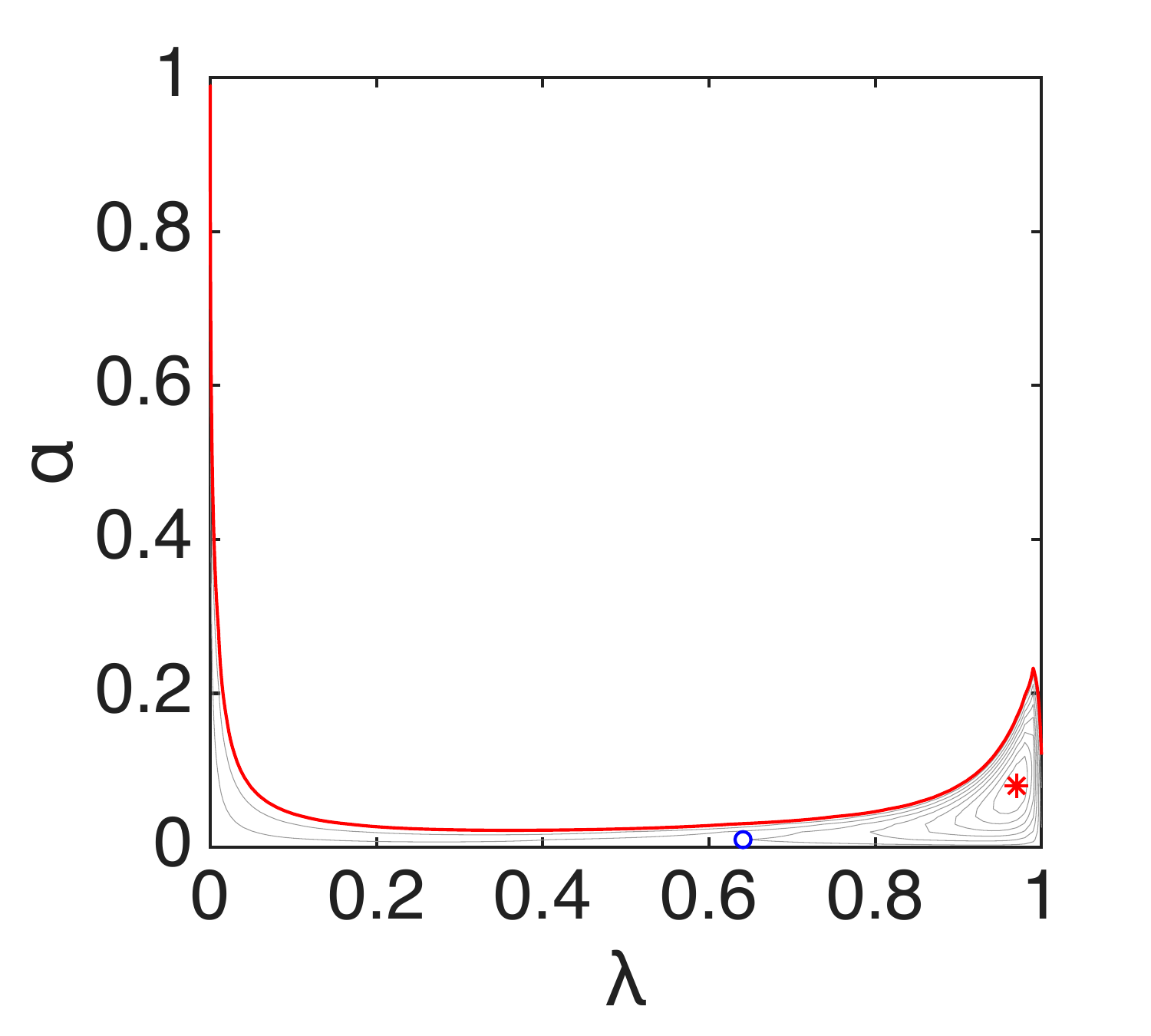}\\[-1mm]
    \raisebox{1.72cm}[0pt][0pt]{{\small $\alpha^*=0.30$\hspace{0.3cm}}} &
    \includegraphics[width=0.23\linewidth]{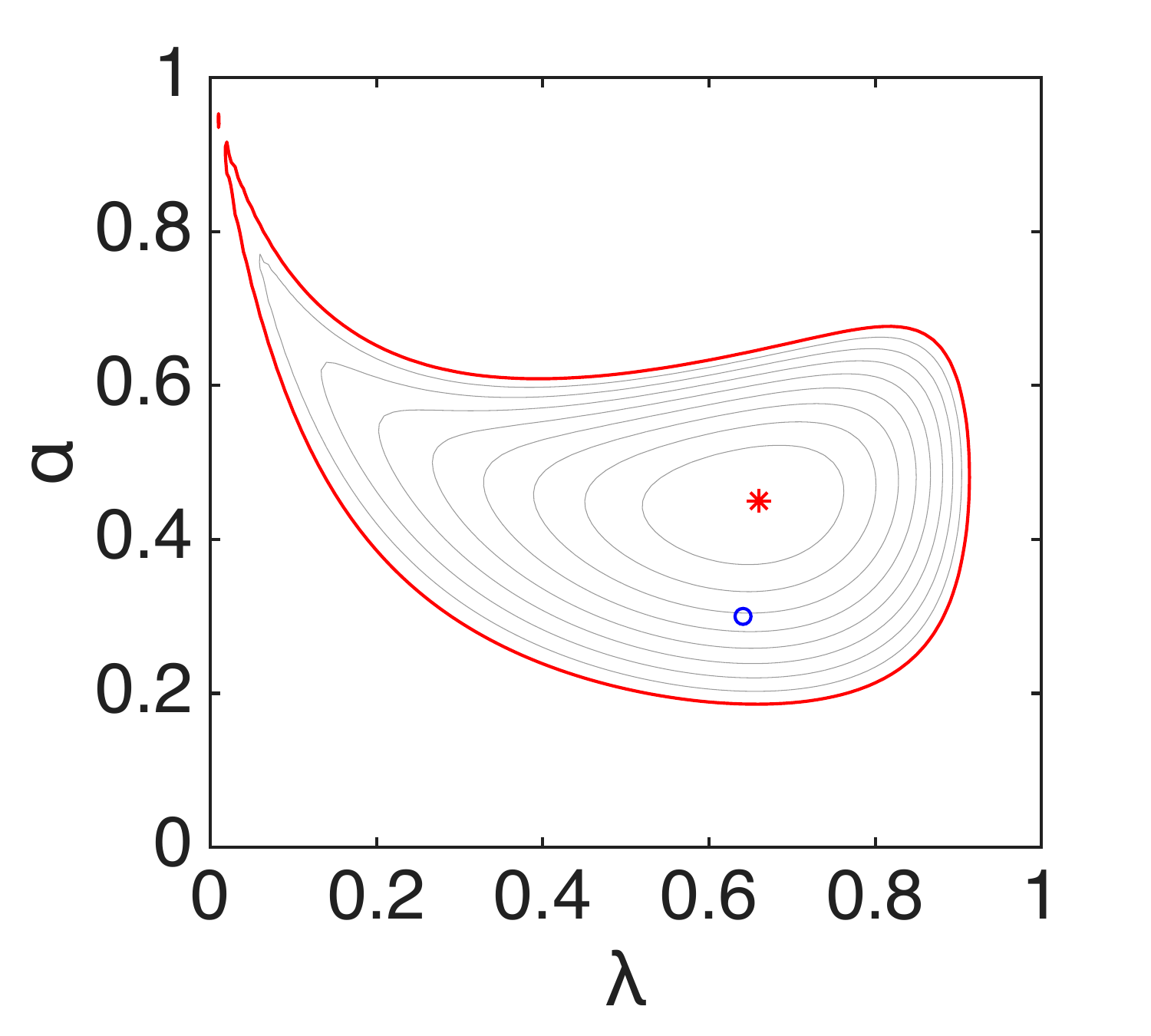} &
    \includegraphics[width=0.23\linewidth]{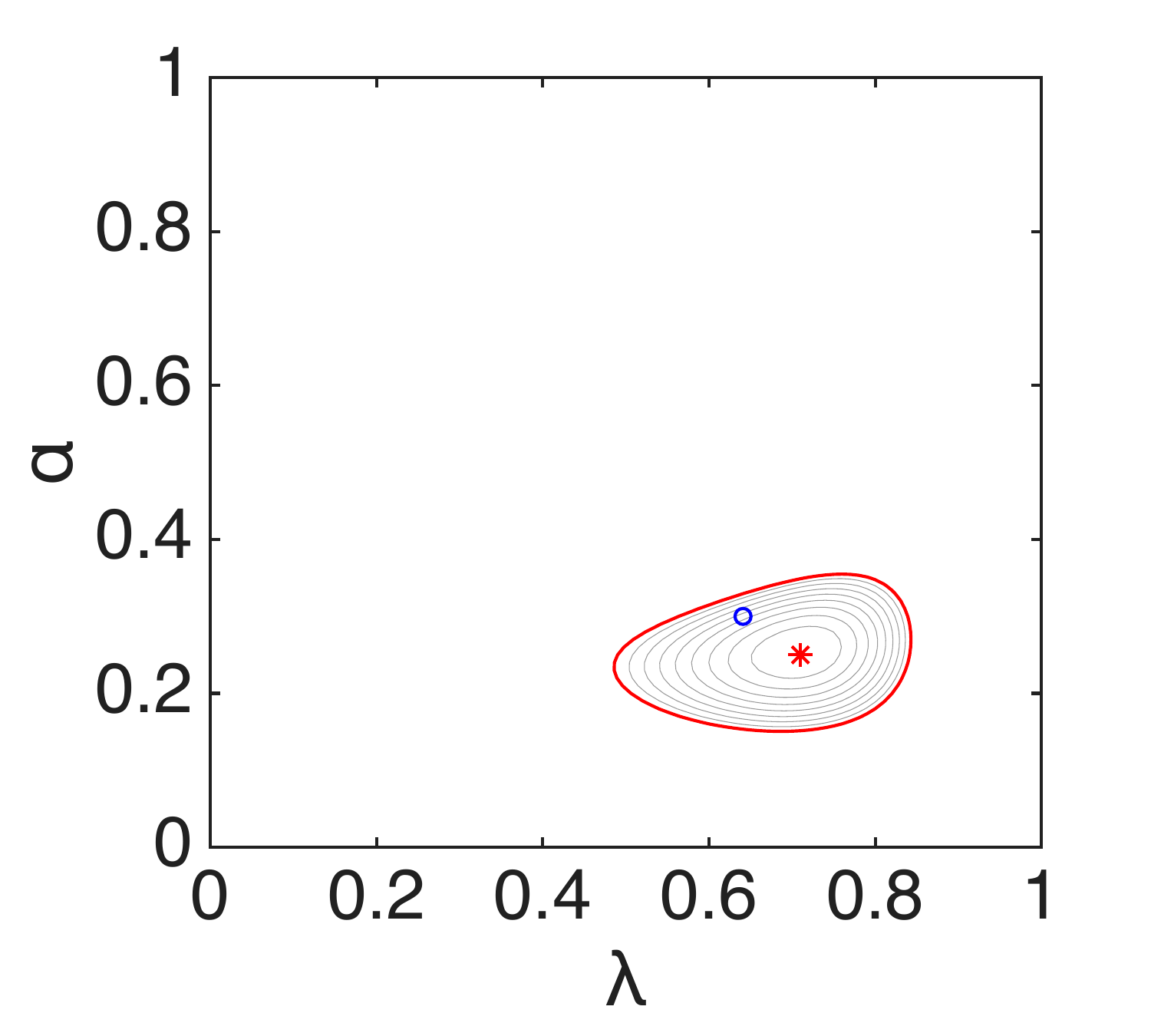} &
    \includegraphics[width=0.23\linewidth]{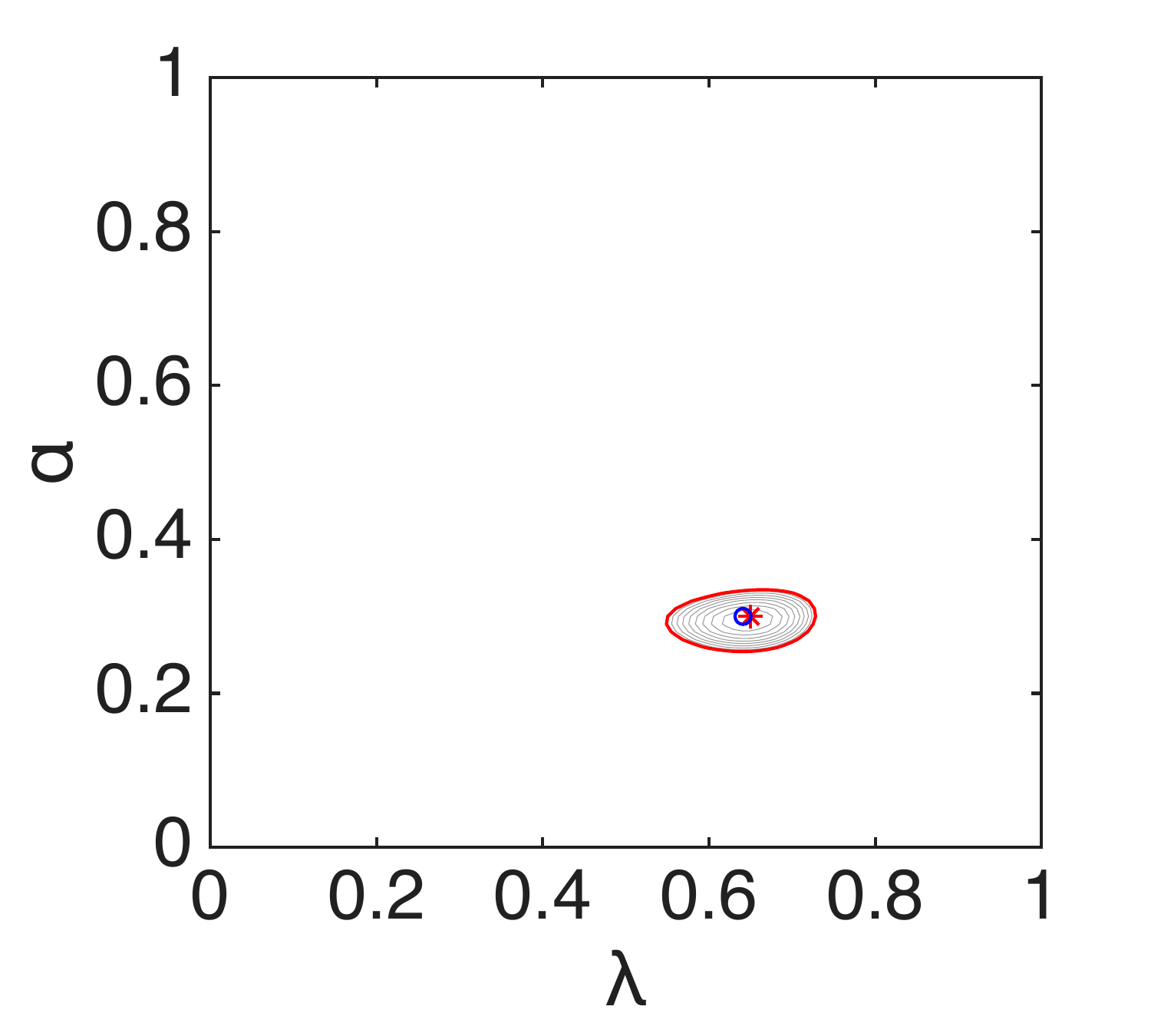} &
    \includegraphics[width=0.23\linewidth]{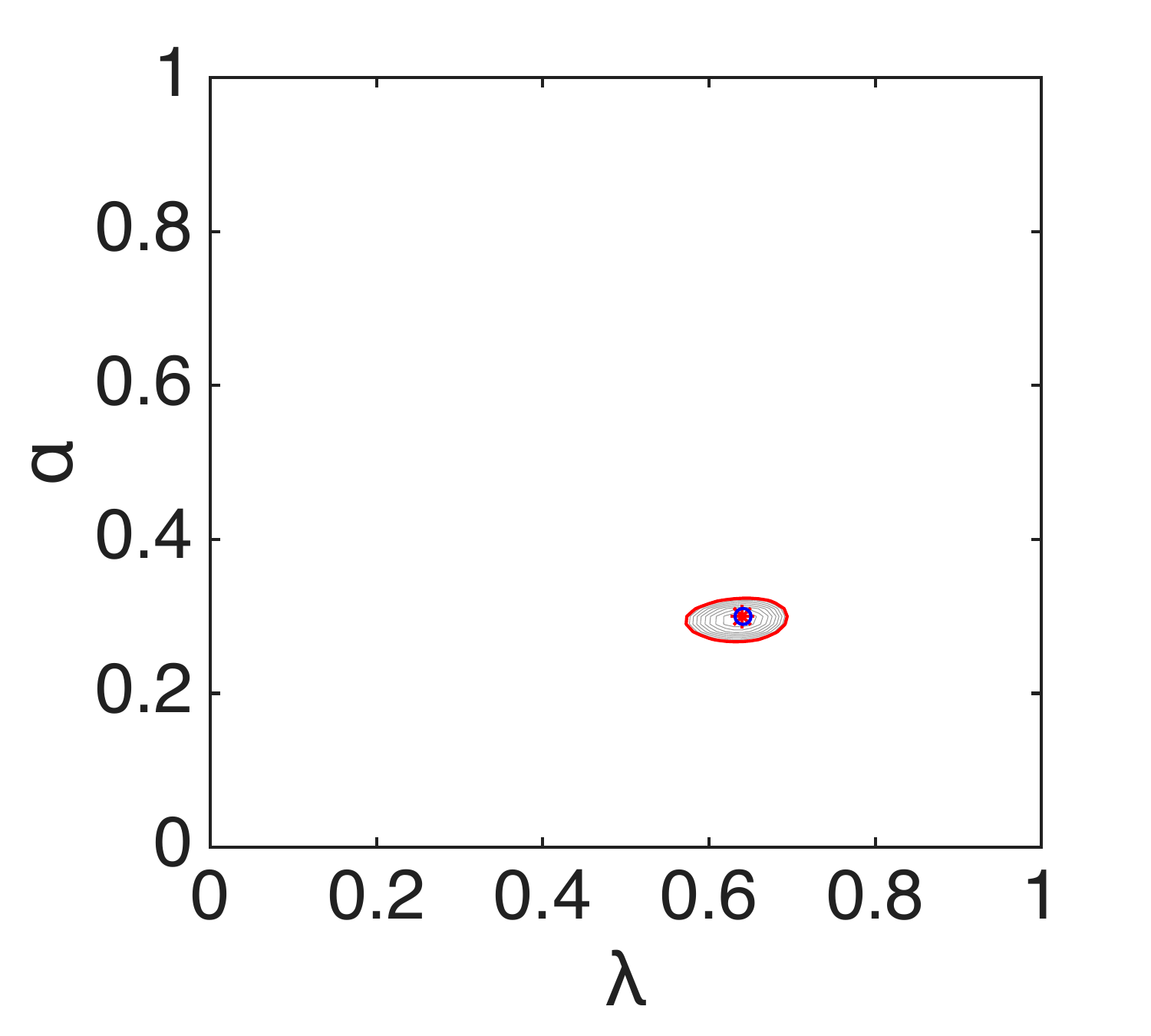}  \\[-1mm]
    \raisebox{1.72cm}[0pt][0pt]{{\small $\alpha^*=0.60$\hspace{0.3cm}}} &
    \includegraphics[width=0.23\linewidth]{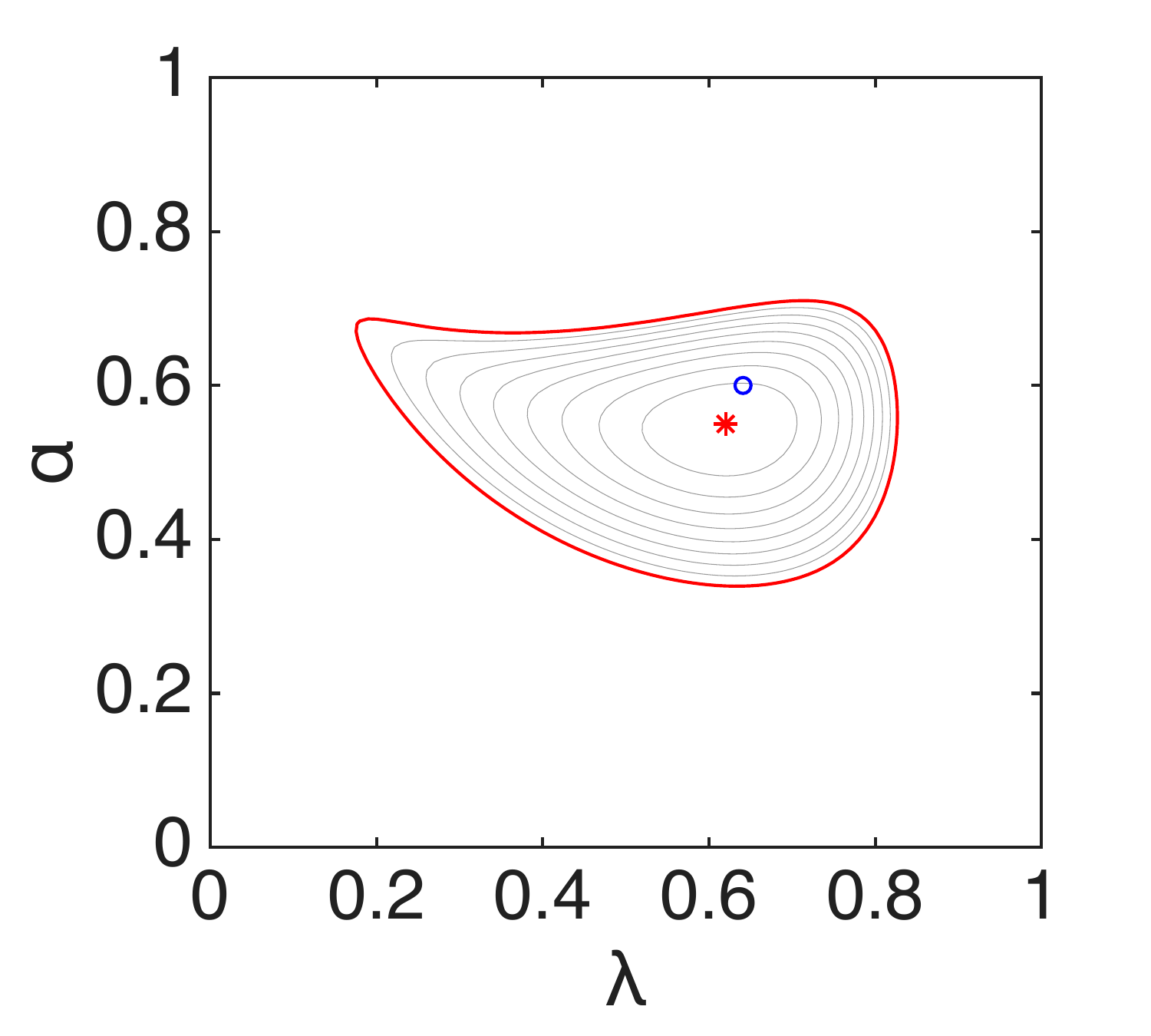} &
    \includegraphics[width=0.23\linewidth]{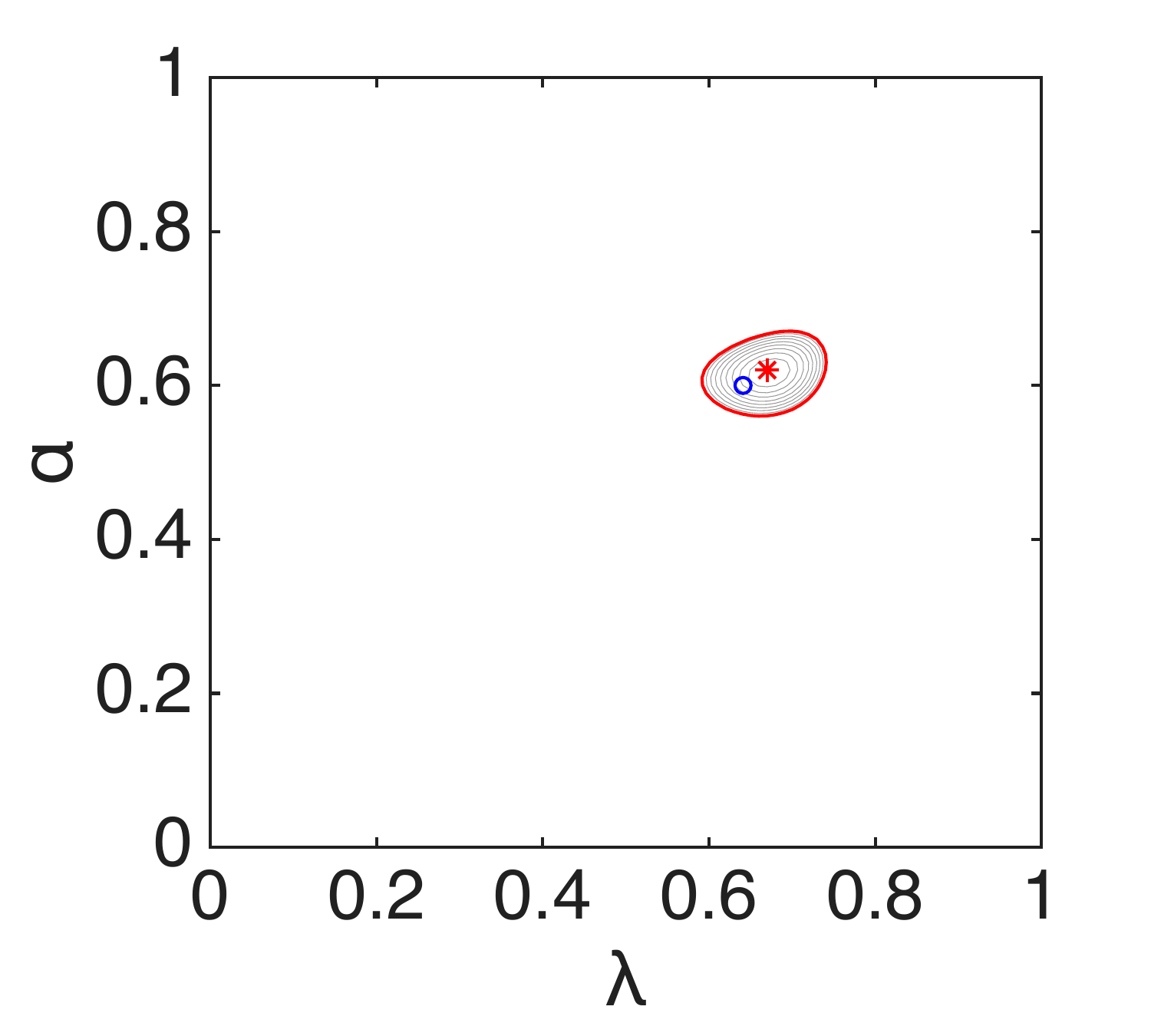} &
    \includegraphics[width=0.23\linewidth]{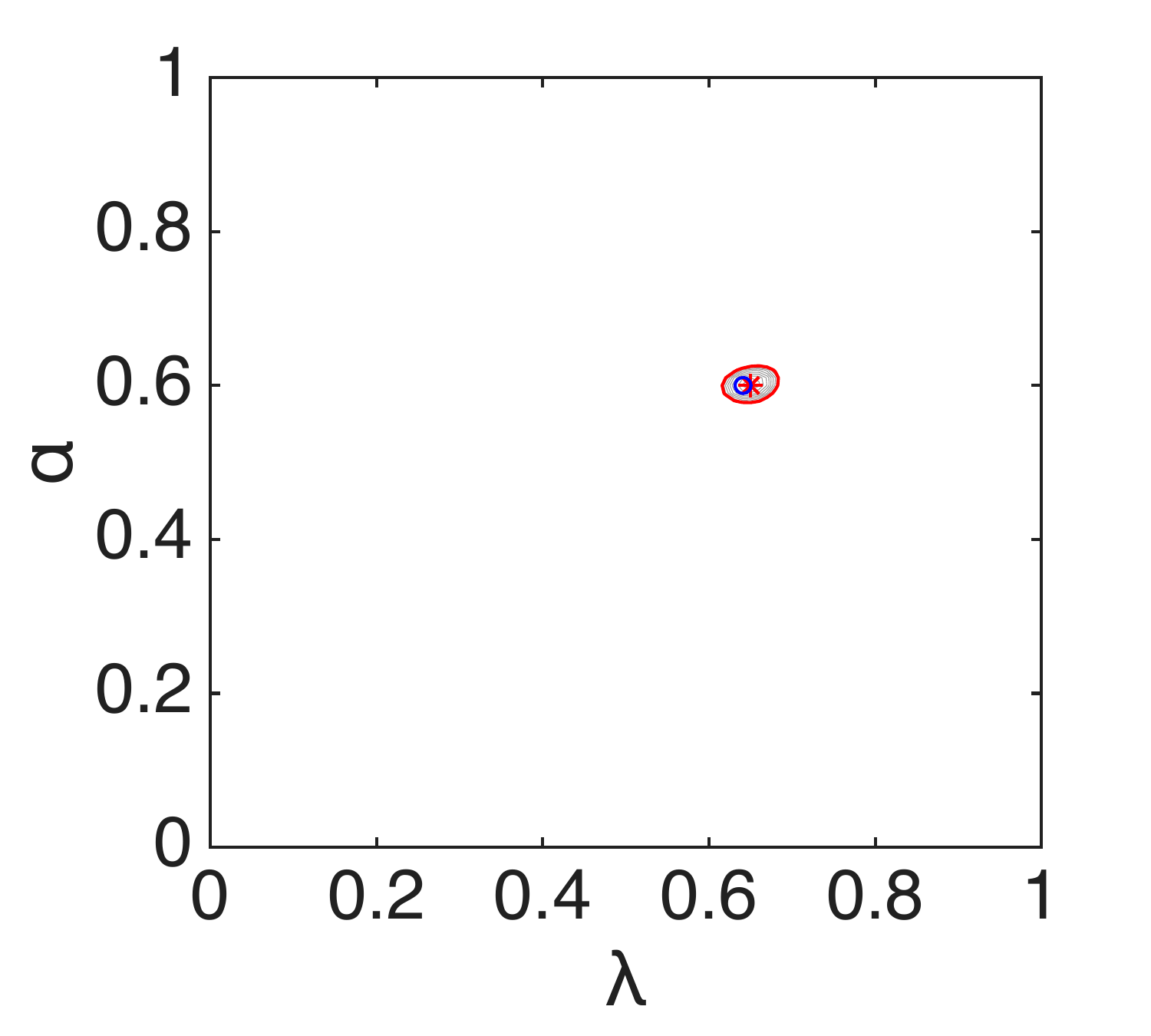} &
    \includegraphics[width=0.23\linewidth]{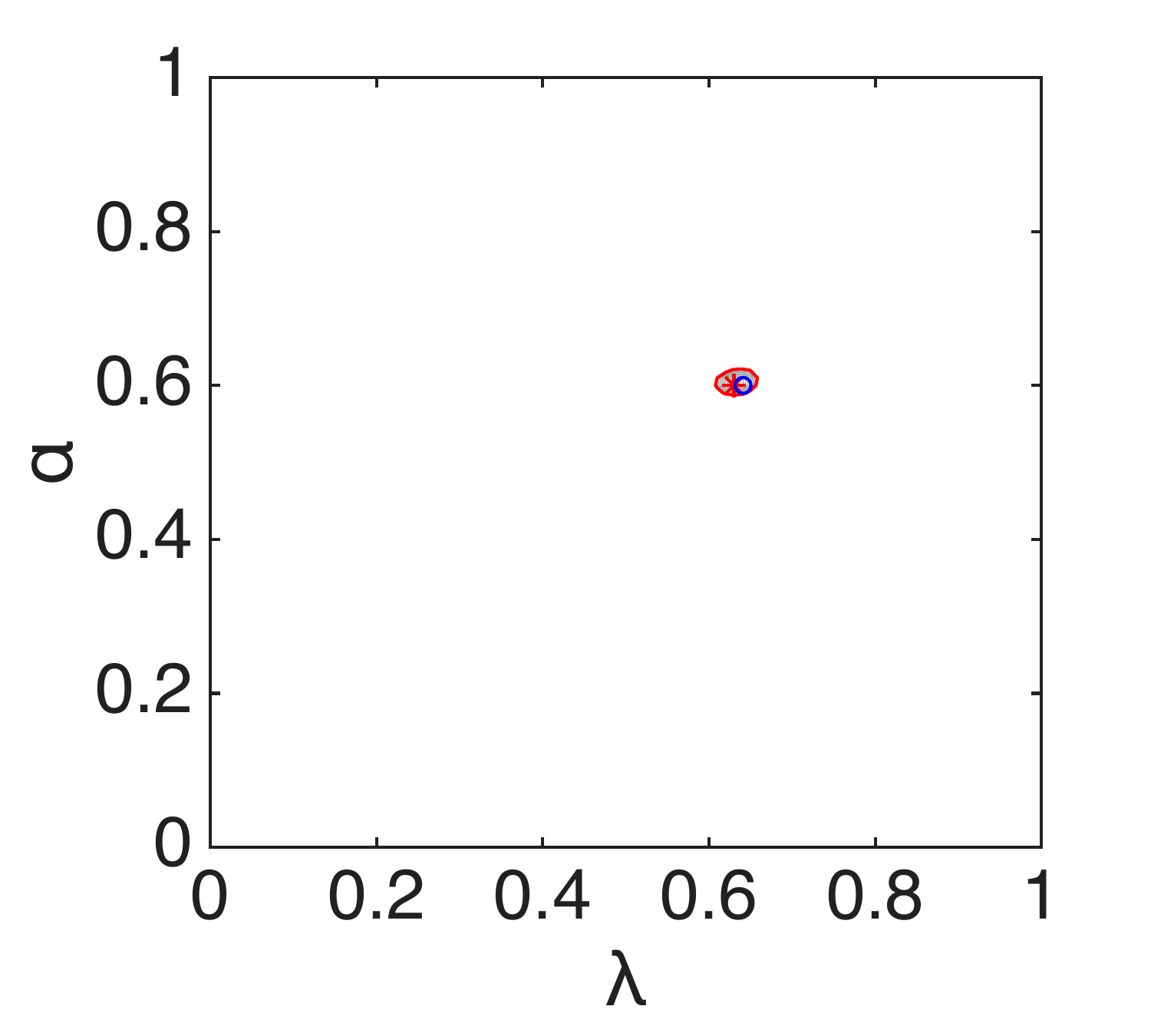}\\[-1mm]
    \raisebox{1.72cm}[0pt][0pt]{{\small $\alpha^*=0.95$\hspace{0.3cm}}} &
    \includegraphics[width=0.23\linewidth]{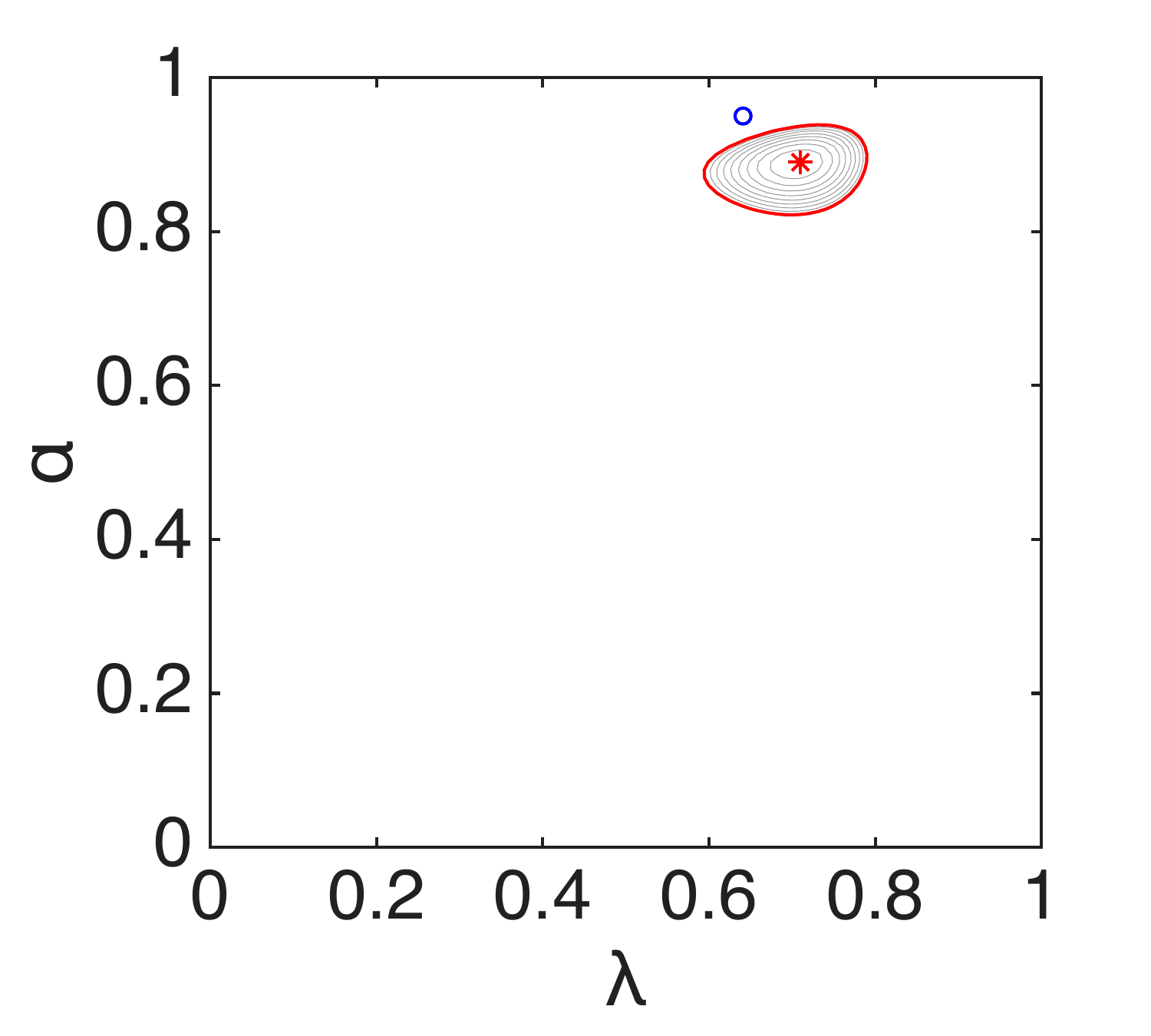} &
    \includegraphics[width=0.23\linewidth]{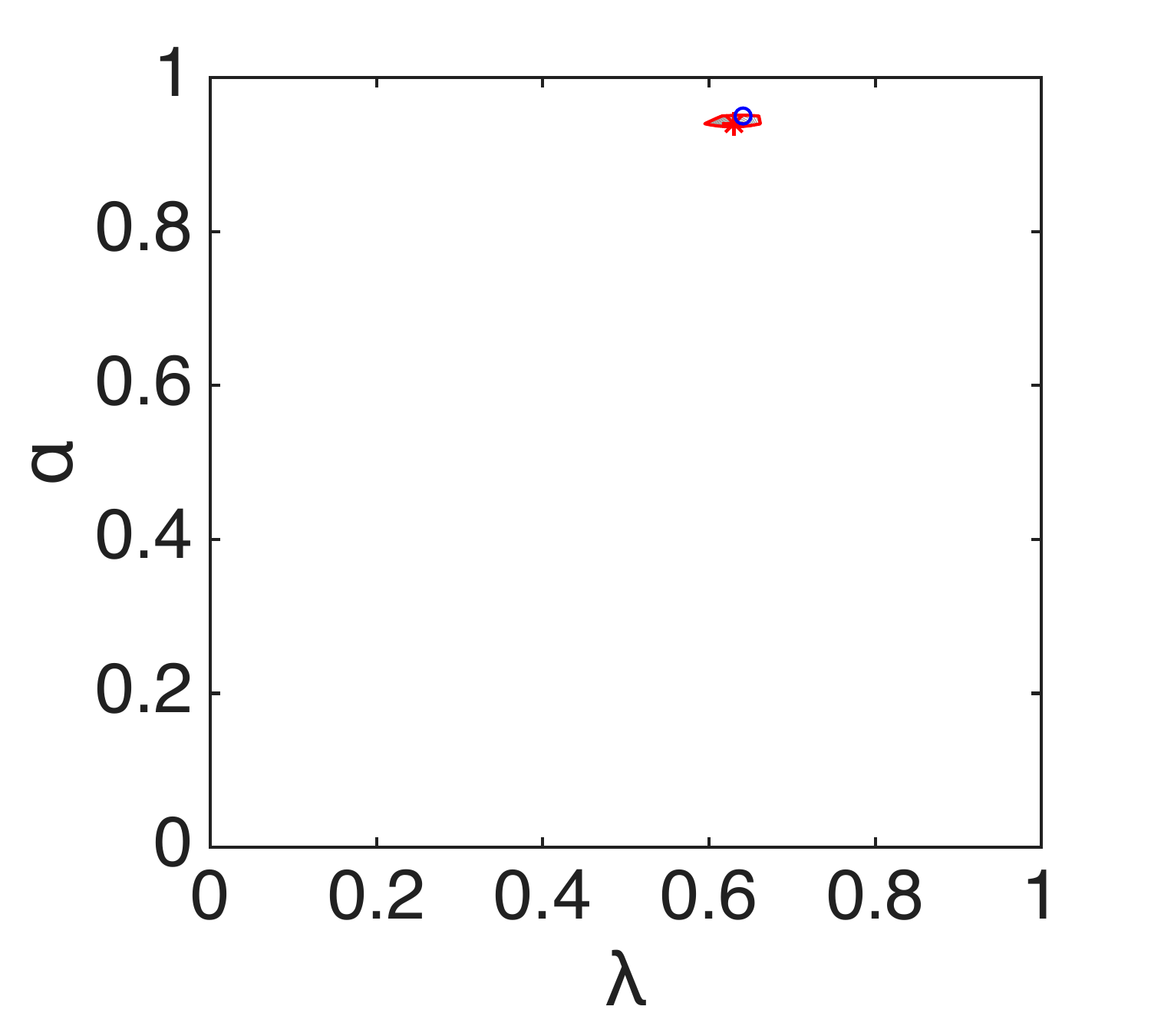} &
    \includegraphics[width=0.23\linewidth]{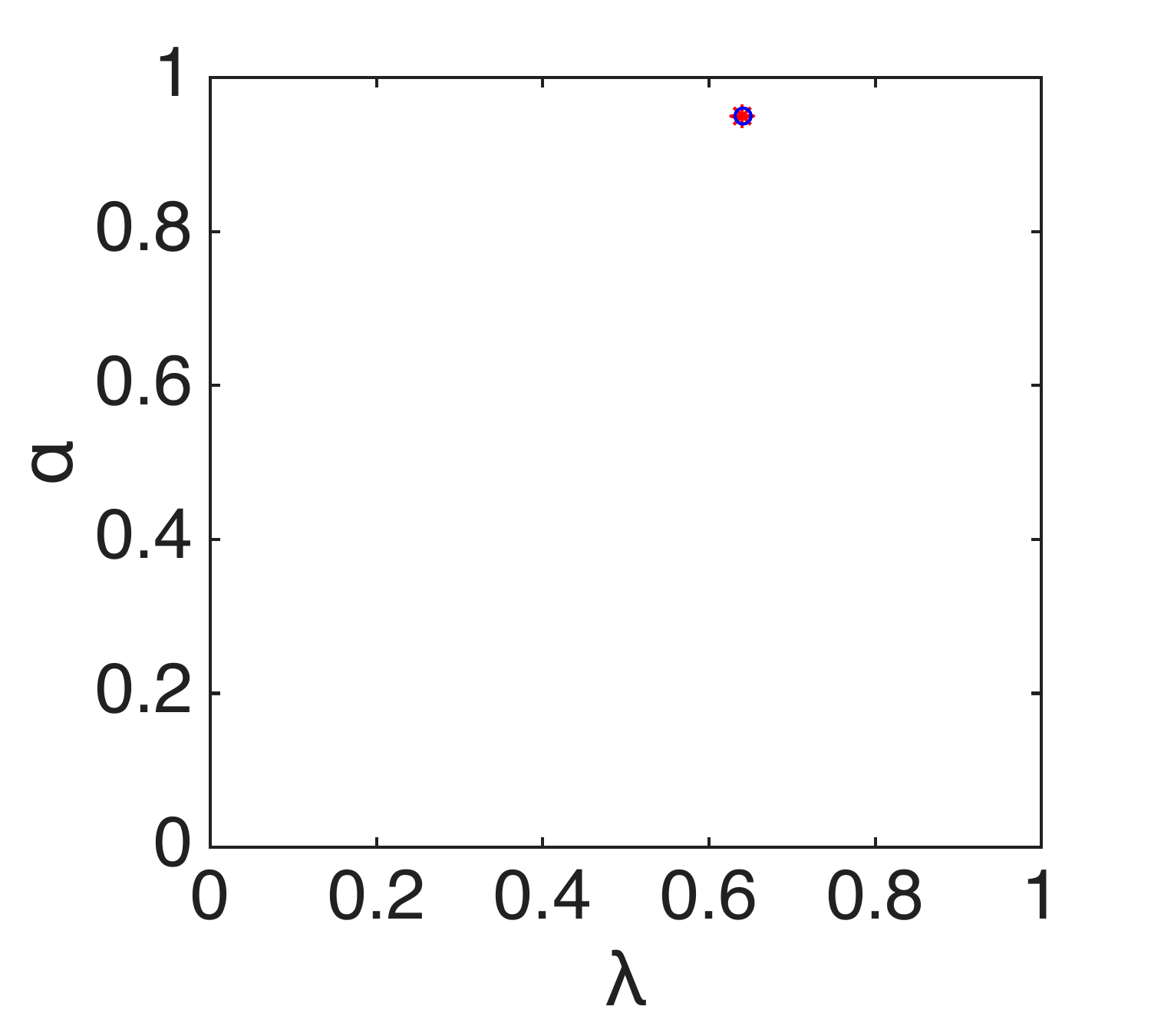} &
    \includegraphics[width=0.23\linewidth]{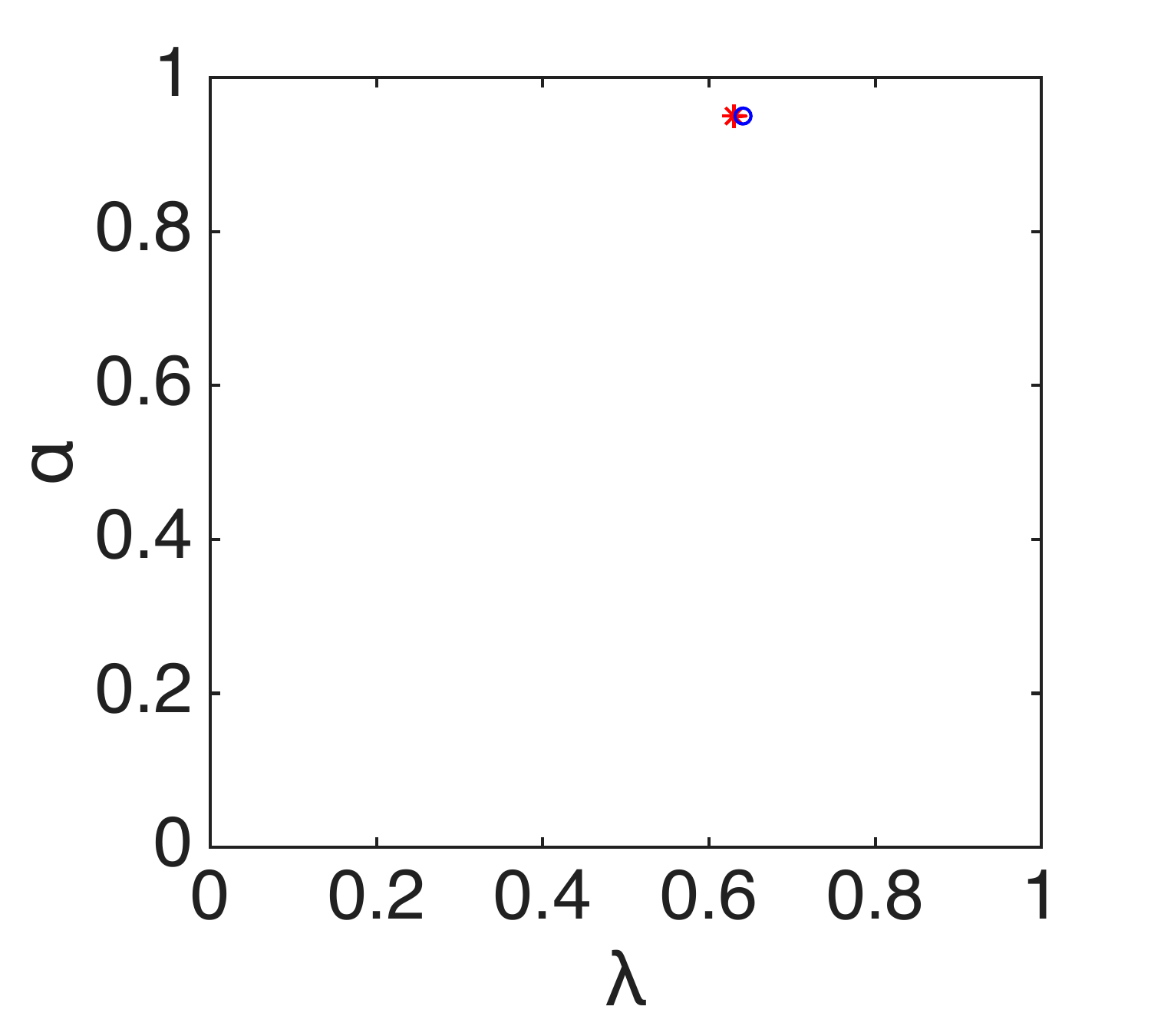}
\end{tabular}
\vspace{-4mm}
\caption{Monte Carlo: quasi-likelihood estimator precision for household financial expectations. Rows: anchoring hyperparameters: (i) $\alpha^* = 0.01$, (ii) $\alpha^* = 0.30$, (iii) $\alpha^* = 0.60$, and (iv) $\alpha^* = 0.95$ (matching estimated household data). All scenarios use the estimated centering parameter $\pi^* = \widehat{\mathbb{E}[h(Y_1)]} = (-1.76, -1.41, -1.78, -1.77, -2.73, -2.23, -3.53)'$ and discount parameter $\lambda^* = 0.64$ from Section~\ref{sec:household}. Display: quasi log-likelihood surface $\ell(\alpha, \lambda)$ with the maximum quasi-likelihood estimate (red star) and DGP value (blue circle). The red contour traces the 99\% confidence region. Columns: $T \in \{100, 1000, 5000, 10000\}$.}
\label{fig:household_simulation}
\end{figure}

\end{document}